\newtheorem{thm}{Theorem}[section]
\newtheorem{lem}[thm]{Lemma}
\newtheorem{pro}[thm]{Proposition}
\theoremstyle{definition}
\newtheorem{defn}{Definition}[section]
\begin{document}

\title{Finding Hamiltonian and Longest $(s, t)$-paths of $C$-shaped Supergrid Graphs in Linear Time\footnote{A preliminary version of this paper has appeared in: The International MultiConference of Engineers and Computer Scientists 2019 (IMECS 2019), Hong Kong, vol. I, 2019, pp. 87--93 \cite{Keshavarz19}.}
}

\author{\vspace{0.5cm}Ruo-Wei Hung$^\textrm{a}$ and Fatemeh Keshavarz-Kohjerdi$^{\mathrm{b},}$\thanks{Corresponding author.}\\
$^\textrm{a}$\textit{Department of Computer Science \& Information Engineering,}\\
\textit{Chaoyang University of Technology, Wufeng, Taichung 41349, Taiwan}\\
\textit{\vspace{0.2cm}e-mail address: rwhung@cyut.edu.tw}\\
$^\textrm{b}$\textit{Department of Mathematics \& Computer Science,}\\
\textit{Shahed University, Tehran, Iran}\\
\textit{e-mail address: f.keshavarz@shahed.ac.ir}}


\maketitle

\begin{abstract}
A supergrid graph is a finite vertex-induced subgraph of the infinite graph whose vertex set consists of all points of the plane with integer coordinates and in which two vertices are adjacent if the difference of their $x$ or $y$ coordinates is not larger than $1$. The Hamiltonian path (cycle) problem is to determine whether a graph contains a simple path (cycle) in which each vertex of the graph appears exactly once. This problem is NP-complete for general graphs and it is also NP-complete for general supergrid graphs. Despite the many applications of the problem, it is still open for many classes, including solid supergrid graphs and supergrid graphs with some holes. A graph is called Hamiltonian connected if it contains a Hamiltonian path between any two distinct vertices. In this paper, first we will study the Hamiltonian cycle property of $C$-shaped supergrid graphs, which are a special case of rectangular supergrid graphs with a rectangular hole. Next, we will show that $C$-shaped supergrid graphs are Hamiltonian connected except few conditions. Finally, we will compute a longest path between two distinct vertices in these graphs. The Hamiltonian connectivity of $C$-shaped supergrid graphs can be applied to compute the optimal stitching trace of computer embroidery machines, and construct the minimum printing trace of 3D printers with a $C$-like component being printed.

\vspace{0.2cm}\noindent\textbf{Keywords:}
Hamiltonicity, Hamiltonian connectivity, longest $(s, t)$-path, Supergrid graphs, $C$-shaped supergrid graphs, Computer embroidery machines, 3D printers
\end{abstract}

\section{Introduction}\label{Introduction}
A \textit{Hamiltonian path} (\textit{cycle}) in a graph is a simple path (cycle) in which each vertex of the graph appears exactly once. The \textit{Hamiltonian path (cycle) problem} involves deciding whether or not a graph contains a Hamiltonian path (cycle). A graph is called \textit{Hamiltonian} if it contains a Hamiltonian cycle. A graph $G$ is said to be \textit{Hamiltonian connected} if for each pair of distinct vertices $u$ and $v$ of $G$, there is a Hamiltonian path from $u$ to $v$ in $G$. The Hamiltonian path and cycle problems have numerous applications in different areas, including establishing transport routes, production launching, the on-line optimization of flexible manufacturing systems \cite{Ascheuer96}, computing the perceptual boundaries of dot patterns \cite{OCallaghan74}, pattern recognition \cite{Bermond78, PreperataS85, Toussaint80}, DNA physical mapping \cite{Grebinski98}, fault-tolerant routing for 3D network-on-chip architectures \cite{Ebrahimi13}, and so on. It is well known that the Hamiltonian path and cycle problems are NP-complete for general graphs \cite{GareyJ79, Johnson85}. The same holds true for bipartite graphs \cite{Krishnamoorthy76}, split graphs \cite{Golumbic80}, circle graphs \cite{Damaschke89}, undirected path graphs \cite{BertossiB86}, grid graphs \cite{Itai82}, triangular grid graphs \cite{Gordon08}, supergrid graphs \cite{Hung15}, etc.

In the literature, there are many studies for the Hamiltonian connectivity of interconnection networks, including WK-recursive network \cite{Fu08}, recursive dual-net \cite{Li09}, hypercomplete network \cite{Chen00}, alternating group graph \cite{Jwo93}, arrangement graph \cite{Lo01}. The popular hypercubes are Hamiltonian but are not Hamiltonian connected. However, many variants of hypercubes, including augmented hypercubes \cite{Hung12}, generalized base-$b$ hypercube \cite{Huang08}, hypercube-like networks \cite{Park04}, twisted cubes \cite{Huang02}, crossed cubes \cite{Huang00}, M\"{o}bius cubes \cite{Chen04}, folded hypercubes \cite{Hsieh07}, and enhanced hypercubes \cite{Liu11}, have been known to be Hamiltonian connected.

The \textit{longest path problem}, i.e. the problem of finding a simple path with the maximum number of vertices, is one of the most important problems in graph theory. The Hamiltonian path problem is clearly a special case of the longest path problem. Despite the many applications of the problem, it is still open for some classes of graphs, including solid supergrid graphs and supergrid graphs with some holes \cite{Hung16, Hung17a}. There are few classes of graphs in which the longest path problem is polynomial solvable \cite{Bulterman02, Ioannidou11, Keshavarz12b, Mertzios12, Uehara07}. In the area of approximation algorithms, it has been shown that the problem is not in APX, i.e. there is no polynomial-time constant factor approximation algorithm for the problem unless P=NP \cite{Gutin93}. Also,it has been shown that finding a path of length $n-n^\varepsilon$ is not possible in polynomial time unless P=NP \cite{Karger97}. That is, the longest path problem is a very difficult graph problem. In this paper, we focus on supergrid graphs. We will give the necessary and sufficient conditions for the Hamiltonian and Hamiltonian connected of $C$-shaped supergrid graphs. We also present a linear-time algorithm for finding a longest path between any two distinct vertices in a $C$-shaped supergrid graph.

The \emph{two-dimensional integer grid graph} $G^\infty$ is an infinite graph whose vertex set consists of all points of the Euclidean plane with integer coordinates and in which two vertices are adjacent if the (Euclidean) distance between them is equal to 1. The \emph{two-dimensional triangular grid graph} $T^\infty$ is an infinite graph obtained from $G^\infty$ by adding all edges on the lines traced from up-left to down-right. A \textit{grid graph} is a finite vertex-induced subgraph of $G^\infty$ (see Fig. \ref{Fig_ExampleOfGridRelated}(a)). A \emph{triangular grid graph} is a finite vertex-induced subgraph of $T^\infty$ (see Fig. \ref{Fig_ExampleOfGridRelated}(b)). Hung \textit{et al.} \cite{Hung15} have introduced a new class of graphs, namely \textit{supergrid graphs}. The \emph{two-dimensional supergrid graph} $S^\infty$ is an infinite graph obtained from $T^\infty$ by adding all edges on the lines traced from up-right to down-left. A \emph{supergrid graph} is a finite vertex-induced subgraph of $S^\infty$ (see Fig. \ref{Fig_ExampleOfGridRelated}(c)). A solid supergrid graph is a supergrid graph without holes. A \textit{rectangular supergrid graph} is a supergrid graph bounded by a axis-parallel rectangle (see \ref{Fig_ExampleOfRLC}(a)). A \textit{$L$-shaped} or \textit{$C$-shaped} supergrid graph is a supergrid graph obtained from a rectangular supergrid graph by removing a rectangular supergrid graph from it to make a $L$-like or $C$-like shape (see \ref{Fig_ExampleOfRLC}(b) and \ref{Fig_ExampleOfRLC}(c)). The Hamiltonian connectivity and longest $(s, t)$-path of shaped supergrid graphs can be applied in computing the optimal stitching trace of computer embroidery machines \cite{Hung15, Hung17a, Hung17c}. 

\begin{figure}[!t]
\begin{center}
\includegraphics[width=0.8\textwidth]{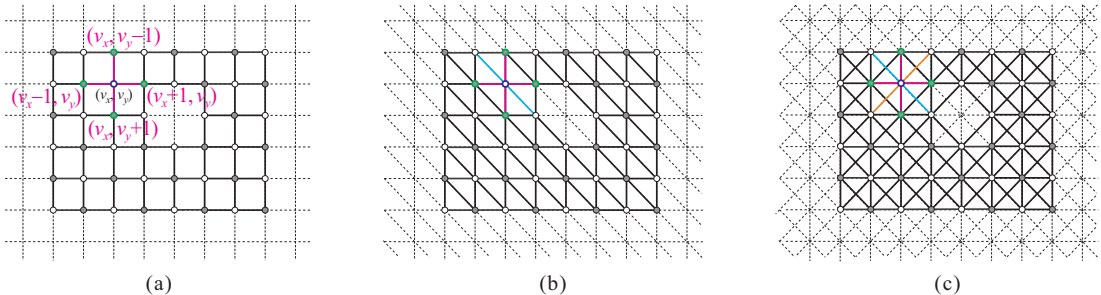}
\caption{(a) A grid graph, (b) a triangular grid graph, and (c) a supergrid graph, where circles represent the vertices and solid lines indicate the edges in the graphs.} \label{Fig_ExampleOfGridRelated}
\end{center}
\end{figure}

\begin{figure}[!t]
\begin{center}
\includegraphics[width=0.7\textwidth]{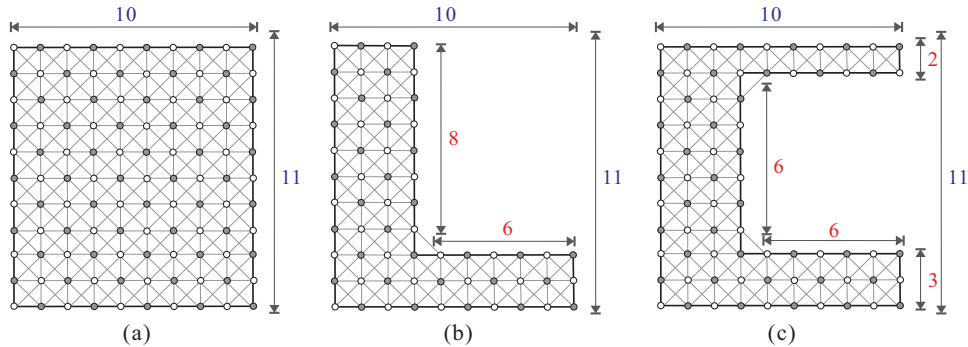}
\caption{(a) A rectangular supergrid graph, (b) a $L$-shaped supergrid graph, and (c) a $C$-shaped supergrid graph, where circles represent the vertices and solid lines indicate the edges in the graphs.} \label{Fig_ExampleOfRLC}
\end{center}
\end{figure}

Previous related works are summarized as follows. Recently, Hamiltonian path (cycle) and Hamiltonian connected problems in grid, triangular grid, and supergrid graphs have received much attention. In \cite{Itai82}, Itai \textit{et al.} proved that the Hamiltonian path problem on grid graphs is NP-complete. They also gave necessary and sufficient conditions for a rectangular grid graph having a Hamiltonian path between two given vertices. Note that rectangular grid graphs are not Hamiltonian connected. Zamfirescu \textit{et al.} \cite{Zamfirescu92} gave sufficient conditions for a grid graph having a Hamiltonian cycle, and proved that all grid graphs of positive width have Hamiltonian line graphs. Later, Chen \textit{et al.} \cite{Chen02} improved the Hamiltonian path algorithm of \cite{Itai82} on rectangular grid graphs and presented a parallel algorithm for the Hamiltonian path problem with two given endpoints in rectangular grid graphs. Also there is a polynomial-time algorithm for finding Hamiltonian cycles in solid grid graphs \cite{Lenhart97}. In \cite{Salman05}, Salman introduced alphabet grid graphs and determined classes of alphabet grid graphs which contain Hamiltonian cycles. Keshavarz-Kohjerdi and Bagheri gave necessary and sufficient conditions for the existence of Hamiltonian paths in alphabet grid graphs, and presented linear-time algorithms for finding Hamiltonian paths with two given endpoints in these graphs \cite{Keshavarz12a}. They also presented a linear-time algorithm for computing the longest path between two given vertices in rectangular grid graphs \cite{Keshavarz12b}, gave a parallel algorithm to solve the longest path problem in rectangular grid graphs \cite{Keshavarz13}, and solved the Hamiltonian path and longest path problems in some classes of grid graphs \cite{Keshavarz16, Keshavarz18a, Keshavarz18b, Keshavarz18c}. Reay and Zamfirescu \cite{Reay00} proved that all 2-connected, linear-convex triangular grid graphs except one special case contain Hamiltonian cycles. The Hamiltonian cycle (path) on triangular grid graphs has been shown to be NP-complete \cite{Gordon08}. They also proved that all connected, locally connected triangular grid graphs (with one exception) contain Hamiltonian cycles.

Recently, Hung \textit{et al.} \cite{Hung15} proved that the Hamiltonian cycle and path problems on supergrid graphs are NP-complete. They also showed that every rectangular supergrid graph always contains a Hamiltonian cycle, proved linear-convex supergrid graphs to be Hamiltonian \cite{Hung16}. Very recently, they verified the Hamiltonian connectivity of rectangular, shaped, alphabet, and $L$-shaped supergrid graphs \cite{Hung17a, Hung17b, Hung17c, Hung18}. They also proposed a linear-time algorithm for the Hamiltonian connected problem on alphabet supergrid graphs \cite{Hung17c}. The Hamiltonian connectivity of $L$-shaped supergrid graphs has been verified in \cite{Hung18, Keshavarz19a}. The $L$-alphabet and $C$-alphabet supergrid graphs in \cite{Hung17c} are special cases of $L$-shaped and $C$-shaped supergrid graphs, respectively. Note that $C$-shaped supergrid graphs contain $L$-shaped supergrid graphs as their subgraphs.

In this paper, we consider the Hamiltonian, the Hamiltonian connectivity, and the longest path of $C$-shaped supergrid graphs, which are special case of rectangular supergrid graph with a rectangular hole. This can be considered as the first attempts to solve the problem in supergrid graphs with some holes.

The rest of the paper is organized as follows. In Section \ref{Sec_Preliminaries}, some notations and observations are given. Previous results are also introduced. In addition, one  special Hamiltonian connected property of $R(m, 3)$, which is a special rectangular supergrid graph, that is used in proving our result will be discovered in this section. In Section \ref{Sec_forbidden-conditions}, we give the necessary and sufficient conditions for the Hamiltonian and Hamiltonian connected of $C$-shaped supergrid graphs. This section shows that $C$-shaped supergrid graphs are always Hamiltonian and Hamiltonian connected except few conditions. In Section \ref{Sec_Algorithm}, we present a linear-time algorithm to compute a longest path between any two distinct vertices in a $C$-shaped supergrid graph. Finally, we make some concluding remarks in Section \ref{Sec_Conclusion}.

\section{Terminologies and background results}\label{Sec_Preliminaries}
In this section, we will introduce some terminologies and symbols. Some observations and previously established results for the Hamiltonicity and Hamiltonian connectivity of rectangular and $L$-shaped supergrid graphs are also presented. In addition, we also prove some Hamiltonian connected property of $R(m, 3)$, which is a special rectangular supergrid graph, that will be used in proving our result. For graph-theoretic terminology not defined in this paper, the reader is referred to \cite{Bondy76}.

The \emph{two-dimensional integer grid graph} $G^\infty$ is an infinite graph whose vertex set consists of all points of the Euclidean plane with integer coordinates and in which two vertices are adjacent if the (Euclidean) distance between them is equal to $1$. A \textit{grid graph} is a finite vertex-induced subgraph of $G^\infty$. For a node $v$ in the plane with integer coordinates, let $v_x$ and $v_y$ represent the $x$ and $y$ \textit{coordinates} of node $v$, respectively, denoted by $v=(v_x, v_y)$. If $v$ is a vertex in a grid graph, then its possible adjacent vertices include $(v_x, v_y-1)$, $(v_x-1, v_y)$, $(v_x+1, v_y)$, and $(v_x, v_y+1)$ (see Fig. \ref{Fig_ExampleOfGridRelated}(a)). The \emph{two-dimensional triangular grid graph} $T^\infty$ is an infinite graph obtained from $G^\infty$ by adding all edges on the lines traced from up-left to down-right. A \textit{triangular grid graph} is a finite vertex-induced subgraph of $T^\infty$. If $v$ is a vertex in a triangular grid graph, then its possible neighboring vertices include $(v_x, v_y-1)$, $(v_x-1, v_y)$, $(v_x+1, v_y)$, $(v_x, v_y+1)$, $(v_x-1, v_y-1)$, and $(v_x+1, v_y+1)$ (see Fig. \ref{Fig_ExampleOfGridRelated}(b)). Thus, triangular grid graphs contain grid graphs as subgraphs. The triangular grid graphs defined above are isomorphic to the original triangular grid graphs in \cite{Gordon08} but these graphs are different when considered as geometric graphs.

The \emph{two-dimensional supergrid graph} $S^\infty$ is the infinite graph whose vertex set consists of all points of the plane with integer coordinates and in which two vertices are adjacent if the difference of their $x$ or $y$ coordinates is not larger than $1$. A \textit{supergrid graph} is a finite vertex-induced subgraph of $S^\infty$. The possible adjacent vertices of a vertex $v=(v_x, v_y)$ in a supergrid graph hence include $(v_x, v_y-1)$, $(v_x-1, v_y)$, $(v_x+1, v_y)$, $(v_x, v_y+1)$, $(v_x-1, v_y-1)$, $(v_x+1, v_y+1)$, $(v_x+1, v_y-1)$, and $(v_x-1, v_y+1)$ (see Fig. \ref{Fig_ExampleOfGridRelated}(c)). Thus, supergrid graphs contain grid graphs and triangular grid graphs as subgraphs. Notice that grid and triangular grid graphs are not subclasses of supergrid graphs, and the converse is also true: these classes of graphs have common elements (points) but in general they are distinct since the edge sets of these graphs are different. It is clear that, all grid graphs are bipartite \cite{Itai82} but triangular grid graphs and supergrid graphs are not bipartite. For a vertex $v=(v_x, v_y)$ in a supergrid graph, we color vertex $v$ to be \textit{white} if $v_x+v_y\equiv 0$ (mod 2); otherwise, $v$ is colored to be \textit{black}. Then there are eight possible neighbors of vertex $v$ including four white vertices and four black vertices.

A \textit{rectangular supergrid graph}, denoted by $R(m,n)$, is a supergrid graph whose vertex set is $V(R(m, n))=\{v =(v_x, v_y) | 1\leqslant v_x\leqslant m$ and $1\leqslant v_y\leqslant n\}$. That is, $R(m, n)$ contains $m$ columns and $n$ rows of vertices in $S^\infty$. The size of $R(m, n)$ is defined to be $mn$, and $R(m, n)$ is called $n$-rectangle. $R(m, n)$ is called \textit{even-sized} if $mn$ is even, and it is called \textit{odd-sized} otherwise. Let $v=(v_x, v_y)$ be a vertex in $R(m, n)$. The vertex $v$ is called the \textit{upper-left} (resp., \textit{upper-right}, \textit{down-left}, \textit{down-right}) \textit{corner} of $R(m, n)$ if for any vertex $w=(w_x, w_y)\in R(m, n)$, $w_x\geqslant v_x$ and $w_y\geqslant v_y$ (resp., $w_x\leqslant v_x$ and $w_y\geqslant v_y$, $w_x\geqslant v_x$ and $w_y\leqslant v_y$, $w_x\leqslant v_x$ and $w_y\leqslant v_y$). The edge $(u, v)$ is said to be \textit{horizontal} (resp., \textit{vertical}) if $u_y=v_y$ (resp., $u_x=v_x$), and is called \textit{crossed} if it is neither a horizontal nor a vertical edge. There are four boundaries in a rectangular supergrid graph $R(m, n)$ with $m, n\geqslant 2$. The edge in the boundary of $R(m, n)$ is called \textit{boundary edge}. A path is called \textit{boundary} of $R(m, n)$ if it visits all vertices and edges of the same boundary in $R(m, n)$ and its length equals to the number of vertices in the visited boundary. For example, Fig. \ref{Fig_Rectangular} shows a rectangular supergrid graph $R(10, 8)$ which is called 8-rectangle and contains $2\times(9+7)=32$ boundary edges. Fig. \ref{Fig_Rectangular} also indicates the types of edges and corners. In the figures we will assume that $(1, 1)$ are coordinates of the upper-left corner in a rectangular supergrid graph $R(m, n)$, except we explicitly change this assumption.

\begin{figure}[!t]
\begin{center}
\includegraphics[scale=0.9]{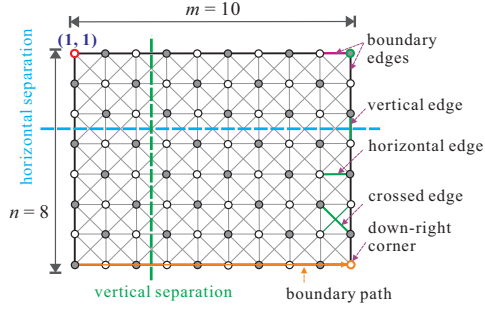}
\caption{A rectangular supergrid graph $R(m, n)$, where $m=10$, $n=8$, and the bold dashed lines indicate vertical and horizontal separations.}\label{Fig_Rectangular}
\end{center}
\end{figure}

A \textit{$L$-shaped supergrid graph}, denoted by $L(m, n; k, l)$, is a supergrid graph obtained from a rectangular supergrid graph $R(m, n)$ by removing its subgraph $R(k, l)$ from the upper-right corner, where $m, n > 1$ and $k, l\geqslant 1$. Then, $m-k\geqslant 1$ and $n-l\geqslant 1$. A \textit{$C$-shaped supergrid graph} $C(m, n; k, l; c, d)$ is a supergrid graph obtained from a rectangular supergrid graph $R(m, n)$ by removing its subgraph $R(k, l)$ from its node coordinated as $(m, c+1)$ while $R(m, n)$ and $R(k, l)$ have exactly one border side in common, where $m\geqslant 2$, $n\geqslant 3$, $k, l\geqslant 1$, $c\geqslant 1$, $d=n-l-c\geqslant 1$, and $a=m-k\geqslant 1$. The structures of $L(m, n; k, l)$ and $C(m, n; k, l; c, d)$ are explained in Fig. \ref{Fig_LC-shaped}(a) and Fig. \ref{Fig_LC-shaped}(b), respectively.

\begin{figure}[!t]
\begin{center}
\includegraphics[width=0.45\textwidth]{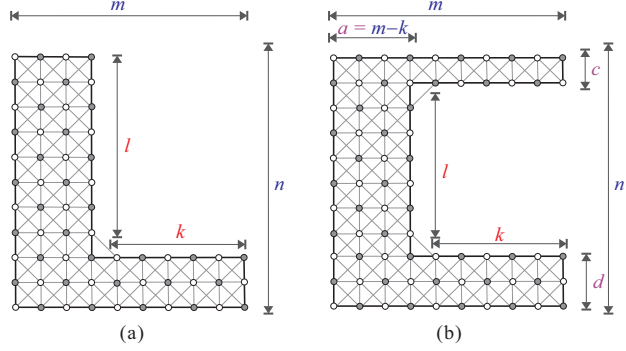}
\caption{The structure of (a) $L$-shaped supergrid graph $L(m, n; k, l)$, where $k=6$, $l=8$, $m-k=4$, and $n-l=3$ and (b) $C$-shaped supergrid graph $C(m, n; k, l; c, d)$, where $k=l=6$, $c=2$, $d=n-l-c=3$, and $a=m-k=4$.}\label{Fig_LC-shaped}
\end{center}
\end{figure}

Let $G = (V, E)$ be a supergrid graph with vertex set $V(G)$ and edge set $E(G)$. Let $S$ be a subset of vertices in $G$, and let $u$ and $v$ be two vertices in $G$. We write $G[S]$ for the subgraph of $G$ \textit{induced} by $S$, $G-S$ for the subgraph $G[V-S]$, i.e., the subgraph induced by $V-S$. In general, we write $G-v$ instead of $G-\{v\}$.
We say that $u$ is \textit{adjacent} to $v$, and $u$ and $v$ are \textit{incident} to edge $(u, v)$, if $(u, v)\in E(G)$. The notation $u\thicksim v$ (resp., $u \nsim v$) means that vertices $u$ and $v$ are adjacent (resp., non-adjacent). A vertex $w$ \textit{adjoins} edge $(u, v)$ if $w\thicksim u$ and $w\thicksim v$. For two edges $e_1=(u_1, v_1)$ and $e_2=(u_2, v_2)$, if $u_1\thicksim u_2$ and $v_1\thicksim v_2$, then we say that $e_1$ and $e_2$ are \textit{parallel}, denoted by $e_1\thickapprox e_2$. For any $v\in V(G)$, a \textit{neighbor} of $v$ is any vertex that is adjacent to $v$. Let $N_G(v)$ be the set of neighbors of $v$ in $G$, and let $N_G[v]=N_G(v)\cup\{v\}$. The \textit{degree} of vertex $v$ in $G$, denoted by $deg(v)$, is the number of vertices adjacent to $v$. A path $P$ of length $|P|$ in $G$, denoted by $v_1\rightarrow v_2\rightarrow \cdots \rightarrow v_{|P|-1} \rightarrow v_{|P|}$, is a sequence $(v_1, v_2, \cdots, v_{|P|-1}, v_{|P|})$ of vertices such that $(v_i,v_{i+1})\in E(G)$ for $1 \leqslant i < |P|$, and all vertices except $v_1, v_{|P|}$ in it are distinct. The first and last vertices visited by $P$ are denoted by $start(P)$ and $end(P)$, respectively. We will use $v_i \in P$ to denote ``$P$ visits vertex $v_i$" and use $(v_i, v_{i+1}) \in P$ to denote ``$P$ visits edge $(v_i, v_{i+1})$". A path from $v_1$ to $v_k$ is denoted by $(v_1, v_k)$-path. In addition, we use $P$ to refer to the set of vertices visited by path $P$ if it is understood without ambiguity. A cycle is a path $C$ with $|V(C)| \geqslant 4$ and $start(C) = end(C)$. Two paths (or cycles) $P_1$ and $P_2$ of graph $G$ are called \textit{vertex-disjoint} if $V(P_1)\cap V(P_2) = \emptyset$. If $end(P_1)\thicksim start(P_2)$, then two vertex-disjoint paths $P_1$ and $P_2$ can be concatenated into a path, denoted by $P_1 \Rightarrow P_2$.

In proving our results, we need to partition a rectangular or $C$-shaped supergrid graph into $\kappa$ disjoint parts, where $\kappa\geqslant 2$. The partition is defined as follows.

\begin{defn}
Let $S$ be a $C$-shaped supergrid graph $C(m, n; k, l; c, d)$ or a rectangular supergrid graph $R(m, n)$. A \textit{separation operation} on $S$ is a partition of $S$ into $\kappa$ vertex-disjoint rectangular supergrid subgraphs $S_1$, $S_2$, $\cdots$, $S_\kappa$, i.e., $V(S)=V(S_1)\cup V(S_2)\cup \cdots \cup V(S_\kappa)$ and $V(S_i)\cap V(S_j) =\emptyset$ for $i\neq j$ and $1\leqslant i, j\leqslant \kappa$, where $\kappa\geqslant 2$. A separation is called \textit{vertical} if it consists of a set of horizontal edges, and is called \textit{horizontal} if it contains a set of vertical edges. For an example, the bold dashed vertical (resp., horizontal) line in Fig. \ref{Fig_Rectangular} indicates a vertical (resp., horizontal) separation of $R(10, 8)$ which partitions it into $R(3, 8)$ and $R(7, 8)$ (resp., $R(10, 3)$ and $R(10, 5)$).
\end{defn}

Let $R(m, n)$ be a rectangular supergrid graph with $m\geqslant n\geqslant 2$, $\mathcal{C}$ be a cycle of $R(m, n)$, and let $H$ be a boundary of $R(m, n)$, where $H$ is a subgraph of $R(m, n)$. The restriction of $\mathcal{C}$ to $H$ is denoted by $\mathcal{C}_{| H}$. If $|\mathcal{C}_{| H}|=1$, i.e. $\mathcal{C}_{| H}$ is a boundary path on $H$, then $\mathcal{C}_{| H}$ is called \textit{flat face} on $H$. If $|\mathcal{C}_{| H}|>1$ and $\mathcal{C}_{| H}$ contains at least one boundary edge of $H$, then $\mathcal{C}_{| H}$ is called \textit{concave face} on $H$. A Hamiltonian cycle of $R(m, 3)$ is called \textit{canonical} if it contains three flat faces on two shorter boundaries and one longer boundary, and it contains one concave face on the other boundary, where the shorter boundary consists of three vertices. And, a Hamiltonian cycle of $R(m, n)$ with $n=2$ or $n\geqslant 4$ is said to be \textit{canonical} if it contains three flat faces on three boundaries, and it contains one concave face on the other boundary. The following lemma states the result in \cite{Hung15} concerning the Hamiltonicity of rectangular supergrid graphs.

\begin{lem}\label{HC-rectangular_supergrid_graphs}
(See \cite{Hung15}) Let $R(m, n)$ be a rectangular supergrid graph with $m\geqslant n\geqslant 2$. Then, the following statements hold true:\\
$(1)$ if $n=3$, then $R(m, 3)$ contains a canonical Hamiltonian cycle;\\
$(2)$ if $n=2$ or $n\geqslant 4$, then $R(m, n)$ contains four canonical Hamiltonian cycles with concave faces being on different boundaries.
\end{lem}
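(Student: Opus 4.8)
The plan is to prove both parts by induction, using the separation operation together with a cycle-merging argument driven by parallel boundary edges. The engine is the following merging principle. Suppose a rectangular supergrid graph is split by a vertical separation into $R_1$ (left) and $R_2$ (right), and $C_1$, $C_2$ are Hamiltonian cycles of $R_1$, $R_2$ that each present a \emph{flat face} on the shared boundary. Since a flat face on a boundary $H$ covers every edge of $H$, the cycle $C_1$ contains a vertical boundary edge $e_1$ on the right side of $R_1$ and $C_2$ contains a vertical boundary edge $e_2$ on the left side of $R_2$ at the same height, so that $e_1 \approx e_2$; deleting $e_1$ and $e_2$ and inserting the two separation edges that join their endpoints across the cut fuses $C_1$ and $C_2$ into a single Hamiltonian cycle of the union. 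The analogous statement holds for a horizontal separation. Hence a flat face on a shared boundary is exactly the resource that makes merging possible, and the faces of $C_1$, $C_2$ on the outer boundaries are inherited unchanged by the merged cycle.

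First I would settle the base cases by explicit construction. For $R(m,2)$ a zigzag cycle that alternately dips between the two rows yields a canonical cycle whose concave face can be placed on either long side, and reflecting or rotating the pattern supplies the remaining orientations; since supergrid graphs are not bipartite, there is no parity obstruction to closing these cycles. For $R(m,3)$ I would exhibit one explicit canonical cycle that runs flat along both three-vertex sides and one long side while making a single excursion into the interior from the opposite long side, producing the concave face there. These drawings double as the induction anchors.

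For the inductive step I would work on $m+n$, using the symmetry that rotating a rectangle by $90^\circ$ preserves canonicity, and produce each required cycle separately. To place the concave face on a prescribed outer boundary $B$, separate $R(m,n)$ by a line \emph{parallel} to $B$, peeling a thin strip off the side opposite $B$. The remaining rectangle $R'$ still has $B$ as a full boundary; by induction it carries a canonical cycle whose concave face lies on $B$ and which is therefore flat on the freshly cut side, while the peeled strip (a base-case rectangle) admits a canonical cycle that is also flat on that shared side. Merging across the shared boundary, which is legitimate because both cycles present flat faces there, yields a canonical cycle of $R(m,n)$ with its concave face still on $B$ and flat faces on the other three sides. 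Ranging $B$ over the four boundaries gives the four cycles of part~(2). Part~(1) is the degenerate analogue: when $n=3$ the two short sides hold only three vertices, so a concave excursion from a short side cannot be completed into a Hamiltonian cycle within three rows, and only the long-side concave form survives, yielding the single canonical cycle claimed.

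The main obstacle I anticipate is the bookkeeping needed to guarantee all four orientations simultaneously: one must verify that the base cases already supply canonical cycles with the concave face on each admissible side, and that every merge both consumes a flat face on the shared boundary and leaves the designated concave face on $B$ intact, never accidentally flattening it or introducing a second concavity. A related subtlety is controlling the strip thickness so that the remaining piece $R'$ either lands on a base case or stays in the part-(2) regime, rather than slipping into the special $n'=3$ case where a concave face on a short side is unavailable. Finally, establishing the impossibility built into part~(1), namely that three rows genuinely forbid a canonical cycle with a concave short-side face, rather than merely constructing the one cycle that does exist, is the delicate point of the argument.
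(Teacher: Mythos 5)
Your key ``merging principle'' is stated too strongly, and the error sits exactly where your induction needs it. When you fuse $C_1$ and $C_2$ across a cut by deleting two parallel edges $e_1\thickapprox e_2$ on the shared boundary and inserting the two crossing edges (Statement (1) of Proposition \ref{Pro_Obs}), the faces \emph{parallel} to the cut are inherited, but the faces \emph{perpendicular} to the cut are not: each perpendicular boundary of the union consists of the corresponding boundaries of $R_1$ and $R_2$ \emph{plus one new connector edge} spanning the cut. The merged cycle crosses the cut only through the two inserted edges, which sit at adjacent positions, so it can contain at most one of the two connector edges; the perpendicular boundary at the other end is then missing its connector, splits into at least two components, and becomes concave. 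Hence your inductive step --- peel a strip \emph{parallel} to the target boundary $B$ off the opposite side and merge --- always yields a cycle with a second concave face on one of the two sides perpendicular to the cut, which is not canonical (three flat faces are required), and the induction collapses. The repair is to cut \emph{perpendicular} to $B$: peel a two-column (or two-row) strip, whose boundary cycle is flat on all four sides, off a side adjacent to $B$, and perform the merge at the end farthest from $B$. Then the unique missing connector edge lies on $B$ itself, which is precisely where concavity is wanted, and the other three sides stay flat. This also explains the case split in the statement: placing the concave face on a short side of $R(m,n)$ requires peeling a strip of thickness $2$ parallel to the long sides, which needs $n\geqslant 4$. (The paper itself does not prove this lemma but quotes it from \cite{Hung15}, where the construction is of this perpendicular/comb type.)

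There is also a genuine problem with your base case $R(m,2)$. No zigzag does what you claim: a cycle that ``dips between the rows'' must use crossed edges, and its restriction to a long boundary then contains no boundary edge at all, so it is neither flat nor concave there. In fact a short forcing argument shows that any Hamiltonian cycle of $R(m,2)$ that is flat on three boundaries is forced to contain every boundary edge, i.e.\ it is the boundary cycle with four flat faces; so for $n=2$ no cycle with three flat faces and a genuine concave face exists, and the object the lemma really supplies (and the only thing later merging arguments use) is the boundary cycle. Your anchor for the induction should therefore be stated in terms of flat faces, not a fictitious concave one. Finally, your worry about ``establishing the impossibility built into part (1)'' is misplaced: part (1) is purely existential, and the restriction that the concave face of $R(m,3)$ lies on a long boundary is built into the \emph{definition} of canonical for $n=3$, not into the claim, so only the construction is required.
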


\begin{figure}[!t]
\begin{center}
\includegraphics[width=0.95\textwidth]{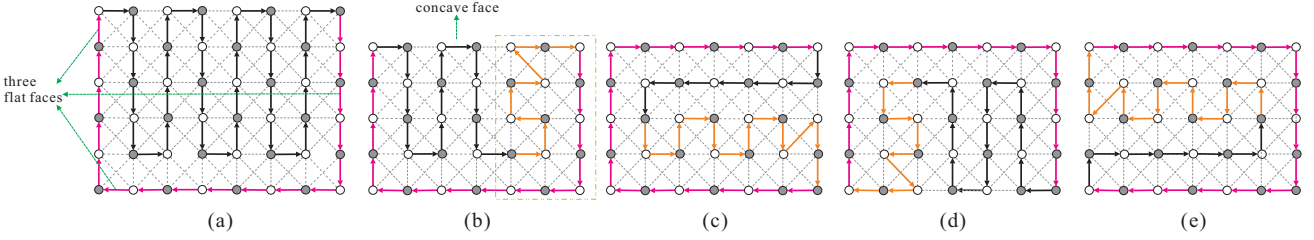}
\caption{A canonical Hamiltonian cycle containing three flat faces and one concave face for (a) $R(8, 6)$ and (b)--(e) $R(7, 5)$, where solid arrow lines indicate the edges in the cycles and $R(7, 5)$ contains four distinct canonical Hamiltonian cycles in (b)--(e) such that their concave faces are placed on different boundaries.} \label{Fig_HC-Rectangular}
\end{center}
\end{figure}

Fig. \ref{Fig_HC-Rectangular} shows canonical Hamiltonian cycles for even-sized and odd-sized rectangular supergrid graphs found in Lemma \ref{HC-rectangular_supergrid_graphs}. Each Hamiltonian cycle found by this lemma contains all the boundary edges on any three sides of the rectangular supergrid graph. This shows that for any rectangular supergrid graph  $R(m, n)$ with $m\geqslant n\geqslant 4$, we can always construct four canonical Hamiltonian cycles such that their concave faces are placed on different boundaries. For instance, the four distinct canonical Hamiltonian cycles of $R(7, 5)$ are shown in Fig. \ref{Fig_HC-Rectangular}(b)--(e), where the concave faces of these four canonical Hamiltonian cycles are located on different boundaries.

Let $(G, s, t)$ denote the supergrid graph $G$ with two specified distinct vertices $s$ and $t$. Without loss of generality, we will assume that $s_x \leqslant t_x$ in the rest of the paper. We denote a Hamiltonian path between $s$ and $t$ in $G$ by $HP(G, s, t)$. We say that $HP(G, s, t)$ does exist if there is a Hamiltonian $(s, t)$-path in $G$. From Lemma \ref{HC-rectangular_supergrid_graphs}, we know that $HP(R(m, n), s, t)$ does exist if $m, n\geqslant 2$ and $(s, t)$ is an edge in the constructed Hamiltonian cycle of $R(m, n)$.

\begin{defn}
Assume that $G$ is a connected supergrid graph and $V_1$ is a subset of the vertex set $V(G)$. $V_1$ is a \textit{vertex cut} if $G-V_1$ is disconnected. A vertex $v\in V(G)$ is a \textit{cut vertex}, if $G-\{v\}$ is disconnected. For an example, in Fig. \ref{Fig_ForbiddenConditionF1}(b) $\{s, t\}$ is a vertex cut, and in Fig. \ref{Fig_ForbiddenConditionF1}(a) $t$ is a cut vertex.
\end{defn}

\begin{figure}[!t]
\begin{center}
\includegraphics[scale=0.9]{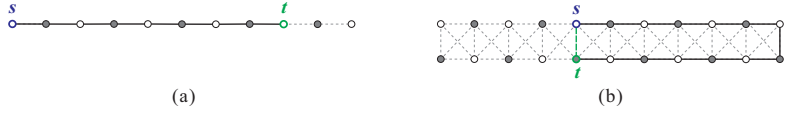}
\caption{Rectangular supergrid graphs in which there is no Hamiltonian $(s, t)$-path for (a) $R(m, 1)$, and (b) $R(m, 2)$, where solid lines indicate the longest path between $s$ and $t$.} \label{Fig_ForbiddenConditionF1}
\end{center}
\end{figure}

In \cite{Hung17a}, the authors showed that $HP(R(m,n), s, t)$ does not exist if the following condition hold:

\begin{verse}
(F1) $s$ or $t$ is a cut vertex, or $\{s,t\}$ is a vertex cut (see Fig. \ref{Fig_ForbiddenConditionF1}(a) and Fig. \ref{Fig_ForbiddenConditionF1}(b)).
\end{verse}

Let $G$ be any supergrid graphs. The following lemma showing that $HP(G, s, t)$ does not exist if $(G, s, t)$ satisfies condition (F1) can be verified by the arguments in \cite{Keshavarz16}.

\begin{lem}\label{Lemma:1b}(See \cite{Keshavarz16})
Let $G$ be a supergrid graph with two vertices $s$ and $t$. If $(G, s, t)$ satisfies condition $\mathrm{(F1)}$, then $HP(G, s, t)$ does not exist.
\end{lem}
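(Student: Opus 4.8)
The goal is to prove Lemma \ref{Lemma:1b}: if $(G, s, t)$ satisfies condition (F1), then no Hamiltonian $(s, t)$-path exists. The plan is to argue directly from the definition of a vertex cut and the structural requirement that a Hamiltonian $(s,t)$-path places on the vertices it uses as internal traversal points. I would split the argument into the two sub-cases of (F1), namely the cut-vertex case and the vertex-cut case, and derive a contradiction in each by a counting/connectivity argument on the components of $G$ with the relevant vertices removed.

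First I would handle the case where $\{s,t\}$ is a vertex cut, so that $G - \{s,t\}$ is disconnected into components $D_1, D_2, \ldots, D_r$ with $r \geqslant 2$. Suppose toward a contradiction that a Hamiltonian $(s,t)$-path $P$ exists. Since $P$ visits every vertex exactly once and starts at $s$ and ends at $t$, deleting $s$ and $t$ from $P$ breaks it into at most two subpaths (the portion strictly between $s$ and $t$ is a single path, and removing its two endpoints' incident structure leaves at most the interior). The key observation is that the interior of $P$, i.e. $P - \{s,t\}$, is a single path covering all vertices of $G - \{s,t\}$; a single path is connected, yet it must span all of $D_1, \ldots, D_r$, which are mutually non-adjacent in $G$. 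Because no edge of $G$ joins a vertex of $D_i$ to a vertex of $D_j$ for $i \neq j$, the path $P - \{s,t\}$ cannot cross from one component to another without passing through $s$ or $t$, contradicting the fact that it has no occurrence of $s$ or $t$. Hence $P$ cannot cover all vertices, a contradiction.

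For the cut-vertex case, say $t$ (the argument for $s$ is symmetric) is a cut vertex, so $G - \{t\}$ splits into components $D_1, \ldots, D_r$ with $r \geqslant 2$. Here I would use the endpoint structure more carefully: a Hamiltonian $(s,t)$-path $P$ ends at $t$, so $t$ has exactly one neighbor on $P$, namely the vertex immediately preceding it. Removing $t$ from $P$ yields the single path $(s, \ldots, u)$ where $u$ is that predecessor, and this path is connected and must cover all of $V(G) \setminus \{t\} = D_1 \cup \cdots \cup D_r$. As before, since the $D_i$ are pairwise non-adjacent in $G$, a single connected path in $G - \{t\}$ cannot meet two distinct components, giving the contradiction.

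The two arguments share the same engine, so I would prefer to state it once as a general principle: if $V_1 \subseteq \{s,t\}$ is a vertex cut and $P$ is a Hamiltonian $(s,t)$-path, then $P - V_1$ is a disjoint union of at most $|V_1|$ subpaths, each of which lies entirely within one component of $G - V_1$; counting the components against the number of available subpath pieces yields the contradiction whenever $G - V_1$ has more components than pieces. The main obstacle, and the only point requiring care, is the bookkeeping of exactly how many subpaths survive when $s$ and/or $t$ are removed: removing an endpoint of $P$ (such as $t$) deletes no interior edge and so splits $P$ into one piece, whereas removing an interior vertex splits it into two, so I must verify that in every instance of (F1) the number of resulting pieces is strictly smaller than the number of components of $G - V_1$. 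Since each of $s$ and $t$ is an endpoint of the Hamiltonian $(s,t)$-path, removing both still yields at most two pieces, which is fewer than the $r \geqslant 2$ components only when we also account for the endpoints' placement; I would close this gap by noting that in the vertex-cut case the two pieces together still cannot bridge three or more components, and more fundamentally that $s$ and $t$ themselves must lie in $V_1$ and so are not available as connectors among the $D_i$, which forces the contradiction regardless of the exact piece count.
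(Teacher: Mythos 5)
Your proof is correct and takes essentially the same approach the paper relies on: the paper itself defers this lemma to \cite{Keshavarz16}, but the underlying argument there (and in the paper's own proof of its Lemma 3.2, where it argues that any $(s,t)$-path ``must pass through $v$ two times'') is exactly your connectivity argument that a single path in $G$ minus the cut set cannot bridge two components. One minor remark: your final ``unifying'' paragraph is needlessly weak and a bit muddled (removing $s$ and $t$ leaves exactly \emph{one} piece, not ``at most two,'' precisely because both are endpoints of the Hamiltonian $(s,t)$-path), but since your first two paragraphs already contain the complete and tight argument, this causes no gap.
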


The Hamiltonian $(s, t)$-path $P$ of $R(m, n)$ constructed in \cite{Hung17a} satisfies that $P$ contains at least one boundary edge of each boundary, and is called \textit{canonical}.

\begin{lem}\label{HamiltonianConnected-Rectangular}
(See \cite{Hung17a}) Let $R(m, n)$ be a rectangular supergrid graph with $m, n \geqslant 1$, and let $s$ and $t$ be its two distinct vertices. If $(R(m, n), s, t)$ does not satisfy condition $\mathrm{(F1)}$, then there exists a canonical Hamiltonian $(s, t)$-path of $R(m, n)$, i.e., $HP(R(m, n), s, t)$ does exist.
\end{lem}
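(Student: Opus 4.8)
The plan is to establish Lemma \ref{HamiltonianConnected-Rectangular} by induction on the dimensions $m$ and $n$, using a separation operation as the main inductive tool and Lemma \ref{HC-rectangular_supergrid_graphs} (the canonical Hamiltonian cycle result) for the base cases. First I would dispose of the degenerate cases $n=1$ and $n=2$: when $n=1$ the graph $R(m,1)$ is a path-like supergrid graph, and the only obstruction is precisely condition (F1), so the non-(F1) case forces $s,t$ to be the two endpoints and a Hamiltonian $(s,t)$-path exists trivially; the case $n=2$ is similar but requires checking that once $\{s,t\}$ is not a vertex cut, a zig-zag traversal through both rows yields the desired path. These small cases, together with the narrow strip $R(m,3)$ handled via the canonical cycle of Lemma \ref{HC-rectangular_supergrid_graphs}(1), form the base of the induction.

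For the inductive step with $m\geqslant n\geqslant 2$ (and $n\geqslant 3$ or $n\geqslant 4$ as appropriate), the key idea is to cut $R(m,n)$ by a vertical or horizontal separation into two smaller rectangular supergrid graphs $R_1$ and $R_2$. I would choose the cut so that $s$ and $t$ fall into a controlled configuration: either both $s$ and $t$ land in the same part $R_1$, or they are split one into each part. If both lie in $R_1$, I would recurse to obtain a canonical Hamiltonian $(s,t)$-path of $R_1$ (checking that $(R_1,s,t)$ does not satisfy (F1), which follows because the separated subgraph is a genuine rectangle of width at least $2$), and then splice in a canonical Hamiltonian cycle of $R_2$ obtained from Lemma \ref{HC-rectangular_supergrid_graphs}, merging them along a pair of parallel boundary edges ($e_1\thickapprox e_2$) that the separation guarantees exist between $R_1$ and $R_2$. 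If $s\in R_1$ and $t\in R_2$, I would instead pick an auxiliary edge endpoint on the shared boundary of each part, build a canonical Hamiltonian $(s,\cdot)$-path in $R_1$ ending at a boundary vertex and a canonical Hamiltonian $(\cdot,t)$-path in $R_2$ starting at an adjacent boundary vertex, and concatenate them with the operation $P_1 \Rightarrow P_2$.

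The main obstacle, and where most of the case analysis will concentrate, is guaranteeing that the separation can always be positioned so that (i) each resulting part has width at least $2$ (so recursion is valid and (F1) is avoided in the subparts) and (ii) the chosen splice edges are boundary edges of the constructed canonical path/cycle in each part. Because Lemma \ref{HamiltonianConnected-Rectangular} asserts that the resulting path is itself canonical—containing at least one boundary edge on \emph{every} boundary—I must also verify that the merge preserves canonicity: the spliced path should inherit flat faces from the appropriate boundaries of $R_1$ and $R_2$. The delicate subcases arise when $s$ or $t$ lies on or very near the separation line, or when one part would be forced to have width $1$; in those cases I would shift the cut by one column/row or switch between a vertical and a horizontal separation. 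Throughout, the canonicity property from Lemma \ref{HC-rectangular_supergrid_graphs}—namely that for $m\geqslant n\geqslant 4$ there are four canonical cycles with concave faces on each of the four boundaries—provides the flexibility needed to always find a cycle in $R_2$ whose concave face avoids the splice boundary, so that the merge along parallel boundary edges goes through.
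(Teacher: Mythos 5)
This lemma is not proved in the paper at all: it is imported verbatim from \cite{Hung17a} (the authors' earlier paper on the Hamiltonian connectivity of rectangular supergrid graphs), so there is no in-paper proof to compare against. Your overall strategy --- induction on the dimensions via vertical/horizontal separations, recursing on the part containing $s$ and $t$ (or concatenating two sub-paths across the cut when $s$ and $t$ are split), and absorbing the other part by merging a canonical Hamiltonian cycle along parallel boundary edges via Proposition \ref{Pro_Obs} --- is indeed the same style of argument used in \cite{Hung17a} and throughout this paper, so the plan is in the right spirit.

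There is, however, a concrete gap in your base cases. You propose to handle the narrow strip $R(m,3)$ ``via the canonical cycle of Lemma \ref{HC-rectangular_supergrid_graphs}(1),'' but a Hamiltonian cycle only yields a Hamiltonian $(s,t)$-path when $(s,t)$ is an edge of that cycle; the paper itself makes exactly this restricted observation right after Lemma \ref{HC-rectangular_supergrid_graphs}. For arbitrary non-adjacent $s,t$ in $R(m,3)$ you need the full Hamiltonian connectivity of $3$-rectangles (the paper's Lemma \ref{HP-3rectangle}, itself cited from \cite{Hung17a}), which requires its own induction and case analysis and cannot be read off from the cycle. The $n=2$ case is similarly dismissed too quickly: showing that non-(F1) implies a ``zig-zag traversal'' works is a genuine case analysis on the positions of $s$ and $t$ (corner versus interior, same column versus different columns), not a one-line observation. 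Finally, in the split case ($s\in R_1$, $t\in R_2$) you must choose the auxiliary adjacent pair $p\thicksim q$ on the cut so that neither $(R_1,s,p)$ nor $(R_2,q,t)$ satisfies (F1); this requires the kind of conditional choice the paper uses elsewhere (e.g., $p=(1,c)$ if $s\neq(1,c)$, else $p=(2,c)$ in Lemma \ref{HP-Cshaped1}), and your sketch does not address it. Each of these is fixable, but as written the proof does not go through.
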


Consider that $(R(m, n), s, t)$ does not satisfy condition (F1). Let $w=(1, 1)$, $z=(2, 1)$, and $f=(3, 1)$ be three vertices of $R(m, n)$ with $m\geqslant 3$ and $n\geqslant 2$. In \cite{Keshavarz19a}, we have proved that there exists a Hamiltonian $(s, t)$-path $Q$ of $R(m, n)$ such that $(z, f)\in Q$ if the following condition (F2) holds; and $(w, z)\in Q$ otherwise.

\begin{verse}
(F2) $n=2$ and $\{s, t\}\in \{\{w, z\}, \{(1, 1), (2, 2)\}, \{(2, 1), (1, 2)\}\}$, or $n\geqslant 3$ and $\{s, t\}=\{w, z\}$.
\end{verse}

The above result is presented as follows.

\begin{lem}\label{HamiltonianConnected-Rectangular-wz_rectangle}
(See \cite{Keshavarz19a}) Let $R(m, n)$ be a rectangular supergrid graph with $m\geqslant 3$ and $n\geqslant 2$, $s$ and $t$ be its two distinct vertices, and let $w=(1, 1)$ and $z=(2, 1)$. If $(R(m, n), s, t)$ does not satisfy condition $\mathrm{(F1)}$, then there exists a canonical Hamiltonian $(s, t)$-path $Q$ of $R(m, n)$ such that $(z, f)\in Q$ if $(R(m, n), s, t)$ does satisfy condition $\mathrm{(F2)}$; and $(w, z)\in Q$ otherwise.
\end{lem}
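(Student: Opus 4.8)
My plan is to prove both branches through a single \emph{vertical separation} of $R(m,n)$, available because $m\geq 3$, that cuts the horizontal edges joining column $2$ to column $3$ and splits the graph into the thin strip $R(2,n)$ on columns $1,2$ and the remainder $R(m-2,n)$ on columns $3,\dots,m$. This choice is natural because the two target edges occupy complementary positions: $(w,z)$ lies entirely inside the strip, whereas $(z,f)$ is precisely one of the cut edges. Hence forcing $(z,f)$ will amount to using it as the splicing edge in a concatenation $P_L\Rightarrow P_R$ across the cut, while forcing $(w,z)$ will amount to burying it inside a Hamiltonian subpath of the strip and merging the two blocks through a different cut edge via a pair of parallel edges $e_1\thickapprox e_2$. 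I would first record the obstruction that motivates the case split: when $\{s,t\}=\{w,z\}$ the edge $(w,z)$ joins the two endpoints and so cannot lie on a Hamiltonian path of a graph with more than two vertices, and in the two $n=2$ diagonal cases of $\mathrm{(F2)}$ the corner block $\{w,z,(1,2),(2,2)\}$ seals off a short subpath whenever $(w,z)$ is used; this is exactly why $\mathrm{(F2)}$ demands $(z,f)$ rather than $(w,z)$.

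For the branch where $\mathrm{(F2)}$ fails and the target is $(w,z)$, I would split on where $s$ and $t$ lie. If both endpoints lie in the right block, I apply Lemma~\ref{HamiltonianConnected-Rectangular} to get a canonical Hamiltonian $(s,t)$-path $P_R$ of $R(m-2,n)$; canonicity gives a boundary edge on its column-$3$ side that is parallel to an edge of a directly constructed Hamiltonian path $P_L$ of $R(2,n)$ chosen to contain $(w,z)$, and the parallel-edge merge absorbs $P_L$ into $P_R$ without destroying the canonical property. If one or both endpoints fall in columns $1$ and $2$, I instead route the incident path-edges of those endpoints so that $w$ still receives $(w,z)$; since $\mathrm{(F2)}$ fails we are never in a corner-sealing configuration, so the column-$1$ vertex $(1,2)$ can be absorbed and the strip subpath can be arranged to contain $(w,z)$ and to exit at a column-$2$ vertex adjacent to column $3$ for the concatenation.

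For the branch where $\mathrm{(F2)}$ holds and the target is $(z,f)$, I treat the listed configurations directly. When $\{s,t\}=\{w,z\}$ (say $s=w$, $t=z$), I construct a path that leaves $w$ through the strip, crosses into the right block, sweeps all of columns $3,\dots,m$ using Lemma~\ref{HamiltonianConnected-Rectangular} (or the canonical cycles of Lemma~\ref{HC-rectangular_supergrid_graphs}), returns to $f$, and finishes with the single step $f\to z$, so that $(z,f)$ is the last edge of $Q$. For $\{s,t\}=\{z,(1,2)\}$ I start at $z$ along $(z,f)$, sweep the remainder, and end at $(1,2)$; for $\{s,t\}=\{w,(2,2)\}$ the vertex $z$ is internal and I route its crossing into the right block along $(z,f)$. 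In each case the $R(m-2,n)$ part is handled by the already-proved rectangular lemmas and glued at $f$.

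The hard part is the generic case of the first branch, namely realizing the single prescribed edge $(w,z)$ for every $\mathrm{(F2)}$-free placement of $s,t$: Lemma~\ref{HamiltonianConnected-Rectangular} only guarantees \emph{some} top boundary edge, not this leftmost one, so I must supply explicit strip constructions and then verify, uniformly over the sub-cases, that (i) the spliced path is still Hamiltonian and canonical, (ii) the thin block $R(2,n)$ with the chosen endpoints does not itself fall under $\mathrm{(F1)}$, since $2\times n$ strips are exactly where $\mathrm{(F1)}$ is active, and (iii) the adjacencies across the cut needed for $P_L\Rightarrow P_R$ and for the parallel merge genuinely hold. The degenerate small cases, in particular $m=3$ (so the right block is the single column $R(1,n)$, whose Hamiltonian paths are very constrained) and small $n$, I would dispatch by hand. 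Once this corner and endpoint bookkeeping is in place, everything else reduces to Lemmas~\ref{HC-rectangular_supergrid_graphs} and~\ref{HamiltonianConnected-Rectangular}.
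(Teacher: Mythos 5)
First, a point of comparison: the paper itself contains no proof of this lemma --- it is imported verbatim from \cite{Keshavarz19a} --- so your attempt can only be measured against the statement itself. Your plan has the right ingredients (the column-$2/3$ vertical cut, the merging toolkit of Proposition \ref{Pro_Obs}, and a correct explanation of why (F2) forces $(z,f)$ rather than $(w,z)$), but one central step fails outright. You propose to absorb a Hamiltonian \emph{path} $P_L$ of $R(2,n)$ into the canonical Hamiltonian $(s,t)$-path $P_R$ of $R(m-2,n)$ through a parallel edge pair. Statement (2) of Proposition \ref{Pro_Obs} merges a \emph{cycle} with a path; exchanging a parallel pair between two \emph{paths} preserves every vertex degree, so the result still has four degree-one vertices ($s$, $t$, and both endpoints of $P_L$) and only $|V(R(m,n))|-2$ edges --- it is two vertex-disjoint paths covering $R(m,n)$, never a single Hamiltonian $(s,t)$-path. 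When $s,t$ both lie in the right block the repair is easy: replace $P_L$ by the boundary Hamiltonian cycle of $R(2,n)$, which contains $(w,z)$; the merge then destroys only a vertical column-$2$ edge, so both $(w,z)$ and canonicity survive. But as written the step is wrong.

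Second, the cases you defer as ``endpoint bookkeeping'' include configurations that your decomposition cannot handle at all, so they are not bookkeeping. Take $s=(1,j)$ and $t=(2,j)$ with $2\leqslant j\leqslant n-1$ and $n\geqslant 3$. Then $(R(m,n),s,t)$ satisfies neither (F1) (column $3$ keeps $R(m,n)-\{s,t\}$ connected, since $m\geqslant 3$) nor (F2), so you owe a canonical Hamiltonian $(s,t)$-path through $(w,z)$. However $\{s,t\}$ \emph{is} a vertex cut of the strip $R(2,n)$, so by Lemma \ref{Lemma:1b} the strip admits no Hamiltonian $(s,t)$-path whatsoever; moreover, any Hamiltonian $(s,t)$-path of $R(m,n)$ must cross your vertical cut at least twice, while a concatenation $P_L\Rightarrow P_R$ crosses it exactly once. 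So ``routing the incident path-edges'' inside the strip cannot fix this; you need a genuinely different construction --- for instance two vertex-disjoint strip paths, one from $s$ to a column-$2$ vertex $p$ and one from a column-$2$ vertex $q$ to $t$, jointly covering the strip and glued as $P_{L1}\Rightarrow P_R\Rightarrow P_{L2}$ through the right block (the three-piece device this paper uses in Case 2.1.2.1 of Lemma \ref{HP-Cshaped2} and Case 1.2 of Lemma \ref{HP-Cshaped3}), or an induction on $m$ in the spirit of Lemma \ref{HP-3rectangle-boundary_path}. Until that family of cases (and the degenerate $m=3$ situation, where the right block $R(1,n)$ admits Hamiltonian paths only between its two extreme vertices) is written out, the proposal does not establish the lemma.
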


We then give some observations on the relations among cycle, path, and vertex. These propositions will be used in proving our results and are given in \cite{Hung15, Hung16, Hung17a}.

\begin{pro}\label{Pro_Obs}
(See \cite{Hung15, Hung16, Hung17a}) Let $C_1$ and $C_2$ be two vertex-disjoint cycles of a graph $G$, let $C_1$ and $P_1$ be a cycle and a path, respectively, of $G$ with $V(C_1)\cap V(P_1)=\emptyset$, and let $x$ be a vertex in $G-V(C_1)$ or $G-V(P_1)$. Then, the following statements hold true:\\
$(1)$ If there exist two edges $e_1\in C_1$ and $e_2\in C_2$ such that $e_1 \thickapprox e_2$, then $C_1$ and $C_2$ can be combined into a cycle of $G$ (see Fig. \emph{\ref{Fig_Obs}(a)}).\\
$(2)$ If there exist two edges $e_1\in C_1$ and $e_2\in P_1$ such that $e_1 \thickapprox e_2$, then $C_1$ and $P_1$ can be combined into a path of $G$ (see Fig. \emph{\ref{Fig_Obs}(b)}). \\
$(3)$ If vertex $x$ adjoins one edge $(u_1, v_1)$ of $C_1$ (resp., $P_1$), then $C_1$ (resp., $P_1$) and $x$ can be combined into a cycle (resp., path) of $G$ (see Fig. \emph{\ref{Fig_Obs}(c)}).\\
$(4)$ If there exists one edge $(u_1, v_1)\in C_1$ such that $u_1\thicksim start(P_1)$ and $v_1\thicksim end(P_1)$, then $C_1$ and $P_1$ can be combined into a cycle $C$ of $G$ (see Fig. \emph{\ref{Fig_Obs}(d)}).
\end{pro}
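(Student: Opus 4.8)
The plan is to prove all four statements by the same elementary surgery: whenever a cycle appears, I first \emph{open} it at the distinguished edge to turn it into a Hamiltonian path of that cycle, and then splice the resulting pieces together using the given adjacencies. Throughout, the vertex-disjointness hypotheses guarantee that no vertex is repeated, so each reassembled object is automatically a simple cycle or simple path on the union of the vertex sets, and its length equals the sum of the lengths of the constituents.

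For $(1)$, write $e_1=(u_1,v_1)\in C_1$ and $e_2=(u_2,v_2)\in C_2$ with $u_1\thicksim u_2$ and $v_1\thicksim v_2$ (this is what $e_1\thickapprox e_2$ means). Deleting $e_1$ from $C_1$ leaves a $(u_1,v_1)$-path $Q_1$ covering $V(C_1)$, and deleting $e_2$ from $C_2$ leaves a $(u_2,v_2)$-path $Q_2$ covering $V(C_2)$. Then $u_1\to\cdots\to v_1$ (along $Q_1$), the cross edge $(v_1,v_2)$, the reversed path $v_2\to\cdots\to u_2$ (along $Q_2$), and the cross edge $(u_2,u_1)$ close up into a single cycle on $V(C_1)\cup V(C_2)$, exactly as in Fig.~\ref{Fig_Obs}(a).

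For $(2)$, the only change is that $P_1$ is already a path; deleting the edge $e_2=(u_2,v_2)$ splits $P_1$ into the two subpaths $A$ from $start(P_1)$ to $u_2$ and $B$ from $v_2$ to $end(P_1)$ (one of which degenerates to a single vertex when $e_2$ is incident to an endpoint of $P_1$). Opening $C_1$ at $e_1$ as above and inserting it between these pieces yields $start(P_1)\to\cdots\to u_2\to u_1\to\cdots\to v_1\to v_2\to\cdots\to end(P_1)$, a $(start(P_1),end(P_1))$-path covering $V(C_1)\cup V(P_1)$. For $(3)$, the adjoining condition says $x\thicksim u_1$ and $x\thicksim v_1$, so I simply replace the edge $(u_1,v_1)$ by the two-edge detour $u_1\to x\to v_1$; in the cycle case this reinserts into $C_1$ and in the path case it subdivides $P_1$, and either way exactly the vertex $x$ is added. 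For $(4)$, I open $C_1$ at $(u_1,v_1)$ into a $(u_1,v_1)$-path covering $V(C_1)$ and form $start(P_1)\to\cdots\to end(P_1)\to v_1\to\cdots\to u_1\to start(P_1)$, using $end(P_1)\thicksim v_1$ to attach and $u_1\thicksim start(P_1)$ to close the cycle $C$.

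I do not expect any genuine obstacle here: these are constructive gluing lemmas and the effort is purely in bookkeeping the endpoints. The one point deserving a word of care is the degenerate case in $(2)$ where $e_2$ is an end-edge of $P_1$, so that $A$ or $B$ reduces to a single vertex; and, more generally, verifying that the two splicing edges used in $(1)$ and $(4)$ are distinct from the deleted cycle edge so that the output is a genuine simple cycle rather than a multigraph artifact. Both points follow immediately from the vertex-disjointness assumptions, since the cross edges and adjacencies run between the two disjoint vertex sets while the deleted edge lies inside $C_1$.
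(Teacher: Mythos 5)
Your proof is correct, and it is exactly the edge-destruction-and-splicing surgery that the paper itself relies on: the paper gives no written proof of Proposition~\ref{Pro_Obs} (it cites \cite{Hung15, Hung16, Hung17a}), but its Fig.~\ref{Fig_Obs}, with $\otimes$ marking the destroyed edges $e_1$, $e_2$ and the cross edges rerouting through the parallel pairs, depicts precisely your construction. Your added care about the degenerate end-edge case in $(2)$ and the simplicity of the resulting cycle is a sound, if minor, refinement of the same argument.
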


\begin{figure}[!t]
\begin{center}
\includegraphics[width=0.7\textwidth]{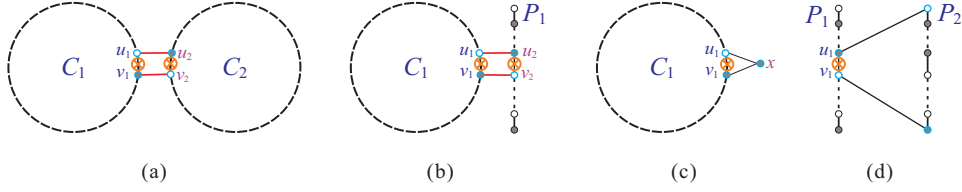}
\caption{A schematic diagram for (a) Statement (1), (b) Statement (2), (c) Statement (3), and (d) Statement (4) of Proposition \ref{Pro_Obs}, where bold dashed lines indicate the cycles (paths) and $\otimes$ represents the destruction of an edge while constructing a cycle or path.} \label{Fig_Obs}
\end{center}
\end{figure}

Next, we will discover one Hamiltonian connected property of 3-rectangle $R(m, 3)$ with $m\geqslant 3$ that will be used in proving our result. Let $z_1=(m, 1)$, $z_2=(m, 2)$, and $z_3=(m, 3)$ be three vertices of $R(m, 3)$. Let $\tilde{R}=R(m, 3)-\{z_1, z_2, z_3\}$ and edges $e_{12}=(z_1, z_2)$, $e_{23}=(z_2, z_3)$. Then, $\tilde{R}=R(m-1, 3)$. Let $s, t\in \tilde{R}$. We will prove that there exists a Hamiltonian $(s, t)$-path $P$ of $R(m, 3)$ such that $e_{12}, e_{23}\in P$. Before giving this property, we first give one result in \cite{Hung17a} for 3-rectangle as follows.

\begin{lem}\label{HP-3rectangle}
(See \cite{Hung17a}) Let $R(m, 3)$ be a $3$-rectangle with $m\geqslant 3$, and let $s$ and $t$ be its two distinct vertices. Then, $R(m, 3)$ contains a canonical Hamiltonian $(s, t)$-path $P$ which contains at least one boundary edge of each boundary in $R(m, 3)$.
\end{lem}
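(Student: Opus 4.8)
The plan is to derive the statement from the general rectangular Hamiltonian-connectivity result already available. First I would observe that no pair of distinct vertices of $R(m,3)$ with $m\geqslant 3$ can satisfy condition $\mathrm{(F1)}$: since consecutive columns of three vertices are joined by horizontal, vertical, and crossed edges, $R(m,3)$ is richly connected and possesses neither a cut vertex nor a two-element vertex cut once $m\geqslant 3$. Hence $\mathrm{(F1)}$ fails automatically, and Lemma \ref{HamiltonianConnected-Rectangular} yields a canonical Hamiltonian $(s,t)$-path, where ``canonical'' already means ``contains at least one boundary edge of each boundary.'' This is exactly the conclusion sought, so in the present context the lemma follows immediately.

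For a self-contained argument of the kind given in the cited source, I would instead proceed by induction on $m$, combining the canonical Hamiltonian cycle of $R(m,3)$ from Lemma \ref{HC-rectangular_supergrid_graphs}$(1)$ with the merging rules of Proposition \ref{Pro_Obs}. For the inductive step I would apply a vertical separation splitting $R(m,3)$ into $R(m_1,3)$ and $R(m_2,3)$ with $s$ lying in the first part. When $s$ and $t$ both lie in $R(m_1,3)$, I would take a canonical Hamiltonian $(s,t)$-path of $R(m_1,3)$ (by induction) carrying a vertical boundary edge on the cut column, take a canonical Hamiltonian cycle of $R(m_2,3)$ whose flat face lies on the cut column, and merge the path and the cycle through a pair of parallel edges straddling the separation by Proposition \ref{Pro_Obs}$(2)$. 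When $s\in R(m_1,3)$ and $t\in R(m_2,3)$, I would build canonical Hamiltonian $(s,p)$- and $(q,t)$-paths in the two parts with $p\thicksim q$ across the cut and concatenate them with $\Rightarrow$. Small widths $m\in\{3,4,5\}$, together with the subcases in which $s$ and $t$ occupy a single short column, would be dispatched by explicit constructions serving as base cases.

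The main difficulty is not the existence of a Hamiltonian path but the bookkeeping of all four boundaries. A separation destroys the boundary edges along the cut column and can leave one outer boundary---especially a short left or right column of three vertices, or a boundary touched only by the short sub-path near $s$ or $t$---without any used edge; so at each merge I must certify that the resulting path still contains a boundary edge of every boundary of the full $R(m,3)$. I would control this by always selecting the canonical cycle of $R(m_2,3)$ whose flat faces sit on the outer short and long boundaries, and by treating separately the awkward placements in which $s$ and $t$ share a column or sit at corners, where the freedom in Lemma \ref{HC-rectangular_supergrid_graphs}$(1)$ to position the concave face is exactly what prevents the loss of a boundary edge.
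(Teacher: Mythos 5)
Your first paragraph is correct and, in fact, settles the lemma; note however that the paper itself offers no proof of this statement at all, importing it verbatim from \cite{Hung17a}, so your argument is a genuine (and welcome) alternative: you reduce it to the other imported result, Lemma \ref{HamiltonianConnected-Rectangular}, by observing that condition (F1) is vacuous for $R(m,3)$ with $m\geqslant 3$, and that ``canonical'' is defined in the paper exactly as ``contains at least one boundary edge of each boundary.'' The only weak point is the phrase ``richly connected,'' which should be replaced by an actual argument: $R(m,3)$ with $m\geqslant 3$ contains a Hamiltonian cycle by Lemma \ref{HC-rectangular_supergrid_graphs}, hence is $2$-connected and has no cut vertex; and no two vertices form a vertex cut, since deleting two vertices either leaves one vertex $w$ in a single column (through which the two intact sides remain joined, every other column being complete) or leaves at least two vertices in every column, in which case each remaining vertex of column $p$ is adjacent to a remaining vertex of column $p\pm 1$ (a vertex in row $r$ reaches rows $\{r-1,r,r+1\}$ of the next column, which meets any two of the three rows), so all vertices reach a fully intact column. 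Your paragraphs two and three, sketching a self-contained induction on $m$, are then redundant; for what it is worth, that sketch closely mirrors how the paper proves its own new Lemma \ref{HP-3rectangle-boundary_path} (base case $R(3,3)$ by exhaustion, inductive step merging a path of the last column into a Hamiltonian path of $R(k,3)$ via Proposition \ref{Pro_Obs}), and your worry there about bookkeeping boundary edges across separations is legitimate but moot once the reduction of your first paragraph is accepted.
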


By using the above lemma, we will prove the following lemma.

\begin{lem}\label{HP-3rectangle-boundary_path}
Let $R(m, 3)$ be a $3$-rectangle with $m\geqslant 3$, and let $s$ and $t$ be its two distinct vertices. Let $z_1=(m, 1)$, $z_2=(m, 2)$, and $z_3=(m, 3)$ be three vertices of $R(m, 3)$, and let edges $e_{12}=(z_1, z_2)$, $e_{23}=(z_2, z_3)$. If $\{s, t\}\cap\{z_1, z_2, z_3\}=\emptyset$, then there exists a Hamiltonian $(s, t)$-path of $R(m, 3)$ containing $e_{12}$ and $e_{23}$.
\end{lem}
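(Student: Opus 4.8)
The plan is to reduce the problem to the Hamiltonian connectivity of the smaller 3-rectangle obtained by deleting the rightmost column. Following the setup preceding the statement, write $\tilde{R}=R(m,3)-\{z_1,z_2,z_3\}=R(m-1,3)$ and let $f_1=(m-1,1)$, $f_2=(m-1,2)$, $f_3=(m-1,3)$ be its rightmost column. Since $\{s,t\}\cap\{z_1,z_2,z_3\}=\emptyset$, both $s$ and $t$ lie in $\tilde{R}$. The key observation is that demanding $e_{12},e_{23}\in P$ is exactly demanding that $z_1\rightarrow z_2\rightarrow z_3$ (or its reverse) appear as a subpath of $P$; that is, $P$ makes a U-turn inside the rightmost column. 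I would therefore build the desired path by taking a Hamiltonian $(s,t)$-path $Q$ of $\tilde{R}$ that uses one edge of its rightmost boundary and then splicing the deleted column into that edge.

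For the splicing step I first record the relevant adjacencies: $z_1\thicksim f_1,f_2$ and $z_3\thicksim f_2,f_3$, while the rightmost boundary of $\tilde{R}$ has precisely the two edges $(f_1,f_2)$ and $(f_2,f_3)$. If $(f_1,f_2)\in Q$, I would replace this edge by the detour $f_1\rightarrow z_1\rightarrow z_2\rightarrow z_3\rightarrow f_2$, which is legitimate because $f_1\thicksim z_1$ and $z_3\thicksim f_2$; symmetrically, if $(f_2,f_3)\in Q$, I would replace it by $f_2\rightarrow z_1\rightarrow z_2\rightarrow z_3\rightarrow f_3$, using $f_2\thicksim z_1$ and $z_3\thicksim f_3$. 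In either case the three inserted vertices are exactly those removed to form $\tilde{R}$, so the spliced path is simple, spans all of $R(m,3)$, keeps the endpoints $s$ and $t$, and contains both $e_{12}$ and $e_{23}$.

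It remains to guarantee such a $Q$, and this is where the only difficulty lies. For $m\geqslant 4$ we have $\tilde{R}=R(m-1,3)$ with $m-1\geqslant 3$, so Lemma \ref{HP-3rectangle} supplies a canonical Hamiltonian $(s,t)$-path, which by definition contains at least one boundary edge of the rightmost boundary, hence one of $(f_1,f_2),(f_2,f_3)$, and the splicing applies. For $m=3$ we have $\tilde{R}=R(2,3)$, and Lemma \ref{HamiltonianConnected-Rectangular} provides a canonical Hamiltonian $(s,t)$-path with the same boundary property, provided $(R(2,3),s,t)$ does not satisfy condition $\mathrm{(F1)}$. The hard part is exactly this base case: $R(2,3)$ has the unique two-vertex cut $\{(1,2),(2,2)\}$ (its two remaining columns are joined only through those two vertices), so when $\{s,t\}=\{(1,2),(2,2)\}$ no Hamiltonian $(s,t)$-path of $\tilde{R}$ exists and the reduction breaks down, even though $\{(1,2),(2,2)\}$ is not a cut of $R(3,3)$ once the full rightmost column is present. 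I would dispose of this single exceptional configuration by an explicit construction, namely the path $(1,2)\rightarrow(1,1)\rightarrow(2,1)\rightarrow(3,1)\rightarrow(3,2)\rightarrow(3,3)\rightarrow(2,3)\rightarrow(1,3)\rightarrow(2,2)$, which is Hamiltonian in $R(3,3)$, runs from $s$ to $t$, and contains both $e_{12}$ and $e_{23}$. With this lone case settled, all remaining instances follow uniformly from the splicing argument above.
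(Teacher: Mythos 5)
Your proposal is correct, and its main engine coincides with the paper's: delete the rightmost column, obtain a canonical Hamiltonian $(s,t)$-path of $\tilde{R}=R(m-1,3)$ that contains a boundary edge of its rightmost column (via Lemma \ref{HP-3rectangle} when $m-1\geqslant 3$), and splice the deleted column $z_1\rightarrow z_2\rightarrow z_3$ into that edge. The paper dresses this up as an induction on $m$, but the induction hypothesis is never actually used; like yours, it is a direct argument. Where you genuinely diverge is the base case $m=3$: the paper settles it by brute force, exhibiting all fifteen configurations of $\{s,t\}$ in $R(3,3)$ (its Fig.\ \ref{Fig_HP-3rectangle-boundary_path}(a)--(o)), whereas you push the reduction one step further to $\tilde{R}=R(2,3)$ via Lemma \ref{HamiltonianConnected-Rectangular}. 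That substitution is only legitimate because you correctly isolated the unique obstruction: $(R(2,3),s,t)$ satisfies condition (F1) exactly when $\{s,t\}=\{(1,2),(2,2)\}$ (the middle row is the only $2$-vertex cut and $R(2,3)$ has no cut vertex), and your explicit path $(1,2)\rightarrow(1,1)\rightarrow(2,1)\rightarrow(3,1)\rightarrow(3,2)\rightarrow(3,3)\rightarrow(2,3)\rightarrow(1,3)\rightarrow(2,2)$ for that pair checks out: it is Hamiltonian in $R(3,3)$, has the right endpoints, and uses both $e_{12}$ and $e_{23}$. The trade-off is clear: the paper's enumeration is self-contained but bulky and unilluminating; your version replaces fourteen of the fifteen pictures by one appeal to an existing lemma at the cost of a small structural argument about vertex cuts of $R(2,3)$, which is arguably the more instructive proof.
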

\begin{proof}
We will prove this lemma by induction on $m$. Let $\tilde{R}=R(m, 3)-\{z_1, z_2, z_3\}$. Then, $\tilde{R}=R(m-1, 3)$, where $m-1\geqslant 2$. Initially, let $m=3$. Then, $\tilde{R}=R(2, 3)$ and $s, t\in \tilde{R}$. By considering every case, we can construct the desired Hamiltonian $(s, t)$-path of $R(3, 3)$, as shown in Fig. \ref{Fig_HP-3rectangle-boundary_path}(a)--(o). 
 Assume that the lemma holds true when $m=k\geqslant 3$. Consider that $m=k+1$. Then, $\tilde{R}=R(k, 3)$ is a subgraph of $R(k+1, 3)$, where $z_1=(k+1, 1)$, $z_2=(k+1, 2)$, $z_3=(k+1, 3)$, and $s, t\in \tilde{R}=R(k, 3)$. Let $\hat{P} = z_1\rightarrow z_2\rightarrow z_3$. By Lemma \ref{HP-3rectangle}, $\tilde{R}$ contains a Hamiltonian $(s, t)$-path $\tilde{P}$ such that it contains an edge $\tilde{e}=(u, v)$ locating to face $R(k+1, 3)-\tilde{R}$. Then, $start(\hat{P})\thicksim u$ and $end(\hat{P})\thicksim v$. By Statement (4) of Proposition \ref{Pro_Obs}, $\tilde{P}$ and $\hat{P}$ can be combined into a Hamiltonian $(s, t)$-path of $R(k+1, 3)$. The construction of such a Hamiltonian path is depicted in Fig. \ref{Fig_HP-3rectangle-boundary_path}(p). Thus, the lemma holds when $m=k+1$. By induction, the lemma holds true.
\end{proof}

\begin{figure}[!t]
\begin{center}
\includegraphics[width=0.85\textwidth]{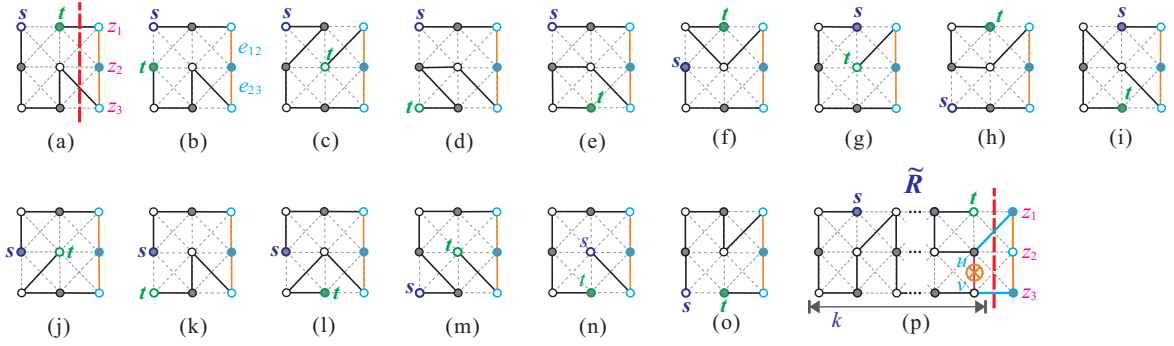}
\caption{(a)--(o) The possible Hamiltonian $(s, t)$-path of $R(3, 3)$ when $s, t\not\in\{z_1, z_2, z_3\}$, and (p) the construction of Hamiltonian $(s, t)$-path of $R(k+1, 3)$ for $k\geqslant 3$ and $s, t\in \tilde{R}$, where the solid lines indicate the constructed Hamiltonian $(s, t)$-path and $\otimes$ represents the destruction of an edge while constructing a Hamiltonian $(s, t)$-path of $R(k+1, 3)$.} \label{Fig_HP-3rectangle-boundary_path}
\end{center}
\end{figure}

In addition to condition (F1) (as depicted in Fig. \ref{Fig_ForbiddenConditionF2F5}(a) and  \ref{Fig_ForbiddenConditionF2F5}(b)), in \cite{Keshavarz19a}, we showed that $HP(L(m,n; k, l), s, t)$ does not exist whenever one of the following conditions is satisfied.

\begin{verse}
(F3) assume that $G$ is a supergrid graph, there exists a vertex $w \in G$ such that $deg(w) = 1$, $w \neq s$, and $w \neq t$ (see Fig. \ref{Fig_ForbiddenConditionF2F5}(c)).
\end{verse}

\begin{verse}
(F4) $m - k = 1$, $n - l = 2$, $l = 1$, $k \geq 2$, and $\{s, t\} = \{(1, 2), (2, 3)\}$ or $\{(1, 3), (2, 2)\}$ (see Fig. \ref{Fig_ForbiddenConditionF2F5}(d)).
\end{verse}

\begin{figure}[!t]
\begin{center}
\includegraphics[width=0.8\textwidth]{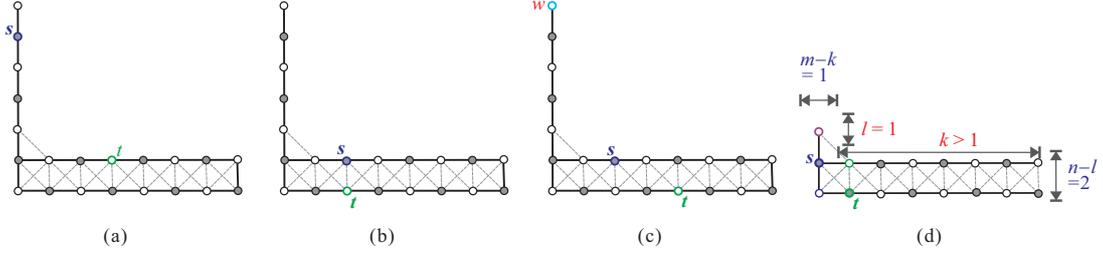}
\caption{$L$-shaped supergrid graph in which there is no Hamiltonian $(s, t)$-path for (a) $s$ is a cut vertex, (b) $\{s, t\}$ is a vertex cut, (c) there exists a vertex $w$ such that $deg(w)=1$, $w\neq s$, and $w\neq t$, and (d) $m-k=1$, $n-l=2$, $l=1$, $k\geqslant 2$, and $\{s, t\}=\{(1, 2), (2, 3)\}$.} \label{Fig_ForbiddenConditionF2F5}
\end{center}
\end{figure}

\begin{thm}\label{HamiltonianConnected-L-shaped}
(See \cite{Keshavarz19a}) Let $L(m, n; k, l)$ be a $L$-shaped supergrid graph with vertices $s$ and $t$. If $(L(m, n; k, l), s, t)$ does not satisfy conditions $\mathrm{(F1)}$, $\mathrm{(F3)}$, and $\mathrm{(F4)}$, then $L(m, n; k, l)$ contains a Hamiltonian $(s, t)$-path, i.e., $HP(L(m, n; k, l), s, t)$ does exist.
\end{thm}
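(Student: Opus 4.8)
The plan is to prove the theorem by a decomposition-and-combination argument that reduces the $L$-shaped graph to the rectangular case already settled in Lemmas \ref{HamiltonianConnected-Rectangular}, \ref{HamiltonianConnected-Rectangular-wz_rectangle}, and \ref{HP-3rectangle-boundary_path}. Write $a=m-k$ for the width of the vertical arm. The graph $L(m,n;k,l)$ splits naturally into two rectangular supergrid subgraphs via a separation operation: a vertical separation at column $a$ yields the tall left block $R_1=R(a,n)$ (columns $1$ to $a$, all rows) and the right block $R_2=R(k,n-l)$ (columns $a+1$ to $m$, rows $l+1$ to $n$); alternatively a horizontal separation at row $l$ yields a top block $R(a,l)$ and a bottom block $R(m,n-l)$. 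Up to the symmetries that preserve the $L$-structure (and using the convention $s_x\leqslant t_x$), I would choose the separation so that $s$ and $t$ are distributed conveniently and each block keeps both dimensions at least $2$, falling back to the $3$-rectangle lemmas or explicit small drawings whenever a block degenerates.

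First I would dispose of the degenerate and small instances directly. When $a=1$ or $n-l=1$ one arm is a single row or column, so the graph is essentially a thin strip and can be handled by the rectangular results together with condition (F3); when $n\leqslant 3$ the relevant pieces are $3$-rectangles, for which Lemmas \ref{HP-3rectangle} and \ref{HP-3rectangle-boundary_path} supply Hamiltonian paths carrying prescribed boundary edges. These, plus a finite list of explicit configurations (analogous to the base-case figures of Lemma \ref{HP-3rectangle-boundary_path}), are checked by inspection.

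For the generic case the core idea is as follows. If, after a suitable choice of separation, both endpoints lie in one block, say $R_1$, then by Lemma \ref{HamiltonianConnected-Rectangular} (or its refinement Lemma \ref{HamiltonianConnected-Rectangular-wz_rectangle} when $\{s,t\}$ is an extreme pair near a corner) I would build a canonical Hamiltonian $(s,t)$-path $P_1$ of $R_1$ containing a boundary edge $e_1$ that lies along the cut; by Lemma \ref{HC-rectangular_supergrid_graphs} I would build a canonical Hamiltonian cycle $\mathcal{C}_2$ of $R_2$ whose flat face is placed against the cut, producing an edge $e_2$ with $e_1\thickapprox e_2$; then Statement (2) of Proposition \ref{Pro_Obs} merges $P_1$ and $\mathcal{C}_2$ into a Hamiltonian $(s,t)$-path of the whole graph. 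When instead $s\in R_1$ and $t\in R_2$ lie in different blocks, I would construct a Hamiltonian $(s,u)$-path of $R_1$ and a Hamiltonian $(v,t)$-path of $R_2$ with $u\thicksim v$ across the cut (choosing their endpoints so that neither induced subproblem violates (F1), invoking Lemma \ref{Lemma:1b} to know when that is safe) and concatenate them by $P_1\Rightarrow P_2$.

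The main obstacle I anticipate is the bookkeeping needed to guarantee that the chosen separation never manufactures a new obstruction inside a block nor destroys the adjacency required to recombine the pieces. Concretely: (i) the canonical path and cycle must always expose a usable pair of parallel edges on the cut, which can fail exactly when a block is too thin or when an endpoint sits on the cut and forces the path to leave it prematurely, so the separation direction must be chosen adaptively; (ii) when $s$ or $t$ lies at the reflex inner corner of the $L$, routing around that corner without stranding an isolated pocket is delicate; and (iii) the analysis must confirm that excluding precisely (F1), (F3), and (F4) suffices, i.e.\ that every remaining instance admits some valid separation. In particular, the configurations just outside (F4) (small $l$, $n-l=2$, extreme endpoint pairs) require a dedicated construction to show that a Hamiltonian $(s,t)$-path appears the moment (F4) ceases to hold. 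Organizing this case split so that each block provably avoids condition (F1) is where most of the effort will go.
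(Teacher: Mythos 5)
First, a point of reference: the paper you are working from does not prove this theorem at all. It is stated in Section~\ref{Sec_Preliminaries} as an imported background result, cited from \cite{Keshavarz19a}, so there is no in-paper proof to compare against. That said, your decomposition-and-merge skeleton is exactly the methodology of this line of work and of the paper's own $C$-shaped arguments: separate into rectangular or $L$-shaped pieces, build a canonical Hamiltonian $(s,t)$-path in the piece containing the endpoints and a canonical Hamiltonian cycle with a flat face against the cut in the other piece, merge via Proposition~\ref{Pro_Obs}, or concatenate two Hamiltonian paths through adjacent vertices $u\thicksim v$ when $s$ and $t$ straddle the cut. So the outline points in the right direction.

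The problem is that the outline is all you give, and every place where the theorem's actual content lives is deferred. (i) Your ``finite list of explicit configurations'' is not finite: the degenerate shapes ($m-k=1$; $n-l\in\{1,2\}$; $l=1$ with $k\geqslant 2$; two-row blocks) are infinite families parametrized by $m$ and $n$, and they cannot be dispatched by inspecting drawings; they need inductive constructions of their own --- note that even the paper's small Lemma~\ref{HP-3rectangle-boundary_path} is proved by induction on $m$, not by figures. As a concrete instance, when $m-k=1$ and $l\geqslant 2$ the vertex $(1,1)$ has degree one, so avoiding (F3) forces $s=(1,1)$, and you must exhibit a Hamiltonian path starting at a degree-one vertex of an arbitrarily long thin $L$; your generic merge cannot produce it, because a block of width one has no Hamiltonian cycle at all. (ii) The adaptive choice of cut is acknowledged but never specified, and it is not a side issue: e.g.\ if $m-k=2$ and $s,t$ both lie in the left block $R(2,n)$ with $s_y=t_y$, then $\{s,t\}$ is a vertex cut of that block (so (F1) holds inside it and Lemma~\ref{HamiltonianConnected-Rectangular} is unusable there) even though the full instance is legal; deciding how to re-cut in every such situation, and proving that the resulting sub-instances $(R_1,s,u)$ and $(R_2,v,t)$ avoid (F1), (F3), (F4), is precisely the bulk of the case analysis --- compare how much space the paper spends on exactly this bookkeeping inside Lemmas~\ref{HP-Cshaped1}--\ref{HP-Cshaped3}. (iii) Finally, your item stating that the configurations just outside (F4) ``require a dedicated construction'' concedes the decisive step without supplying it; showing that a Hamiltonian $(s,t)$-path appears the moment (F4) fails is part of the theorem, not an implementation detail. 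As it stands, this is a plausible research plan whose unresolved obstacles coincide with the mathematical content of the statement, so it cannot be accepted as a proof.
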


\begin{thm} \label{HC-Lshaped}(See \cite{Hung18})
Let $L(m, n; k, l)$ be a L-shaped supergrid graph. Then, $L(m, n; k, l)$ contains a Hamiltonian cycle if it does not satisfy condition $\mathrm{(F5)}$, where condition $\mathrm{(F5)}$ is defined as follows:
\end{thm}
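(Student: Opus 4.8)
The plan is to prove that $L(m, n; k, l)$ is Hamiltonian by a divide-and-conquer argument built on the canonical Hamiltonian cycles of rectangular supergrid graphs (Lemma \ref{HC-rectangular_supergrid_graphs}) together with the combining rules of Proposition \ref{Pro_Obs}. First I would perform a separation operation on the $L$-shape. Recall that the full-height columns $1$ to $a=m-k$ form a rectangular block $R(a,n)$, while the bottom-right arm occupies columns $a+1$ to $m$ in rows $l+1$ to $n$, forming a rectangular block $R(k,n-l)$; a vertical separation between columns $a$ and $a+1$ splits $L(m,n;k,l)$ into these two vertex-disjoint rectangles, which share the vertical interface over rows $l+1$ to $n$. (A horizontal separation into a top block $R(a,l)$ and a bottom block $R(m,n-l)$ is equally usable, and I would keep whichever separation yields two pieces that each admit a Hamiltonian cycle.)

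In the generic case, where each side length of both pieces is at least $2$, I would invoke Lemma \ref{HC-rectangular_supergrid_graphs} to obtain a canonical Hamiltonian cycle in each rectangle, choosing in each piece the cycle whose concave face is placed away from the shared interface. Since a canonical Hamiltonian cycle traverses every boundary edge on the three sides carrying flat faces, both cycles then contain a boundary edge lying on the shared interface, and any two such boundary edges $e_1$ in the left piece and $e_2$ in the right piece satisfy $e_1 \thickapprox e_2$. Statement $(1)$ of Proposition \ref{Pro_Obs} then merges the two vertex-disjoint cycles into a single Hamiltonian cycle of $L(m,n;k,l)$, which is the desired conclusion. For the $3$-rectangle pieces I would use the canonical cycle of Lemma \ref{HC-rectangular_supergrid_graphs}$(1)$, whose flat faces automatically cover both shorter boundaries, so the interface edge is always available.

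The hard part will be the degenerate configurations in which one piece collapses to width one, so that it is a path rather than a graph possessing its own Hamiltonian cycle; these arise precisely when $a=m-k=1$ or $n-l=1$ (a single column or a single row in one arm), together with the smallest base rectangles. Here the clean ``combine two cycles'' step fails, and I would instead build a Hamiltonian cycle in the thick piece and weave the width-one arm into it, either by repeatedly applying Proposition \ref{Pro_Obs}$(3)$ to absorb the strip's vertices one edge at a time, or by treating the strip as a boundary path and applying Proposition \ref{Pro_Obs}$(2)$ to combine a cycle with a path. Throughout this analysis I would isolate the extreme sub-cases that genuinely admit no Hamiltonian cycle --- for instance, a width-one arm forcing a degree-one vertex at a corner such as $(1,1)$ or $(m,n)$, or forcing a cut vertex, so that minimum degree $2$ and $2$-connectivity fail --- and verify that they coincide exactly with the excluded condition $\mathrm{(F5)}$.

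Finally, I would organize the write-up so that, after the $\mathrm{(F5)}$ shapes are removed, the remaining instances are exhausted by the generic merge plus a finite set of small base cases (small values of $a$, $l$, and $n-l$) drawn explicitly, mirroring the figure-based base cases of Lemma \ref{HP-3rectangle-boundary_path}. The main obstacle is thus not the generic construction but the bookkeeping at the boundary of the theorem: one must show that every thin-arm and small-base configuration either admits an explicitly exhibited Hamiltonian cycle or falls under $\mathrm{(F5)}$, with neither a Hamiltonian shape wrongly excluded nor a non-Hamiltonian shape wrongly claimed. Pinning down this dividing line exactly is where the delicate case analysis lies.
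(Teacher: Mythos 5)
This theorem is not proved in this paper at all---it is imported verbatim from \cite{Hung18}---so there is no internal proof to compare against; however, your strategy (separating the $L$-shape into two rectangles, merging their canonical Hamiltonian cycles via Statement (1) of Proposition \ref{Pro_Obs}, and absorbing width-one strips via Statement (3)) is exactly the technique this paper itself uses to prove the analogous Hamiltonicity theorem for $C$-shaped graphs in Section \ref{Sec_forbidden-conditions}, and it matches the approach of the cited source. Your proposal is sound, including at the delicate edges: for a $3$-rectangle piece a reflection places the concave face away from the interface; the $k=1$ arm is woven in by inserting each arm vertex $(a+1,y)$ into the distinct flat-face edge $((a,y-1),(a,y))$ of the cycle in $R(a,n)$; and the remaining thin configurations ($a=1$ with $l\geqslant 2$, or $n-l=1$ with $k\geqslant 2$) each force a degree-one vertex and are precisely the shapes excluded by condition (F5).
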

\begin{verse}
(F5) there exists a vertex $w$ in $L(m, n; k, l)$ such that $deg(w) = 1$.
\end{verse}

In the following, we use $\hat{L}(G,s,t)$ to denote the length of longest paths between $s$ and $t$ and $\hat{U}(G,s,t)$ to indicate the upper bound on the length of longest paths between $s$ and $t$, where $G$ is a rectangular, $L$-shaped, or $C$-shaped supergrid graph. By the length of a path we mean the number of vertices of the path. Let $G$ be a rectangular supergrid graph $R(m,n)$ or $L$-shaped supergrid graphs $L(m,n;k,l)$. In \cite{Hung17a, Keshavarz19a}, the authors proved the following upper bounds on the length of longest paths in $G$:\\

$\hat{U}(G, s, t)=
  \begin{cases}
   t_{x}-s_{x}+1,                               & \mathrm{if} \ G=R(m,n)\ \mathrm{and}\ n=1,\\
   \max\{2s_x, 2(m-s_x+1)\}\ \mathrm{or} \ 2m,  & \mathrm{if} \ G=R(m,n)\ \mathrm{and}\ n=2,\\
   mn,                                          & \mathrm{if }\ G=R(m,n)\ \mathrm{and}\ n\geqslant 3,\\
   |t_y-s_y|+1,                                 & \mathrm{if} \ G=L(m,n;k,l)\ \mathrm{and\ (FC1)\ holds,} \\
   n-s_y+t_x,                                   & \mathrm{if}\ G=L(m,n;k,l)\ \mathrm{and\ (FC2)\ holds,}\\
   |t_y-s_y|+2,                                 & \mathrm{if} \ G=L(m,n;k,l)\ \mathrm{and\ (FC3)\ holds,} \\
   \hat{L}(G', s, t),                           & \mathrm{if}\ G=L(m,n;k,l)\ \mathrm{and\ (FC4),\ (FC5),\ or\ (FC6d)\ holds}, \\
   \max\{2s_y, mn-kl-2s_y+2\},                  & \mathrm{if}\ G=L(m,n;k,l)\ \mathrm{and \ (FC6a)\ holds,}\\
   \max\{2s_y-1, mn-2s_y+1\},                  & \mathrm{if}\ G=L(m,n;k,l)\ \mathrm{and \ (FC6b)\ holds,}\\
   \max\{2(m-s_x+1), mn-kl-2(m-s_x+1)+2\},      & \mathrm{if}\ G=L(m,n;k,l)\ \mathrm{and \ (FC6c)\ holds,}\\
   mn-kl-1,                                     & \mathrm{if}\ G=L(m,n;k,l)\ \mathrm{and \ (F4)\ holds,} \\
   mn-kl,                                       & \mathrm{if}\ G=L(m,n;k,l)\ \mathrm{and \ (C0)\ holds,}\\
  \end{cases}$
where $(\mathrm{C0})$, $(\mathrm{FC1})$, $(\mathrm{FC2})$, $(\mathrm{FC3})$, $(\mathrm{FC4})$,  $(\mathrm{FC5})$, $(\mathrm{FC6a})$, $(\mathrm{FC6b})$, $(\mathrm{FC6c})$, $(\mathrm{FC6d})$ are defined as follows:

\begin{verse}
(C0) $(L(m,n; k,l),s,t)$ does not satisfy any of conditions (F1), (F3), and (F4).
\end{verse}

\begin{verse}
(FC1) $m-k = n-l = 1$, $l > 1$, and $s_y, t_y \leqslant l$.
\end{verse}

\begin{verse}
(FC2) $m-k = n-l = 1$, $l > 1$, $s_y < l$, and $t_x > 1$.
\end{verse}
\begin{verse}
(FC3) $m-k = n-l = 1$, $l > 1$, $s_x = t_x = 1$, $\max\{s_y, t_y\} = n$, and $[(k>1)$ or $(k=1$ and $\min\{s_y, t_y\} > 1)]$.
\end{verse}

\begin{verse}
(FC4) $n-l > 1$, $m-k = 1$, $l > 1$, and $[(s_y, t_y > l$ and $\{s, t\}$ is not a vertex cut$)$, $(s_y\leqslant l$ and $t_y > l)$, or $(t_y\leqslant l$ and $s_y > l)]$. Here, $G' = L(m, n-n'; k, l')$, where $l' = l - n'$, and $n' = l-1$ if $s_y, t_y\geqslant l$; otherwise $n' = \min\{s_y, t_y\} - 1$.
\end{verse}

\begin{verse}
(FC5) $n-l > 1$, $m-k = 1$, $m > 2$, $s = (1, l+1)$, and $t = (2, l+1)$. Here, $G' = R(m, n-l)$.
\end{verse}

\begin{verse}
(FC6a) $l > 1$, $n-l > 1$, $m-k = 2$, and $2\leqslant s_y=t_y\leqslant l$.
\end{verse}

\begin{verse}
(FC6b) $m=2$, $k =l = 1$, $n-l > 1$, and $l+1\leqslant s_y=t_y\leqslant n-1$.
\end{verse}

\begin{verse}
(FC6c) $k > 1$, $l = 1$, $m-k = 1$, $n-l = 2$, and $2\leqslant s_x=t_x\leqslant m-1$.
\end{verse}

\begin{verse}
(FC6d) $(m = 2$, $k = 1$, $l > 1$, $n-l > 1$, and $l+1 \leqslant s_y=t_y\leqslant n-1)$ or $(k, l > 1$, $m-k = 1$, $n-l = 2$, and $2\leqslant s_x=t_x\leqslant m-1$). Here, $G' = L(m, n-l+1; k, 1)$.
\end{verse}

\begin{thm}\cite{Keshavarz19a}\label{L-shaped_LongestPath}
Given a rectangular supergrid graph $R(m, n)$ with $mn\geqslant 2$ or $L$-shaped supergrid graph $L(m, n; k, l)$, and two distinct vertices $s$ and $t$ in $R(m, n)$ or $L(m, n; k, l)$, a longest $(s, t)$-path can be found in $O(mn)$-linear time.
\end{thm}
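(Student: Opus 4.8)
The plan is to show that for every instance $(G,s,t)$ the length of a longest $(s,t)$-path equals the tabulated upper bound $\hat{U}(G,s,t)$, and that a path attaining this bound can be constructed within the stated time. Since the upper bounds $\hat{U}(G,s,t)$ are already available from \cite{Hung17a, Keshavarz19a}, the work splits into two tasks: producing, in each case, a matching lower-bound construction (an $(s,t)$-path of length exactly $\hat{U}(G,s,t)$), and verifying that every such construction runs in $O(mn)$ time.

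I would first dispose of the rectangular case $R(m,n)$. When $(R(m,n),s,t)$ does not satisfy (F1), Lemma \ref{HamiltonianConnected-Rectangular} supplies a Hamiltonian $(s,t)$-path, so $\hat{L}(G,s,t)=mn$ and the path is obtained in linear time by the algorithm behind that lemma; this covers the entry $n\geqslant 3$ and the non-cut subcases of $n\leqslant 2$. When (F1) does hold --- which forces $n=1$ or $n=2$ --- the graph has $s$, $t$, or $\{s,t\}$ as a (vertex) cut, so no path can use both sides at full length; here I would build the longest path explicitly by running a boundary path along the longer side, giving the $t_x-s_x+1$ and $\max$-formula entries. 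Each such construction is a single sweep and is clearly linear.

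Next I would treat the $L$-shaped case by the condition that applies. If (C0) holds, i.e.\ none of (F1), (F3), (F4) is satisfied, then Theorem \ref{HamiltonianConnected-L-shaped} gives a Hamiltonian $(s,t)$-path and $\hat{L}(G,s,t)=mn-kl$, computed in linear time. Otherwise I would proceed case by case through the table: the thin-arm cases (FC1)--(FC3), where $m-k=n-l=1$ forces the path into a width-one corridor, are handled by direct boundary constructions matching $|t_y-s_y|+1$, $n-s_y+t_x$, and $|t_y-s_y|+2$; the cut cases (FC6a)--(FC6c), where $\{s,t\}$ separates the graph into two rectangular pieces, are handled by taking the larger piece together with $s,t$ via a separation operation and a Hamiltonian $(s,t)$-path of that piece (Lemma \ref{HamiltonianConnected-Rectangular}, Lemma \ref{HamiltonianConnected-Rectangular-wz_rectangle}), yielding the $\max$ formulas; and the one-vertex-deficient case (F4) gives $mn-kl-1$ by omitting the single unreachable vertex. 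The recursive cases (FC4), (FC5), (FC6d) reduce to a strictly smaller subgraph $G'$ by peeling off the rows (or the arm) that no longest path can enter, so that $\hat{L}(G,s,t)=\hat{L}(G',s,t)$, and I would recurse on $G'$, again realizing each decomposition by a separation operation.

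The main obstacle is the lower-bound side of the case analysis: verifying that the explicitly built paths really attain $\hat{U}(G,s,t)$ in every forbidden configuration, in particular checking in the cut cases (FC6a)--(FC6c) exactly which vertices are forced out of any $(s,t)$-path so that the larger-side construction is optimal, and confirming in the recursive cases that the reduction is genuine --- namely that the peeled-off part cannot contribute to any $(s,t)$-path and that the reduced instance $G'$ does \emph{not} re-trigger a recursive condition. The latter point is what keeps the running time linear: since each recursive reduction lands in a non-recursive case, the recursion has constant depth, every level performs only a constant number of separation operations together with Hamiltonian-path computations on rectangular pieces (each linear in its size by Lemma \ref{HamiltonianConnected-Rectangular}), and the sizes across a level sum to at most $mn$, so the total cost is $O(mn)$.
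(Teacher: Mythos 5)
Your proposal is sound and follows essentially the same route as the source: the paper itself imports Theorem \ref{L-shaped_LongestPath} from \cite{Keshavarz19a} without proof, but your two-phase strategy (matching each tabulated upper bound $\hat{U}(G,s,t)$ with an explicit construction, using the Hamiltonian-connectivity lemmas for the non-degenerate cases and separations plus reduction to $G'$ for the recursive ones) is precisely the template the paper deploys for the analogous $C$-shaped result in Section \ref{Sec_Algorithm} (Lemmas \ref{Lemma:FC7-FC8}--\ref{Lemma:long-csupergrid}). The only point I would tighten is your claim that the recursion in (FC4)/(FC6d) has constant depth --- this needs a short check that $G'$ cannot re-trigger a recursive condition (e.g., that $l'=1$ when $s_y,t_y\geqslant l$), though even without constant depth the linear bound survives because the decision at each level is $O(1)$ and only the base case does work proportional to its size.
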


\section{The necessary and sufficient conditions for the Hamiltonian and Hamiltonian connected of $C$-shaped supergrid graphs}\label{Sec_forbidden-conditions}
In this section, we will give necessary and sufficient conditions for $C$-shaped supergrid graphs to have a Hamiltonian cycle and Hamiltonian $(s,t)$-path. First, we will verify the Hamiltonicity of $C$-shaped supergrid graphs. If $a(=m-k)=1$ or there exists a vertex $w\in V(C(m, n; k, l;c,d))$ such that $deg(w) = 1$, then $C(m, n; k, l;c,d)$ contains no Hamiltonian cycle. Therefore, $C(m, n; k, l;c,d)$ is not Hamiltonian if condition (F6) is satisfied, where (F6) is defined as follows:

\begin{verse}
(F6) $a(=m-k) = 1$ or there exists a vertex $w \in V(C(m, n; k, l; c, d))$ such that $deg(w) = 1$.
\end{verse}

\begin{figure}[h]
\centering
\includegraphics[scale=0.9]{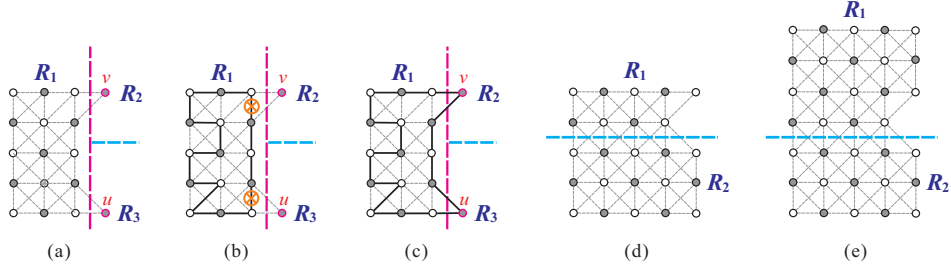}
\caption{(a) A vertical and horizontal separations on $C(m,n; k,l; c,d)$, (b) a Hamiltonian cycle in $R_1$, (c) a Hamiltonian cycle in $C(m,n; k,l; c,d)$, and (d) and (e) a horizontal separation on $C(m,n; k,l; c,d)$, where bold solid lines indicate the constructed Hamiltonian cycle and $\otimes$ represents the destruction of an edge while constructing a Hamiltonian cycle.}\label{fig:HamCycle-Cshaped1}
\end{figure}

\begin{thm}
$C(m,n; k,l; c,d)$ contains a Hamiltonian cycle if and only if it does not satisfy condition $\mathrm{(F6)}$.
\end{thm}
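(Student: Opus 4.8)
The plan is to prove both directions of this iff-statement. The necessity direction is already essentially handled: if $C(m,n;k,l;c,d)$ satisfies condition (F6), then either $a=m-k=1$ or some vertex $w$ has $\deg(w)=1$. In the degree-one case, any Hamiltonian cycle would have to enter and leave $w$ through distinct edges, which is impossible, so no Hamiltonian cycle exists. In the case $a=1$, I would argue that the single column of width one joining the upper and lower arms of the $C$ forms a bottleneck; removing the two vertices in the narrow connecting strip (or examining the cut structure as in the vertex-cut arguments of (F1)) disconnects the graph in a way that no cycle can traverse, so again there is no Hamiltonian cycle. Thus the forward implication reduces to these two short observations, which mirror the degree-one obstruction already used for $L$-shaped graphs in Theorem \ref{HC-Lshaped}.

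The substantive work is the sufficiency direction: assuming (F6) fails (so $a=m-k\geqslant 2$ and $\delta(G)\geqslant 2$), I would construct a Hamiltonian cycle explicitly. My approach is to use the separation operation from the Definition and decompose $C(m,n;k,l;c,d)$ into rectangular supergrid subgraphs, then stitch their canonical Hamiltonian cycles together using Proposition \ref{Pro_Obs}. Concretely, I would first apply a horizontal separation to split off the three horizontal bands suggested by Fig. \ref{fig:HamCycle-Cshaped1}: the top rectangle $R(m,c)$, the bottom rectangle $R(m,d)$, and the middle band of height $l$ which (because the hole $R(k,l)$ sits against the right border) is itself the rectangle $R(a,l)=R(m-k,l)$ forming the vertical spine of the $C$. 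Each of these pieces is a genuine rectangular supergrid graph, and since $a\geqslant 2$ each has both dimensions at least $2$ (or is handled by the $n=3$ canonical case of Lemma \ref{HC-rectangular_supergrid_graphs}). By Lemma \ref{HC-rectangular_supergrid_graphs} each piece admits a canonical Hamiltonian cycle, and I can choose the concave faces so that adjacent pieces share a pair of parallel edges $e_1\thickapprox e_2$ across the separation line; Statement (1) of Proposition \ref{Pro_Obs} then merges them into a single cycle, first combining the middle spine with the top band, then with the bottom band.

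The main obstacle I anticipate is the case analysis for small or degenerate parameter values where some subrectangle collapses to width or height one or two, and the boundary-edge/parallel-edge matching across a separation must be arranged carefully. When $c=1$ or $d=1$ the corresponding band is a $1$-row rectangle $R(m,1)$, which has no Hamiltonian cycle of its own, so the clean three-piece decomposition fails and I would instead absorb that thin row into an adjacent band (forming, say, $R(m,c+?)$ or an $L$-like region) before applying the canonical-cycle construction. Similarly, when $l=1$ the spine is a single row and the middle must be merged with a neighbouring band rather than treated separately. I would organize these as a handful of sub-cases, in each verifying that the pieces chosen are rectangles of the sizes covered by Lemma \ref{HC-rectangular_supergrid_graphs} and that the requisite parallel edges exist on the separation boundaries.

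Because the condition $a\geqslant 2$ guarantees the spine has width at least two, the vertical connectivity needed to route a cycle through both arms is always available, and the $\delta(G)\geqslant 2$ hypothesis rules out the dangling-vertex obstructions; together these are exactly what make the constructions go through in every remaining case. I would present the generic three-band construction as the main argument (illustrated in Fig. \ref{fig:HamCycle-Cshaped1}(a)--(c)), then dispatch the thin-band exceptions via the horizontal separations shown in Fig. \ref{fig:HamCycle-Cshaped1}(d)--(e), concluding that a Hamiltonian cycle exists whenever (F6) is not satisfied and thereby completing the equivalence.
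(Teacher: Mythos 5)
Your necessity argument is essentially correct, and your generic three-band decomposition (top $R(m,c)$, spine $R(a,l)$, bottom $R(m,d)$, each given a canonical Hamiltonian cycle by Lemma \ref{HC-rectangular_supergrid_graphs} and stitched by Statement (1) of Proposition \ref{Pro_Obs}) is workable when $c,d\geqslant 2$ and $l\geqslant 2$; this is in fact a different route from the paper, which never uses three rectangles but always splits into exactly two pieces, an $L$-shaped graph $L(m,c+l;k,l)$ and the rectangle $R(m,d)$, invoking Theorem \ref{HC-Lshaped}. Your fallback for a single thin band (absorb $R(m,1)$ into the spine to form an $L$-like region, then stitch with the remaining rectangle) also works and essentially reproduces the paper's Case 2.

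The genuine gap is the case $c=d=1$. Note first that when $c=1$ (or $d=1$), the failure of (F6) forces $k=1$, since otherwise the corner vertex $(m,1)$ (or $(m,n)$) would have degree one; you never record or use this fact, yet it is what makes the degenerate cases tractable. When $c=d=1$, every horizontal band you can split off is a one-row rectangle $R(m,1)$, which has no Hamiltonian cycle, so there is nothing to stitch; and your absorption fallback is circular here, because absorbing \emph{both} thin rows into the spine simply reconstitutes the whole $C$-shaped graph. Absorbing only one row does give an $L$-shape with a Hamiltonian cycle, but the remaining row is then a path on $m$ vertices that Proposition \ref{Pro_Obs} cannot merge into a cycle: Statement (2) yields a path, not a cycle, and Statement (4) requires the endpoints $(1,n)$ and $(m,n)$ of that path to be adjacent to the two ends of a single edge of the cycle, which is impossible once $a\geqslant 4$. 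What is needed here—and what the paper does in its Case 1, depicted precisely in Fig. \ref{fig:HamCycle-Cshaped1}(a)--(c), the figure you cite as illustrating your three-band split—is a \emph{vertical} separation: take the full-height rectangle $R(a,n)$, place a flat face of its canonical Hamiltonian cycle on its right boundary, and absorb the two leftover pieces $R(k,c)$ and $R(k,d)$, which are single vertices exactly because $k=1$, via Statement (3) of Proposition \ref{Pro_Obs}. Since your proposal commits to horizontal separations only, the case $c=d=1$ remains unproved as written.
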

\begin{proof}
Only if part $(\Rightarrow)$: Assume that $C(m,n; k,l; c,d)$ satisfies condition (F6), then we show that it contains no Hamiltonian cycle. Let $v\in V(C(m,n; k,l; c,d))$ such that $c+1\leqslant v_y\leqslant c+l$ if $a(=m-k) = 1$; otherwise $v\thicksim w$. It is obvious that any cycle in $C(m,n; k,l; c,d)$ must pass through $v$ two times. Therefore, $C(m,n; k,l; c,d)$ contains no Hamiltonian cycle.

If part $(\Leftarrow)$: We prove this statement by constructing a Hamiltonian cycle of $C(m,n; k,l; c,d)$. We consider the following two cases:

Case 1: $c=1$ and $d=1$. In this case, $k = 1$. If $k > 1$, then there exists a vertex $w \in V(C(m,n; k,l; c,d))$ such that $deg(w)=1$. We make a vertical and horizontal separations on $C(m,n; k,l; c,d)$ to obtain three disjoint rectangular supergrid subgraphs $R_1 = R(a, n)$, $R_2 = R(k, c)$, and $R_3 = R(k, d)$, as depicted in Fig. \ref{fig:HamCycle-Cshaped1}(a). Assume that $v\in V(R_2)$ and $u\in V(R_3)$. By Lemma \ref{HC-rectangular_supergrid_graphs}, $R_1$ contains a canonical Hamiltonian cycle $HC_1$ (see Fig. \ref{fig:HamCycle-Cshaped1}(b)). We can place one flat face of $HC_1$ to face $R_2$ and $R_3$. Thus, there exists an edge $(w, z)\in HC_1$ such that $v \thicksim w$ and $v \thicksim z$. By Statement (3) of Proposition \ref{Pro_Obs}, $v$ and $HC_1$ can be combined into a cycle $HC_2$. By the same argument, $u$ can be merged into the cycle $HC_2$ to form a Hamiltonian cycle of $C(m,n; k,l; c,d)$, as shown in Fig. \ref{fig:HamCycle-Cshaped1}(c).

Case 2: $c\geqslant 2$ or $d\geqslant 2$. By symmetry, assume that $d\geqslant 2$. We make a horizontal separation on $C(m,n; k,l; c,d)$ to obtain two disjoint supergrid subgraphs $R_1 = L(m, c+l; k,l)$ and $R_2 = R(m, d)$, as depicted in Fig. \ref{fig:HamCycle-Cshaped1}(d) and \ref{fig:HamCycle-Cshaped1}(e), where Fig. \ref{fig:HamCycle-Cshaped1}(d) and Fig. \ref{fig:HamCycle-Cshaped1}(e) respectively indicate the case of $c=1$ and $c\geqslant 2$. By Theorem \ref{HC-Lshaped} (resp. Lemma \ref{HC-rectangular_supergrid_graphs}), $R_1$ (resp. $R_2$) contains a Hamiltonian (resp. canonical Hamiltonian) cycle $HC_1$ (resp. $HC_2$) such that its one flat face is placed to face $R_2$ (resp. $R_1$). Then, there exist two edges $e_1 = (u_1, u_2) \in HC_1$ and $e_2 = (v_1, v_2) \in HC_2$ such that $e_1 \thickapprox e_2$; as shown in Fig. \ref{fig:HamCycle-Cshaped2}(a) and \ref{fig:HamCycle-Cshaped2}(b). By Statement (1) of Proposition \ref{Pro_Obs}, $HC_1$ and $HC_2$ can be combined into a Hamiltonian cycle of $C(m,n; k,l; c,d)$, as shown in Fig. \ref{fig:HamCycle-Cshaped2}(c) and \ref{fig:HamCycle-Cshaped2}(d).
\end{proof}

\begin{figure}[h]
\centering
\includegraphics[scale=0.9]{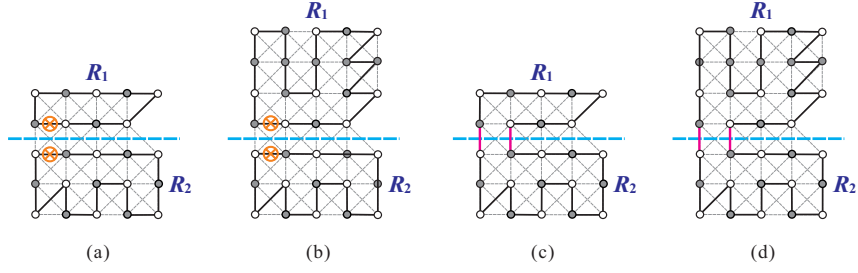}
\caption{(a) and (b) Hamiltonian cycles in $R_1$ and $R_2$, and (c) and (d) a Hamiltonian cycle in $C(m,n; k,l; c,d)$ for Fig. \ref{fig:HamCycle-Cshaped1}(d) and Fig. \ref{fig:HamCycle-Cshaped1}(e) respectively.}\label{fig:HamCycle-Cshaped2}
\end{figure}

Now, we give necessary and sufficient conditions for the existence of a Hamiltonian $(s, t)$-path in $C(m,n; k,l; c,d)$. In addition to condition (F1) (as depicted in Fig. \ref{fig:FC1}(a)--\ref{fig:FC1}(b)) and (F3) (as depicted in Fig. \ref{fig:FC1}(c)), if $(C(m,n; k,l; c,d), s, t)$ satisfies one of the following conditions, then it contains no Hamiltonian $(s, t)$-path.

\begin{figure}[!t]
\centering
\includegraphics[scale=0.9]{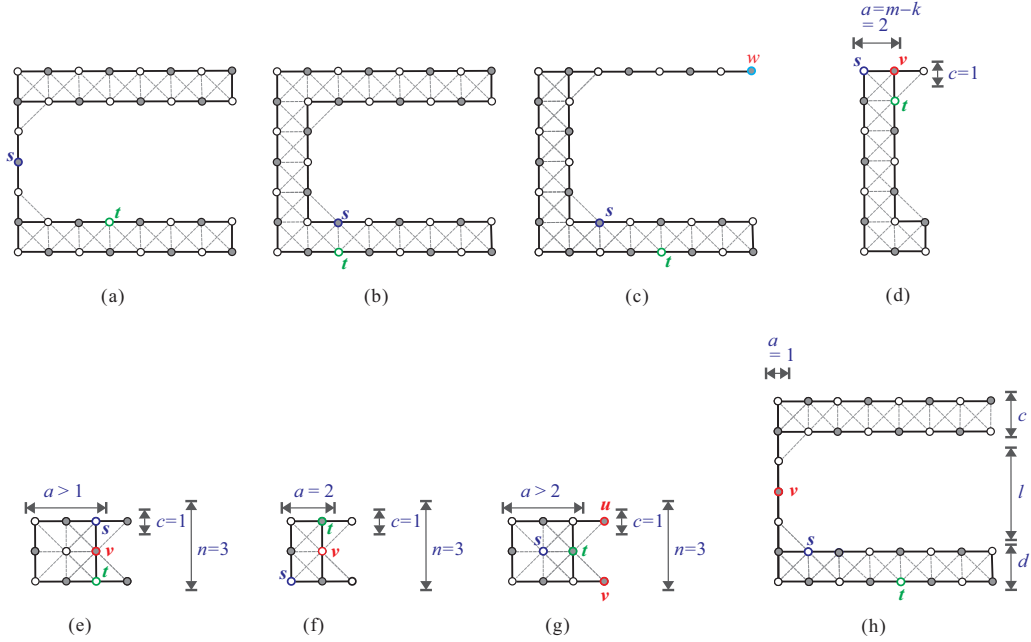}
\caption{Some $C$-shaped supergrid graphs in which there is no Hamiltonian $(s, t)$-path.}\label{fig:FC1}
\end{figure}

\begin{verse}
(F7) $m = 3$, $a (= m-k) = 2$, and $[(c = 1$ and $\{s, t\} = \{(1, 1), (2, 2)\}$ or $\{(1, 2), (2, 1)\})$ or $(d = 1$ and $\{s, t\} = \{(1, n), (2, n-1)\}$ or $\{(1, n - 1), (2, n)\})]$ (see Fig. \ref{fig:FC1}(d)).
\end{verse}

\begin{verse}
(F8) $n = 3$, $k = c = d = 1$, and\\
\hspace{0.6cm}(1) $a \geqslant 2$, $s_x=t_x=m-1$, and $|s_y-t_y|=2$ (see Fig. \ref{fig:FC1}(e)); or\\
\hspace{0.6cm}(2) $a = 2$, $s_x = 1$, $t_x = 2$, and $|s_y-t_y|=2$ (see Fig. \ref{fig:FC1}(f)); or\\
\hspace{0.6cm}(3) $a > 2$, $s_x < m - 1$, and $t = (m-1, 2)$ (see Fig. \ref{fig:FC1}(g)).
\end{verse}

\begin{verse}
(F9) $a( = m - k) = 1$, and ($s_y, t_y\leqslant c$ or $s_y, t_y > c + l$) (see Fig. \ref{fig:FC1}(h)).
\end{verse}

\begin{lem}\label{Necessary-condition-Cshaped}
If $HP(C(m,n; k,l; c,d), s, t)$ exists, then $(C(m,n; k,l; c,d), s, t)$ does not satisfy conditions $\mathrm{(F1)}$, $\mathrm{(F3)}$, $\mathrm{(F7)}$, $\mathrm{(F8)}$, and $\mathrm{(F9)}$.
\end{lem}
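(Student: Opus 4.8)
The plan is to establish the contrapositive condition by condition: I assume that $(C(m,n;k,l;c,d),s,t)$ satisfies one of (F1), (F3), (F7), (F8), (F9) and show in each case that no Hamiltonian $(s,t)$-path can exist. Two of these are immediate. Since $C(m,n;k,l;c,d)$ is itself a supergrid graph, condition (F1) is handled directly by Lemma \ref{Lemma:1b}. Condition (F3) follows from an elementary degree count: a vertex $w$ with $deg(w)=1$ that equals neither $s$ nor $t$ must be an internal vertex of any $(s,t)$-path, yet every internal vertex of a path is incident to two of its edges, which is impossible for a vertex of degree one.

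For (F9) I would exploit the fact that the graph becomes ``thin'' when $a(=m-k)=1$. In that case the $l$ middle rows $c+1,\dots,c+l$ contain only column $1$, because the removed block $R(k,l)$ occupies columns $2,\dots,m$ there. Consequently the bottom arm $B$ (the vertices with $v_y>c+l$, which form $R(m,d)$ with $d\geqslant 1$) is attached to the remainder of the graph solely through the vertex $w=(1,c+l)$: among the vertices of row $c+l+1$ only $(1,c+l+1)$ and $(2,c+l+1)$ have a neighbour outside $B$, and in both cases that neighbour is $w$. Hence $w$ is a cut vertex and $B$ is a component of $C(m,n;k,l;c,d)-w$. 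If a Hamiltonian $(s,t)$-path $P$ existed with $s_y,t_y\leqslant c$, then $w\neq s,t$ would be internal in $P$, so deleting $w$ would split $P$ into at most two subpaths, each confined to a single component of $C(m,n;k,l;c,d)-w$; since $B$ contains neither $s$ nor $t$, no subpath could cover $B$, a contradiction. The symmetric subcase $s_y,t_y>c+l$ is ruled out by the cut vertex $(1,c+1)$.

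Conditions (F7) and (F8) are the finite, low-dimensional special cases, and I expect them to be the main obstacle, since they resist a single uniform structural argument and must instead be treated by local edge forcing together with the explicit checks suggested by Fig. \ref{fig:FC1}(d)--(g). The common mechanism is that deleting the hole leaves degree-two vertices at the corners adjacent to it. In (F8), where $n=3$ and $k=c=d=1$ so that the hole is the single vertex $(m,2)$, both $(m,1)$ and $(m,3)$ have degree two and are each adjacent to $(m-1,2)$; forcing their incident edges makes the endpoint $t=(m-1,2)$ carry two path edges in subcase (3), and in subcase (1) forces a complete $(s,t)$-subpath on only five vertices, neither of which is possible in a Hamiltonian $(s,t)$-path of a larger graph. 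In (F7), where $m=3$, $k=1$, and $c=1$, the corner $(3,1)$ has degree two and is adjacent to $t=(2,2)$; tracing the forced chain out of $(3,1)$ boxes the other endpoint $s=(1,1)$ into its last free neighbour and again yields a forced complete $(s,t)$-path far too short to be Hamiltonian.

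The delicate part throughout (F7)--(F8) is verifying that every listed placement of $\{s,t\}$ genuinely admits no escape from the forcing; this is precisely why these endpoint pairs must be enumerated rather than subsumed under the cut-vertex argument used for (F9). Once each of (F1), (F3), (F7), (F8), (F9) has been shown individually to preclude a Hamiltonian $(s,t)$-path, the contrapositive gives the lemma.
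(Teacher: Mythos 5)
Your proposal is correct and follows essentially the same route as the paper's own proof: it establishes the contrapositive condition by condition, disposing of (F1) via Lemma \ref{Lemma:1b}, of (F3) by a degree count, of (F9) by exhibiting the cut vertex forced by $a=1$, and of (F7)/(F8) by finite local analysis of the configurations around the hole. The only cosmetic difference is that for (F7) and (F8) you argue by edge-forcing at the degree-two vertices adjacent to the hole, whereas the paper uses a vertex-cut $\{s,t,v\}$ argument for cases (1)--(2) of (F8) and an appeal to inspection for (F7) and case (3); your forcing argument is, if anything, more explicit than the paper's ``simple check.''
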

\begin{proof}
Assume that $(C(m,n; k,l; c,d), s, t)$ satisfies one of the conditions (F1), (F3), (F7), (F8), and (F9), then we show that $HP(C(m,n; k,l; c,d), s, t)$ does not exist. For conditions (F1) and (F3), it is clear (see Fig. \ref{fig:FC1}(a)--(c)). For condition (F7), by inspecting all cases of Fig. \ref{fig:FC1}(d) there exists no Hamiltonian $(s, t)$-path. For cases (1)--(2) of condition (F8), consider Fig. \ref{fig:FC1}(e) and \ref{fig:FC1}(f). Let $v$ be a vertex depicted in these figures. Since $\{s, t, v\}$ is a vertex cut of $C(m,n; k,l; c,d)$, then $C(m,n; k,l; c,d)-\{s, t, v\}$ is disconnected and contains three (or two) components in which two components (or one component) consist of only one vertex. Hence, any path between $s$ and $t$ must pass through $s$ or $v$ two times. Therefore, $C(m,n; k,l; c,d)$ contains no Hamiltonian $(s, t)$-path. For case (3) of condition (F8), consider Fig. \ref{fig:FC1}(g). A simple check shows that there is no Hamiltonian $(s,t)$-path in $C(m,n; k,l; c,d)$ containing both of vertices $u$ and $v$. For condition (F9), consider Fig. \ref{fig:FC1}(h). Let $v\in V(C(m,n; k,l; c,d)$ such that $v_x=1$ and $c+1 \leqslant v_y\leqslant c+l$. Since $a = 1$, $v$ is a cut vertex of $C(m,n; k,l; c,d)$. Obviously, any path between $s$ and $t$ must pass through $v$ two times. Therefore, $C(m,n; k,l; c,d)$ contains no Hamiltonian $(s, t)$-path.
\end{proof}

In the following, we will show that $C(m,n; k,l; c,d)$ always contains a Hamiltonian $(s, t)$-path when $(C(m,n; k,l; c,d),$ $s, t)$ does not satisfy conditions (F1), (F3), (F7), (F8), and (F9). We consider the case of $a = 1$ in Lemma \ref{HP-Cshaped1} and the case of $a\geqslant 2$ in Lemmas \ref{HP-Cshaped2} and \ref{HP-Cshaped3}.

\begin{figure}[!t]
\centering
\includegraphics[scale=0.9]{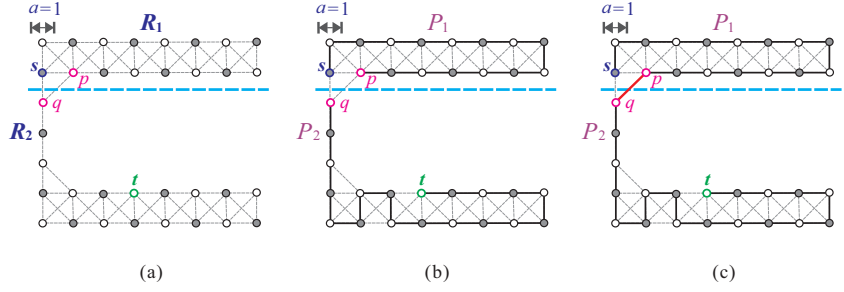}
\caption[]{(a) A horizontal separation on $C(m,n; k,l; c,d)$, (b) Hamiltonian paths $P_1$ and $P_2$ in $(R_1, s, p)$ and $(R_2, q, t)$, respectively, and (c) a Hamiltonian $(s, t)$-path in $C(m,n; k,l; c,d)$, where $a = 1$ and bold lines indicate the constructed Hamiltonian path.}
\label{fig:HP-SmallSize1}
\end{figure}

\begin{lem}\label{HP-Cshaped1}
Let $C(m, n; k, l; c, d)$ be a $C$-shaped supergrid graph with $a (= m - k )= 1$, and let $s$ and $t$ be its two distinct vertices such that $(C(m,n; k,l; c,d), s, t)$ does not satisfy conditions $\mathrm{(F1)}$, $\mathrm{(F3)}$, and $\mathrm{(F9)}$. Then, $C(m,n; k,l; c,d)$ contains a Hamiltonian $(s, t)$-path, i.e., $HP(C(m,n; k,l; c,d), s, t)$ does exist.
\end{lem}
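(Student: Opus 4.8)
The plan is to exploit the fact that, when $a=1$, the vertical bar of the $C$ degenerates to a single column of vertices. This rigidly constrains the global shape of any Hamiltonian path and lets me reduce the problem to one rectangular and one $L$-shaped subproblem already solved in Lemma~\ref{HamiltonianConnected-Rectangular} and Theorem~\ref{HamiltonianConnected-L-shaped}.

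First I would pin down where $s$ and $t$ can lie. Write $A=R(m,c)$ for the top rectangle (rows $1,\dots,c$), $D=R(m,d)$ for the bottom rectangle (rows $c+l+1,\dots,n$), and let $v_j=(1,c+j)$ for $1\leqslant j\leqslant l$ be the vertices of the bar. Because $a=1$, deleting any $v_j$ separates $A\cup\{v_1,\dots,v_{j-1}\}$ from $\{v_{j+1},\dots,v_l\}\cup D$, so \emph{every} $v_j$ is a cut vertex of $C(m,n;k,l;c,d)$. Since $(C(m,n;k,l;c,d),s,t)$ violates condition (F1), neither $s$ nor $t$ is a cut vertex, hence $s,t\in A\cup D$. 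Moreover, since (F9) fails, $s$ and $t$ are not both in $A$ and not both in $D$. Consequently, after possibly interchanging $s$ and $t$, I may assume $s\in A$ and $t\in D$; the remaining placements (one or both endpoints inside the bar) always make $s$, $t$, or the pair $\{s,t\}$ a cut, so they are already ruled out by (F1).

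Next I would perform a horizontal separation between rows $c$ and $c+1$, splitting the graph into the rectangle $R_1=R(m,c)=A$ containing $s$ and the $L$-shaped remainder $R_2=L(m,l+d;k,l)$ (the bar together with $D$) containing $t$; see Fig.~\ref{fig:HP-SmallSize1}. The only vertex of $R_2$ on the cut is $v_1=(1,c+1)$, so the two pieces can be stitched together only through $v_1$, whose neighbours in $R_1$ are exactly $(1,c)$ and $(2,c)$. I would therefore use Lemma~\ref{HamiltonianConnected-Rectangular} (controlling the terminal boundary edge through Lemma~\ref{HamiltonianConnected-Rectangular-wz_rectangle} when needed) to build a canonical Hamiltonian path $P_1$ of $R_1$ from $s$ to a vertex $p\in\{(1,c),(2,c)\}$, choosing $p$ so that $(R_1,s,p)$ avoids (F1); and Theorem~\ref{HamiltonianConnected-L-shaped} to build a Hamiltonian path $P_2$ of $R_2$ from $q=v_1$ to $t$, after verifying that $(R_2,v_1,t)$ satisfies none of (F1), (F3), (F4). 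Since $p\thicksim v_1$, the concatenation $P_1\Rightarrow P_2$ is the desired Hamiltonian $(s,t)$-path of $C(m,n;k,l;c,d)$. By the top--bottom symmetry of the $C$-shape, the mirror separation (bar $+$ top versus the bottom rectangle) handles any configuration in which it is more convenient to keep $t$ together with the bar.

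The main obstacle will be the degenerate thin cases $c=1$ or $d=1$, where $R_1$ or $R_2$ collapses to a single row and the clean reduction above can fail. When $c=1$ and $m\geqslant 3$ the corner $(m,1)$ has degree $1$ (the hole sits immediately to its lower right), so by (F3) we must already have $s=(m,1)$; the single-row path $(m,1)\rightarrow\cdots\rightarrow(1,1)$ then ends at $p=(1,1)\thicksim v_1$ and the construction goes through, while $d=1$ is symmetric and the tiny widths ($m=2$) are trivial. More generally, I expect the real work to be the bookkeeping that, in every surviving instance, $p\in\{(1,c),(2,c)\}$ can be chosen to keep $(R_1,s,p)$ out of (F1) and that $(R_2,v_1,t)$ stays out of (F1), (F3), (F4); the handful of residual small graphs are disposed of by the explicit patterns indicated in the base cases of Fig.~\ref{fig:HP-SmallSize1}. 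The combinatorial core---separation plus concatenation through the forced neck vertex $v_1$---is a single idea, and the length of the argument comes almost entirely from this case verification.
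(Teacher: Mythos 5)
Your proposal is correct and follows essentially the same route as the paper: the same horizontal separation at row $c$ into $R_1=R(m,c)$ and $R_2=L(m,n-c;k,l)$, the same connector vertices $q=(1,c+1)$ and $p\in\{(1,c),(2,c)\}$ (with $p=(2,c)$ only when $s=(1,c)$), the same appeal to Lemma~\ref{HamiltonianConnected-Rectangular} and Theorem~\ref{HamiltonianConnected-L-shaped}, and the same concatenation $P_1\Rightarrow P_2$. Your opening cut-vertex argument locating $s$ and $t$ on opposite sides of the bar is just a more explicit version of the paper's first observation, and the forbidden-condition bookkeeping you defer is exactly the case check the paper carries out.
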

\begin{proof}
Notice that, here, $(s_y\leqslant c$ and $t_y > c+l)$ or $(t_y\leqslant c$ and $s_y > c+l)$. If $s_y, t_y\leqslant c$, $s_y, t_y > c+l$, or $c+1\leqslant s_y (\mathrm{or}\ t_y)\leqslant c+l$, then $(C(m,n; k,l; c,d), s, t)$ satisfies condition (F1) or (F9). Without loss of generality, assume that $s_y\leqslant c$ and $t_y > c+l$. We make a horizontal separation on $C(m,n; k,l; c,d)$ to obtain two disjoint supergrid subgraphs $R_1 = R(m, c)$, $R_2 = L(m,n-c; k,l)$. Let $p\in V(R_1)$ and $q\in V(R_2)$ such that $p\thicksim q$, $q = (1, c+1)$, and $p = (1, c)$ if $s\neq (1, c)$; otherwise $p = (2, c)$ (see Fig. \ref{fig:HP-SmallSize1}(a)). Consider $(R_1, s, p)$. Condition (F1) holds, if

\begin{itemize}
\item [(i)] $c=1$, $k>1$, and $s_x\neq m$. Clearly, if this case holds, then $(C(m, n; k, l; c, d), s, t)$ satisfies condition (F1), a contradiction.
\item [(ii)] $c=2$ and $s_x=p_x\geqslant 2$. Clearly, in this case, $s_x=p_x=2$. It contradicts that $p=(1, c)$ when $s\neq (1, c)$.
\end{itemize}

\noindent Therefore, $(R_1, s, p)$ does not satisfy condition (F1). Now, consider $(R_2, q, t)$. Since $q_y = c+1$ and $t_y > c+l$, it is enough to show that $(R_2, q, t)$ does not satisfy condition (F1). Condition (F1) holds, if $d = 1$, $k > 1$, and $t_x\neq m$. Clearly, if this case holds, then $(C(m,n; k,l; c,d), s, t)$ satisfies condition (F1), a contradiction. Therefore, $(R_2, q, t)$ does not satisfy conditions (F1), (F3), and (F4). Since $(R_1, s, p)$ and $(R_2, q, t)$ do not satisfy conditions (F1), (F3), and (F4), by Lemma \ref{HamiltonianConnected-Rectangular} and Theorem \ref{HamiltonianConnected-L-shaped}, there exist Hamiltonian $(s, p)$-path $P_1$ and Hamiltonian $(q, t)$-path $P_2$ of $R_1$ and $R_2$, respectively (see Fig. \ref{fig:HP-SmallSize1}(b)). Then, $P = P_1 \Rightarrow P_2$ forms a Hamiltonian $(s, t)$-path of $C(m,n; k,l; c,d)$, as depicted in Fig. \ref{fig:HP-SmallSize1}(c).
\end{proof}

\begin{lem}\label{HP-Cshaped2}
Let $C(m,n; k,l; c,d)$ be a $C$-shaped supergrid graph with $a (= m-k) > 1$, and let $s$ and $t$ be its two distinct vertices such that $(C(m,n; k,l; c,d), s, t)$ does not satisfy conditions $\mathrm{(F1)}$, $\mathrm{(F3)}$, $\mathrm{(F7)}$, and $\mathrm{(F8)}$. Assume that $c = d = 1$. Then, $C(m,n; k,l; c,d)$ contains a Hamiltonian $(s, t)$-path, i.e., $HP(C(m,n; k,l; c,d), s, t)$ does exist when $a\geqslant 2$ and $c = d = 1$.
\end{lem}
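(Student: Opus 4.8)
The plan is to exploit the very thin shape that $c=d=1$ forces. Here $n=l+c+d=l+2$, and $C(m,n;k,l;1,1)$ consists of a solid left block $B=R(a,n)$ occupying columns $1,\dots,a$, together with a one-row top arm on row $1$ and a one-row bottom arm on row $n$, each occupying columns $a+1,\dots,m$. I would first record the degree structure, since it controls everything: every vertex of $B$ has degree $\geq 3$, every interior arm vertex has degree $2$, and the two tips $(m,1)$ and $(m,n)$ have degree exactly $1$ precisely when $k\geq 2$ (when $k=1$ the ``arms'' degenerate to the single cells $(m,1),(m,n)$, which have degree $2$). Note also that, since $a\geq 2$, condition (F9) cannot hold, so the hypotheses already rule out every listed obstruction. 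The argument then splits into $k\geq 2$ and $k=1$.

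For $k\geq 2$ the two tips are the \emph{only} degree-one vertices, so the exclusion of (F3) forces $\{s,t\}=\{(m,1),(m,n)\}$; say $s=(m,1)$ and $t=(m,n)$. I would build the path explicitly: traverse the top arm leftward $(m,1)\to\cdots\to(a+1,1)$, enter $B$ at $(a,1)$, cross $B$ by a Hamiltonian $((a,1),(a,n))$-path supplied by Lemma \ref{HamiltonianConnected-Rectangular}, leave $B$ at $(a,n)$, and run out along the bottom arm to $(m,n)$; concatenating these three pieces with $\Rightarrow$ yields $HP(G,s,t)$. The only thing to check is that $(B,(a,1),(a,n))$ does not satisfy (F1), which is immediate once $a\geq 2$ and $n\geq 3$ because $(a,1),(a,n)$ are two corners of $B$ and their removal leaves $B$ connected.

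For $k=1$ write $B=R(m-1,n)$ and let $p=(m,1)$ and $q=(m,n)$ be the two attached cells; here $p$ adjoins the edge $e_t=((m-1,1),(m-1,2))$ and $q$ adjoins $e_b=((m-1,n-1),(m-1,n))$. The strategy is to produce a Hamiltonian $(s,t)$-path of $B$ containing both $e_t$ and $e_b$ and then to absorb $p$ into $e_t$ and $q$ into $e_b$ by Statement~(3) of Proposition \ref{Pro_Obs}. When $s,t$ avoid the right column of $B$ and $n=3$, such a path is exactly the one delivered by Lemma \ref{HP-3rectangle-boundary_path}, in which $e_t,e_b$ play the role of the right-column edges $e_{12},e_{23}$. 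The remaining placements of $s,t$ — one or both equal to a corner cell, lying in the right column of $B$, or $n\geq 4$ — I would handle by a short case analysis, either carrying an offending corner cell as a path endpoint instead of absorbing it, or performing a horizontal separation that splits $G$ into two $L$-shaped pieces (one per arm), each handled by Theorem \ref{HamiltonianConnected-L-shaped} and recombined through a pair of parallel edges via Statement~(2) of Proposition \ref{Pro_Obs}; in each reduction the block is chosen so that the induced rectangular or $L$-shaped subproblem avoids its own forbidden conditions.

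The main obstacle is precisely this $k=1$ bookkeeping: for every position of $s$ and $t$ I must certify that the subgraph fed to Lemma \ref{HamiltonianConnected-Rectangular}, Lemma \ref{HP-3rectangle-boundary_path}, or Theorem \ref{HamiltonianConnected-L-shaped} avoids (F1) (respectively (F1),(F3),(F4)), and that both corner cells can still be attached. The configurations where this genuinely breaks down are exactly the small graphs isolated by (F7) ($m=3$, $a=2$) and (F8) ($n=3$, $k=c=d=1$); the delicate part of the proof is verifying that, apart from these, excluding (F1), (F3), (F7), and (F8) always leaves a workable reduction.
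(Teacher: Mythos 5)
Your $k\geqslant 2$ argument is correct and is actually cleaner than the paper's treatment of that case: excluding (F3) does pin $\{s,t\}$ to the two degree-one tips, the corner pair $\{(a,1),(a,n)\}$ never violates (F1) in $R(a,n)$ when $a\geqslant 2$ and $n\geqslant 3$, and the arm--block--arm concatenation is a valid Hamiltonian $(s,t)$-path. Likewise your generic $k=1$, $n=3$ case (both terminals off the two rightmost columns) coincides with the paper's own Case 1.2.1, i.e. Lemma \ref{HP-3rectangle-boundary_path} followed by Statement (3) of Proposition \ref{Pro_Obs}. The gap is everything you defer: for $k=1$ the ``short case analysis'' you postpone is the bulk of the lemma, and the toolkit you describe cannot dispose of it.

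Concretely, take $k=c=d=1$, $n=4$ (so $l=2$), $a=m-1>2$, and $s=(x,1)$, $t=(x,2)$ with $2\leqslant x\leqslant a-1$. No forbidden condition holds here ((F7) needs $a=2$, (F8) needs $n=3$, and $\{s,t\}$ is not a vertex cut of $C$ since rows $3$ and $4$ of the block route around it), so a Hamiltonian $(s,t)$-path must be produced. Your primary tool is unavailable: Lemma \ref{HP-3rectangle-boundary_path} is stated only for $3$-rectangles, and no cited result supplies a Hamiltonian $(s,t)$-path of $R(m-1,4)$ containing both prescribed right-column edges. Your fallback fails as well: the only horizontal separations split off rows $\{1,2\}$ or row $\{1\}$; in the first, $s$ and $t$ form a vertical cut pair of the two-row top piece, so that subproblem satisfies (F1) and Theorem \ref{HamiltonianConnected-L-shaped} does not apply, while in the second the top piece is the path graph $R(m,1)$, which has no Hamiltonian path starting at the interior vertex $s$. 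No corner cell is involved, so ``carrying a corner as an endpoint'' does not help either. This is exactly the paper's Case 2.1.2, which it resolves with a hand-built pattern when $a=3$ and, when $a>3$, with a three-piece decomposition (a vertical split of the top piece, combining one Hamiltonian path with two Hamiltonian cycles via Statements (1) and (2) of Proposition \ref{Pro_Obs}) --- constructions of a kind your proposal does not contain. Consequently your closing claim, that the reductions only break down at the (F7)/(F8) graphs, is wrong: the configuration above is not excluded, does admit a Hamiltonian $(s,t)$-path, and defeats every reduction you list.
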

\begin{proof}
We prove this lemma by showing how to construct a Hamiltonian $(s, t)$-path of $C(m,n; k,l; c,d)$. Depending on whether $n = 3$, we consider the following cases:

Case 1: $n = 3$. Notice that, here, if $k>1$, then $s_x=t_x=m$. If $k>1$ and $s_x\neq m$ and $t_x\neq m$, then $(C(m,n; k,l; c,d), s, t)$ satisfies (F1) or (F3). Consider the positions of $s$ and $t$, there are the following two subcases:

\hspace{0.5cm}Case 1.1: $s_x > a$ or $t_x > a$. Without loss of generality, assume that $t_x > a$ and $t_y=1$. We make a vertical and horizontal separations on $C(m,n; k,l; c,d)$ to obtain two disjoint supergrid subgraphs $R_2 = R(k, c)$ and $R_1 = L(m,n; k,c+l)$, as depicted in Fig. \ref{fig:HP1}(a)--(c). Let $p\in V(R_1)$ and $q\in V(R_2)$ such that $p\thicksim q$, $q = (a+1, 1)$, and $p = (a, 1)$ if $s\neq (a, 1)$; otherwise $p = (a, 2)$ (see Fig. \ref{fig:HP1}(a)--(c)). Here, if $|R_2| = 1$ (i.e, $k = 1)$, then $q = t$. Consider $(R_1, s, p)$. Condition (F1) holds, if $a = 2$ and $s_y = p_y\geqslant 2$, or $p=(a, 2)$ and $s=(a, 3)$. Clearly, in any case, it contradicts that $p=(a, 1)$ when $s\neq (a, 1)$. Condition (F3) holds, if $k > 1$ and $s_x\leqslant a$. If this case holds, then $(C(m, n; k, l; c, d), s, t)$ satisfies (F3), a contradiction. Condition (F4) holds, if $n=3$, $a=2$, $k=1$, $s = (1, n)$, and $p = (a, 2)$. It contradicts that $p = (a, 1)$ when $s\neq (a, 1)$. Thus, $(R_1, s, p)$ does not satisfy conditions (F1), (F3), and (F4). Now, consider $(R_2,q,t)$. Since $q =(a+1, 1)$ and $t = (m, 1)$, it is clear that $(R_2, q, t)$ does not satisfy condition (F1). A Hamiltonian $(s, t)$-path of $C(m,n; k,l; c,d)$ can be constructed by similar arguments in proving Lemma \ref{HP-Cshaped1}, as shown in Fig. \ref{fig:HP1}(d).

\begin{figure}[h]
\centering
\includegraphics[scale=1]{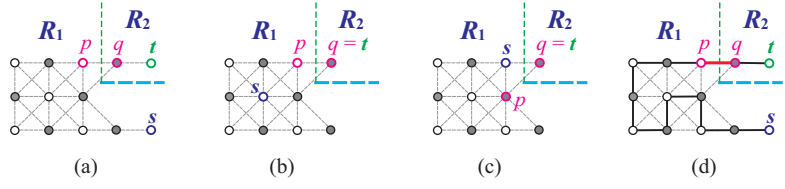}
\caption[]{(a)--(c) A vertical and horizontal separations on $C(m,n; k,l; c,d)$ under that $a\geqslant 2$, $c=d=1$, $n=3$, and $t_x > a$, and (d) a Hamiltonian $(s, t)$-path in $C(m,n; k,l; c,d)$ for (a), where bold lines indicate the constructed Hamiltonian path.}
\label{fig:HP1}
\end{figure}

\hspace{0.5cm}Case 1.2: $s_x, t_x\leqslant a$. In this subcase, $k = 1$ and hence $a = m-1$. If $k > 1$ and $s_x, t_x\leqslant a$, then $(C(m,n; k,l; c,d), s, t)$ satisfies (F3). Since $(C(m,n; k,l; c,d), s, t)$ does not satisfy condition (F8), $s_x\neq m-1$. Note that we assume that $s_x\leqslant t_x$. Then, there are two subcases:

\hspace{1.0cm}Case 1.2.1: $s_x, t_x < m-1$. Let $z_1 = (m-1, 1)$, $z_2 = (m-1, 2)$, and $z_3 = (m-1, 3)$. Let $e_1 = (z_1, z_2)$ and $e_2 = (z_2, z_3)$. We make a vertical and horizontal separations on $C(m,n; k,l; c,d)$ to obtain three disjoint rectangular supergrid subgraphs $R_1 = R(a, n)$, $R_2 = R(k, c)$, and $R_3 = R(k, d)$; as depicted in Fig. \ref{fig:HP2}(a). Assume that $v\in V(R_2)$ and $u\in V(R_3)$. By Lemma \ref{HP-3rectangle-boundary_path}, where $a>2$, or Lemma \ref{HamiltonianConnected-Rectangular}, where $a=2$, $R_1$ contains a Hamiltonian
$(s,t)$-path $P_1$ such that $e_1,e_2\in P_1$ (see Fig. \ref{fig:HP2}(b)). Then, $P_1$, $u$, $v$ can be combined into a Hamiltonian $(s, t)$-path of $C(m, n; k, l; c, d)$ by Statement (3) of Proposition \ref{Pro_Obs} (see Fig. \ref{fig:HP2}(c)).

\begin{figure}[h]
\centering
\includegraphics[scale=1]{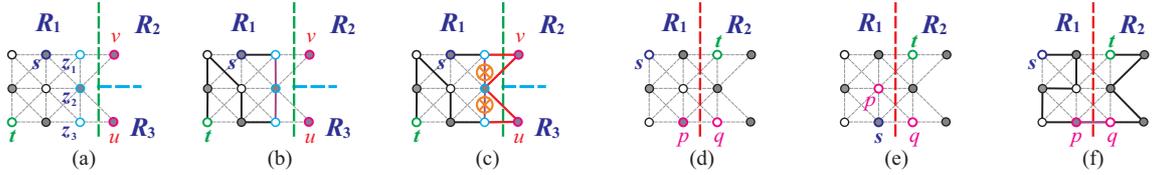}
\caption[]{(a) A vertical and horizontal separations on $C(m,n; k,l; c,d)$ under that $a\geqslant 2$, $c=d=1$, $n=3$, and $s_x, t_x < m-1$, (b) a Hamiltonian $(s, t)$-path in $R_1$ for (a), (c) a Hamiltonian $(s, t)$-path of $C(m,n; k,l; c,d)$ for (a), (d)--(e) a vertical separation on $C(m,n; k,l; c,d)$ for $a\geqslant 2$, $c=d=1$, $n=3$, $s_x< m-1$, and $t_x = m-1$, and (f) a Hamiltonian $(s, t)$-path of $C(m,n; k,l; c,d)$ for (d), where bold lines indicate the constructed Hamiltonian path and $\otimes$ represents the destruction of an edge while constructing a Hamiltonian cycle.}
\label{fig:HP2}
\end{figure}

\hspace{1.0cm}Case 1.2.2: $s_x < m-1$ and $t_x = m-1$. Since $(C(m,n; k,l; c,d), s, t)$ does not satisfy conditions (F1), (F7), and (F8), we have that (1) $t=(m-1, 1)$ or $t=(m-1,3)$ if $m-1>2$, and (2) $s_y=t_y=1$ or $s_y= t_y=3$ if $m-1=2$. Thus, $t_y = 1$ or $t_y = 3$. By symmetry, assume that $t = (m-1, 1)$. We make a vertical separation on $C(m,n; k,l; c,d)$ to obtain two disjoint supergrid subgraphs $R_1 = R( m-2, n)$ and $R_2 = C(2,n; k,l; c,d)$, as depicted in Fig. \ref{fig:HP2}(d) and \ref{fig:HP2}(e). Let $q\in V(R_2)$ and $p\in V(R_1)$ such that $p\thicksim q$, $q = (m-1, 3)$, and $p = (m-2, 3)$ if $s\neq (m-2, 3)$; otherwise $p = (m-2, 2)$ (see Fig. \ref{fig:HP2}(d)--(e)).

Notice that, here, if $s = (m-2, 3)$, then $m\geqslant 4$. If $m = 3$ and $s = (m-2, 3)$, then $(C(m,n; k,l; c,d), s, t)$ satisfies condition (F8). Consider $(R_2,q,t)$. Since $a=d=c=1$, $t=(m-1, 1)$, and $q=(m-1, 3)$, clearly $(R_2, q, t)$ does not satisfy conditions (F1), (F3), and (F9). Here, $(R_2, q, t)$ lies on Lemma \ref{HP-Cshaped1}. Now, consider $(R_1, s, p)$. Condition (F1) holds, if

\begin{itemize}
\item [(i)] $m-2 = 1$ and $ s = (m-2, 2)$. In this case, $(C(m,n; k,l; c,d), s, t)$ satisfies condition (F7), a contradiction.
\item [(ii)] $m-2=2$ and $s_y=p_y=2$. Clearly if $s_y=2$, then $p_y=3$. Hence, $s_y\neq p_y$.
\end{itemize}

\noindent Therefore, $(R_1, s, p)$ does not satisfy condition (F1). By Lemmas \ref{HamiltonianConnected-Rectangular} and \ref{HP-Cshaped1}, $R_1$ and $R_2$ contain Hamiltonian $(s, p)$-path $P_1$ and $(q, p)$-path $P_2$, respectively. Then, $P = P_1 \Rightarrow P_2$ forms a Hamiltonian $(s, t)$-path of $C(m,n; k,l; c,d)$, as depicted in Fig. \ref{fig:HP2}(f).

Case 2: $n > 3$. Since $n > 3$ and $c = d =1$, it follows that $l > 1$. We make a horizontal separation on $C(m,n; k,l; c,d)$ to obtain two disjoint
supergrid subgraphs $R_1 = L( m,n-n_2; k,l_1)$ and $R_2 = L(n_2,m; l_2,k)$, where $n_2 = c+l-1$, $l_1 = c+l-n_2$, and $l_2 =l-l_1$ (see Fig. \ref{fig:HP3}(a)--c)). Since $n_2 = c+l-1$ and $d = 1$, clearly $n - n_2 = 2$. Also since $n-n_2 = 2$ and $n > 3$, it follows that $n_2 \geqslant 2$. Depending on the positions of $s$ and $t$, there are the following two subcases:

\begin{figure}[h]
\centering
\includegraphics[scale=1]{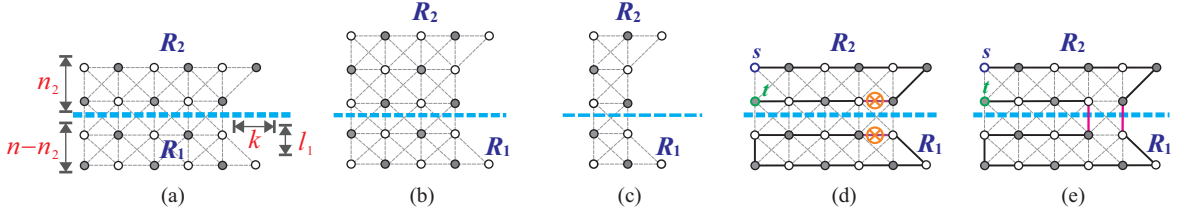}
\caption[]{(a)--(c) A horizontal separation on $C(m, n; k, l; c, d)$ for $a\geqslant 2$, $c=d=1$, and $n > 3$, (d) a Hamiltonian $(s,t)$-path in $R_2$ and a Hamiltonian cycle in $R_1$, and (e) a Hamiltonian $(s,t)$-path in $C(m,n; k,l; c,d)$.}
\label{fig:HP3}
\end{figure}

\hspace{0.5cm}Case 2.1: $s, t\in R_1$ or $s, t\in R_2$. Without loss of generality, assume that $s, t\in R_2$. Here, $k = 1$. If $k > 1$, then $(C(m,n; k,l; c,d), s, t)$ satisfies condition (F3).

\hspace{1.0cm}Case 2.1.1: $(n_2 > 2)$ or $(n_2 = 2$ and $[(s_x\neq t_x)$ or $(s_x=t_x=1)])$. Since $k=1$,  it is enough to show that $(R_2, s, t)$ does not satisfy conditions (F1) and (F4). Condition (F1) holds, if

\begin{itemize}
\item [(i)] $2\leqslant s_x=t_x\leqslant a$. By assumption this case does not occur.
\item [(ii)] $a = 2$ and $2\leqslant s_y=t_y\leqslant n_2$. Clearly if this case holds, then $(C(m,n; k,l; c,d), s, t)$ satisfies condition (F1), a contradiction.
\end{itemize}

\noindent Condition (F4) holds, if $a=2$, $n_2 > 2$, $s_x,t_x\leqslant 2$, $s_y,t_y\leqslant 2$, $s_x\neq t_x$, and $s_y\neq t_y$. Clearly if this case holds, then $(C(m,n; k,l; c,d), s, t)$ satisfies condition (F7), a contradiction. Therefore, $(R_2, s, t)$ does not satisfy conditions (F1), (F3), and (F4). Since $(R_2, s, t)$ does not satisfy conditions (F1), (F3), and (F4). By Theorem \ref{HamiltonianConnected-L-shaped}, $R_2$ contains a Hamiltonian $(s, t)$-path $P_2$ in which one edge $e_2$ is placed to face $R_1$. By Theorem \ref{HC-Lshaped}, $R_1$ contains a Hamiltonian cycle $HC_1$ such that its one flat face is placed to face $R_2$. Then, there exist two edges $e_1 \in HC_1$ and $e_2\in P_2$ such that $e_1\thickapprox e_2$ (see Fig. \ref{fig:HP3}(d)). By Statement (2) of Proposition \ref{Pro_Obs}, $P_2$ and $HC_1$ can be combined into a Hamiltonian $(s, t)$-path of $C(m,n; k,l; c,d)$. The construction of a such Hamiltonian path is depicted in Fig. \ref{fig:HP3}(e).

\hspace{1.0cm}Case 2.1.2: $n_2 = 2$ and $s_x,t_x > 1$. In this case, $n=4$, $a>2$, $l=2$, and $2\leqslant s_x=t_x\leqslant a-1$. If $s_x=t_x=a$, then $(C(m,n; k,l; c,d), s, t)$ satisfies condition (F1). Depending on whether $a=3$, we consider the following two subcases:

\begin{figure}[h]
\centering
\includegraphics[scale=1]{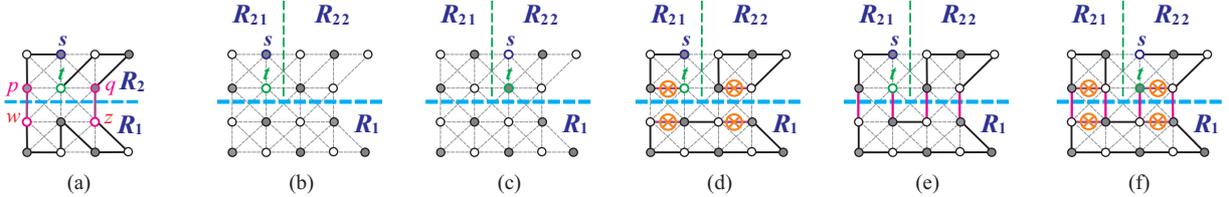}
\caption[]{(a) The pattern for constructing a Hamiltonian $(s, t)$-path in $C(m,n; k,l; c,d)$ under that $c=d=1$, $n=4$, $2\leqslant s_x=t_x\leqslant a-1$, and $a=3$, (b)--(c) a vertical and horizontal separations on $C(m, n; k, l; c, d)$ under that $c=d=1$, $n=4$, $2\leqslant s_x=t_x\leqslant a-1$, and $a>3$, (d) a Hamiltonian $(s,t)$-path in $R_{21}$ and Hamiltonian cycles in $R_1$ and $R_{22}$ for (b), (e) a Hamiltonian $(s,t)$-path in $C(m,n; k,l; c,d)$ for (b), and (f) a Hamiltonian $(s,t)$-path in $C(m,n; k,l; c,d)$ for (c).}
\label{fig:HP4}
\end{figure}

\hspace{1.5cm}Case 2.1.2.1: $a=3$. In this subcase, we can construct a Hamiltonian $(s,t)$-path by the pattern shown in Fig. \ref{fig:HP4}(a). Consider Fig. \ref{fig:HP4}(a). Since $w=(1,3)$ and $z=(3,3)$, clearly $(R_1, w, z)$ does not satisfy conditions (F1), (F3), and (F4). Thus by Theorem \ref{HamiltonianConnected-L-shaped}, $R_1$ contains a Hamiltonian $(w, z)$-path $P_1$. For $(R_2, s, t)$, we can construct two paths $P_{21}$ and $P_{22}$ such that $P_{21}$ connects $s$ and $p$, $P_{22}$ connects $q$ and $t$, $V(P_{21})\cap V(P_{22})=\emptyset$, $V(P_{21}\cup P_{22})=V(R_2)$, $p\thicksim w$, and $q\thicksim z$, as shown in Fig. \ref{fig:HP4}(a). Then, $P=P_{21} \Rightarrow P_1\Rightarrow P_{22}$ forms a Hamiltonian $(s, t)$-path of $C(m,n; k,l; c,d)$ (see Fig. \ref{fig:HP4}(a)).

\hspace{1.5cm}Case 2.1.2.2: $a>3$. We make a vertical separation on $R_2$ to obtain two disjoint subgraphs $R_{21}=R(m', n_2)$ and $R_{22}=L(n_2,m-m'; 1,1)$, where $m'=s_x$ if $s_x<a-1$; otherwise $m'=s_x-1$; as shown in Fig. \ref{fig:HP4}(b)--(c). First, let $s,t\in R_{21}$ and consider Fig. \ref{fig:HP4}(b). Clearly since $s_x=t_x=m'$, $(R_{21}, s, t)$ does not satisfy condition (F1). By Lemma \ref{HamiltonianConnected-Rectangular}, $R_{21}$ contains a canonical Hamiltonian $(s,t)$-path $P_{21}$. Then, there exists one edge $e_{21} \in P_{21}$ that is placed to face $R_1$. By Theorem \ref{HC-Lshaped}, $R_{22}$ and $R_1$ contain Hamiltonian cycle $HC_{22}$ and $HC_1$, respectively. Using the algorithm of \cite{Keshavarz19a}, we can construct $HC_{22}$ and $HC_1$ to satisfy that one flat face of $HC_1$ is placed to face $R_{21}$ and $R_{22}$, and one flat face of $HC_{22}$ is placed to face $R_1$ (see Fig. \ref{fig:HP4}(d)). Then, there exist four edges $e_1, e_2\in HC_1$, $e_{21}\in P_{21}$, and $e_{22}\in HC_{22}$ such that $e_1\thickapprox e_{21}$ and $e_2\thickapprox e_{22}$. By Statements (1) and (2) of Proposition \ref{Pro_Obs}, $P_{21}$, $HC_1$, and $HC_{22}$ can be combined into a Hamiltonian $(s, t)$-path of $C(m, n; k, l; c,d)$. The construction of a such Hamiltonian path is depicted in Fig. \ref{fig:HP4}(e). Now, let $s, t\in R_{22}$ and consider Fig. \ref{fig:HP4}(c). Since $s_x=t_x=1$ and $s_x,t_x < a$ (in $R_{22}$), it is clear that $(R_{22}, s, t)$ does not satisfy conditions (F1), (F3), and (F4). By similar arguments in proving $s,t\in R_{21}$, a Hamiltonian $(s, t)$-path of $C(m, n; k, l; c,d)$ can be constructed (see Fig. \ref{fig:HP4}(f)).

\hspace{0.5cm}Case 2.2: ($s\in R_1$ and $t\in R_2$) or ($t\in R_1$ and $s\in R_2$). Without loss of generality, assume that $t\in R_1$ and $s\in R_2$. Let $q\in V(R_1)$ and $p\in V(R_2)$ such that $p\thicksim q$, where

$$\begin{cases}
p=(1, n_2)\ \mathrm{and}\ q=(1, n_2+1),   &  \mathrm{if}\ s\neq (1, n_2)\ \mathrm{and}\ t\neq (1, n_2+1);\\
p=(2, n_2)\ \mathrm{and}\ q=(2, n_2+1),   &  \mathrm{if}\ s= (1, n_2)\ \mathrm{and}\ t= (1, n_2+1);\\
p=(2, n_2)\ \mathrm{and}\ q=(1, n_2+1),   &  \mathrm{if}\ s= (1, n_2)\ \mathrm{and}\ t\neq (1, n_2+1);\\
p=(1, n_2)\ \mathrm{and}\ q=(2, n_2+1),   &  \mathrm{otherwise.}\\
\end{cases}$$

\noindent Consider $(R_2, s, p)$ and $(R_1, q, t)$. Condition (F1) holds, if

\begin{itemize}
\item [(i)] $(n_2=2$ and $s_x=p_x=2)$ or $(n-n_2=2$ and $q_x=t_x=2)$. Obviously in this case, $s_x=p_x=2$ and $q_x=t_x=2$. It contradicts that $p_x=q_x=1$ when $s_x\neq 1$ and $t_x\neq 1$.
\item [(ii)] $k > 1$ and $a < s_x,t_x < m$. If this case holds, then $(C(m,n; k,l; c,d), s, t)$ satisfies condition (F1), a contradiction.
\end{itemize}

\noindent Condition (F3) holds, if $k > 1$ and $s_x, t_x\leqslant a$. If this case holds, then $(C(m,n; k,l; c,d), s, t)$ satisfies condition (F3), a contradiction. Condition (F4) holds, if $a=2$ and $[(n_2 > 2$ and $s_y, p_y\leqslant 2)$ or $(n-n_2 > 2$ and $q_y, t_y\geqslant n-1)]$. A simple check shows that these cases do not occur. Therefore, $(R_2, s, p)$ and $(R_1, q, t)$ do not satisfy conditions (F1), (F3), and (F4). A Hamiltonian $(s, t)$-path of $C(m,n; k,l; c,d)$ can be constructed by similar arguments in proving Lemma \ref{HP-Cshaped1}. Notice that, here, $R_1$ and $R_2$ are $L$-shaped supergrid graphs.
\end{proof}

\begin{lem}\label{HP-Cshaped3}
Let $C(m,n; k,l; c,d)$ be a $C$-shaped supergrid graph with $a (= m-k) > 1$, and let $s$ and $t$ be its two distinct vertices such that $(C(m,n; k,l; c,d), s, t)$ does not satisfy conditions $\mathrm{(F1)}$, $\mathrm{(F3)}$, $\mathrm{(F7)}$, and $\mathrm{(F8)}$. Assume that $c > 1$ or $d > 1$. Then, $C(m,n; k,l; c,d)$ contains a Hamiltonian $(s, t)$-path, i.e., $HP(C(m,n; k,l; c,d), s, t)$ does exist when $a\geqslant 2$ and $(c > 2$ or $d > 2)$.
\end{lem}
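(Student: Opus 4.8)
The plan is to dispose of this last case, as in the Hamiltonicity theorem and in Lemma~\ref{HP-Cshaped2}, by a single horizontal separation together with the gluing operations of Proposition~\ref{Pro_Obs}. First I would exploit the vertical symmetry of the $C$-shape, which interchanges the roles of the upper block (height $c$) and the lower block (height $d$) and leaves each still-active forbidden condition invariant, to assume without loss of generality that $d\geqslant 2$; this is legitimate because the hypothesis $c>1$ or $d>1$ guarantees $\max\{c,d\}\geqslant 2$. I would also record at the outset that, under the present hypotheses, condition (F8) cannot hold (it forces $c=d=1$) and condition (F9) cannot hold (it forces $a=1$), so the only conditions I must actually keep out of the subproblems are (F1), (F3), and (F7). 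I then cut along the horizontal line between rows $c+l$ and $c+l+1$, producing the vertex-disjoint subgraphs $R_1=L(m,c+l;k,l)$ --- a genuine $L$-shaped graph, since $a\geqslant 2$ and $k,l\geqslant 1$ --- and the rectangle $R_2=R(m,d)$, where $m=a+k\geqslant 3$ and $d\geqslant 2$.

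The second step is a case analysis on the sides containing $s$ and $t$. If $s,t\in R_2$, I would build a Hamiltonian $(s,t)$-path of the rectangle $R_2$ by Lemma~\ref{HamiltonianConnected-Rectangular} and a Hamiltonian cycle of $R_1$ by Theorem~\ref{HC-Lshaped}, arrange a flat face of the cycle against the top boundary of $R_2$, and fuse them along a pair of parallel boundary edges by Statement~(2) of Proposition~\ref{Pro_Obs}. If $s,t\in R_1$, I would do the mirror construction: a (canonical) Hamiltonian $(s,t)$-path of the $L$-shaped $R_1$ by Theorem~\ref{HamiltonianConnected-L-shaped}, whose canonical form carries a boundary edge along the bottom boundary facing $R_2$, together with a canonical Hamiltonian cycle of $R_2$ by Lemma~\ref{HC-rectangular_supergrid_graphs}, again merged by Statement~(2) of Proposition~\ref{Pro_Obs}. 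Finally, if $s$ and $t$ lie on opposite sides, say $s\in R_1$ and $t\in R_2$, I would select a pair of adjacent bridging vertices $p\in R_1$ and $q\in R_2$ on the common boundary, taken one step away from $s$ and $t$ as in Lemmas~\ref{HP-Cshaped1} and~\ref{HP-Cshaped2}, construct a Hamiltonian $(s,p)$-path of $R_1$ and a Hamiltonian $(q,t)$-path of $R_2$, and concatenate them into $P=P_1\Rightarrow P_2$.

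The bulk of the work --- and what I expect to be the main obstacle --- is verifying that in each case the two subproblems avoid their own forbidden conditions, so that the cited existence results genuinely apply. For each subgraph I would enumerate the configurations under which $(R_1,\cdot,\cdot)$ or $(R_2,\cdot,\cdot)$ satisfies (F1) (a cut vertex or a two-vertex cut), (F3) (a degree-$1$ vertex distinct from both endpoints), or, for the $L$-shaped piece, (F4), and then show that each such configuration either is excluded by the choice of bridging vertices or forces $(C(m,n;k,l;c,d),s,t)$ to satisfy one of (F1), (F3), (F7), contrary to hypothesis. The delicate points I anticipate are: checking that every narrow place of $R_1$ (the width-$a$ spine and the attachment of the upper-right block) is also a narrow place of the whole $C$-shape, so that a cut of a subgraph is never \emph{healed} by the reattached block and therefore really does reflect a cut of $C$; the thin subcase $c=1$, where $R_1$ keeps only a single full-width row above the notch and one must confirm that the $L$-shaped path and the rectangular cycle still present parallel boundary edges to glue; and the fact that $C$ may legitimately carry a degree-$1$ vertex located at $s$ or $t$ (permitted by (F3)), which must be steered to an endpoint of the appropriate subpath. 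Once these finitely many configurations are matched against Lemmas~\ref{HC-rectangular_supergrid_graphs} and~\ref{HamiltonianConnected-Rectangular} and Theorems~\ref{HamiltonianConnected-L-shaped} and~\ref{HC-Lshaped}, the gluings by Proposition~\ref{Pro_Obs} and the concatenation $\Rightarrow$ yield the desired Hamiltonian $(s,t)$-path in every case.
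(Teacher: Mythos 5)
Your overall plan---the horizontal cut between rows $c+l$ and $c+l+1$ into an $L$-shaped block $L(m,c+l;k,l)$ and a rectangle $R(m,d)$, the case split on where $s$ and $t$ lie, and the gluing/concatenation tools of Proposition~\ref{Pro_Obs}---is exactly the paper's. The gap is in the same-side cases: your argument rests on the claim that whenever a subproblem satisfies (F1), the configuration is ``either excluded by the choice of bridging vertices or forces $(C(m,n;k,l;c,d),s,t)$ to satisfy one of (F1), (F3), (F7),'' i.e.\ that a cut of a subgraph is never healed by the reattached block. That claim is false. Take $d=2$ with both endpoints in the rectangle, $s=(s_x,c+l+1)$, $t=(s_x,c+l+2)$, and $1<s_x=t_x\leqslant a$. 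Then $\{s,t\}$ is a vertex cut of $R(m,2)$, so by Lemma~\ref{Lemma:1b} the rectangle has no Hamiltonian $(s,t)$-path at all and your ``path in the rectangle plus cycle in the $L$-shape'' construction cannot start; yet $\{s,t\}$ is \emph{not} a vertex cut of the whole $C$-shaped graph, because both halves of the severed rectangle reattach to the $L$-shaped block above (for $s_x=a$ the right half reattaches through the crossed edge between $(a+1,c+l+1)$ and $(a,c+l)$), so no hypothesis of the lemma is violated. Since both endpoints lie on the same side, bridging vertices are not available to exclude the configuration either.

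This healed-cut configuration is precisely where the paper does its real work (Case 1.2 of its proof): it separates the rectangle again, vertically at column $s_x$, into two pieces $R_{11}$ and $R_{12}$, picks adjacent crossing pairs $p\thicksim z$ and $q\thicksim w$ joining these pieces to the $L$-shaped block, and concatenates \emph{three} Hamiltonian paths, $P_{12}\Rightarrow P_2\Rightarrow P_{11}$, so that the $(s,t)$-path leaves the rectangle, traverses the entire $L$-shaped block, and returns---something a single path--cycle merge via Statement (2) of Proposition~\ref{Pro_Obs} cannot produce. The mirror-image difficulty arises in your case $s,t\in L(m,c+l;k,l)$ when $c=l=1$: the $L$-shaped block then has height $2$, and $1<s_x=t_x\leqslant a-1$ gives a cut of the block healed by the rectangle below, so Theorem~\ref{HamiltonianConnected-L-shaped} is inapplicable; the paper disposes of this by the separate pattern constructions of Case 2.1.2 of Lemma~\ref{HP-Cshaped2}. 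You correctly flagged ``cuts healed by the reattached block'' as the delicate point, but you guessed the wrong resolution: the paper does not show such healing never happens---it happens, and the proof needs the extra three-piece (and pattern) constructions that your proposal lacks.
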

\begin{proof}
Without loss of generality, assume that $d > 1$. Since $d > 1$, $c\geqslant 1$, and $l\geqslant 1$, thus $n\geqslant 4$. We make a horizontal separation on $C(m,n; k,l; c,d)$ to obtain two disjoint supergrid subgraphs $R_2 = L(n_2,m; l,k)$ and $R_1 = R(m, n-n_2)$, where $n_2 = c+l$ and $n-n_2=d$ (see Fig. \ref{fig:HP5}(a)--(c)). Since $n\geqslant 4$, $n_2 = c+l$, $c, l\geqslant 1$, and $d > 1$, it follows that $n-n_2, n_2\geqslant 2$. Depending on the positions of $s$ and $t$, there are the following three subcases:

\begin{figure}[h]
\centering
\includegraphics[width=1.0\textwidth]{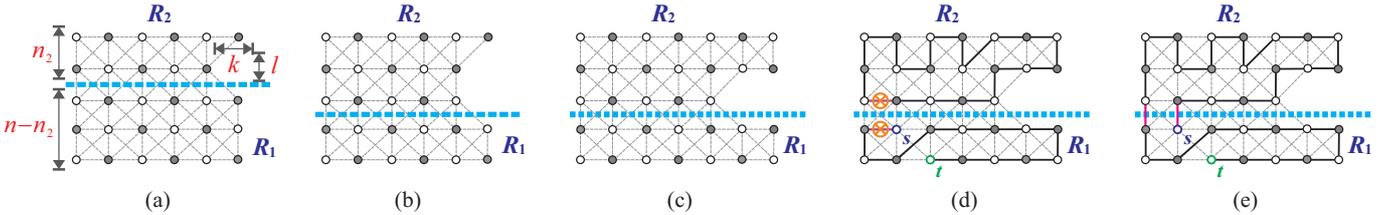}
\caption[]{(a)--(c) A horizontal separation on $C(m,n; k,l; c,d)$ under that $a > 1$ and $d > 1$, (d) a Hamiltonian $(s, t)$-path in $R_1$ and a Hamiltonian cycle in $R_2$ for $s, t\in R_1$ and $\{s, t\}$ is not a vertex cut of $R_1$, and (e) a Hamiltonian $(s,t)$-path in $C(m,n; k,l; c,d)$ for (d).}
\label{fig:HP5}
\end{figure}

Case 1: $s,t\in R_1$. In this case, ($c > 1$) or ($c = 1$ and $k = 1$). If $c = 1$ and $k > 1$, then $(C(m,n; k,l; c,d), s, t)$ satisfies condition (F3), a contradiction. Depending on the size of $n-n_2$, we consider the following two subcases:

\hspace{0.5cm}Case 1.1: ($n-n_2 > 2$) or ($n-n_2=2$ and $[(s_x\neq t_x)$, $(s_x=t_x=1)$, or $(s_x=t_x=m)]$). In this subcase, $\{s, t\}$ is not a vertex cut of $R_1$. Then, $(R_1, s, t)$ does not satisfy condition (F1). By Lemma \ref{HamiltonianConnected-Rectangular}, $R_1$ contains a canonical Hamiltonian $(s,t)$-path $P_1$. Using the algorithm of \cite{Hung17a}, we can construct a Hamiltonian $(s,t)$-path $P_1$ of $R_1$ in which one edge $e_1$ is placed to face $R_2$. By Theorem \ref{HC-Lshaped}, $R_2$ contains a Hamiltonian cycle $HC_2$. Using the algorithm of \cite{Keshavarz19a}, we can construct $HC_2$ such that its one flat face is placed to face $R_1$. Then, there exist two edges $e_1 \in P_1$ and $e_2\in HC_2$ such that $e_1 \thickapprox e_2$ (see Fig. \ref{fig:HP5}(d)). By Statement (2) of Proposition \ref{Pro_Obs}, $P_1$ and $HC_2$ can be combined into a Hamiltonian $(s, t)$-path of $C(m,n; k,l; c,d)$. The construction of a such Hamiltonian path is depicted in Fig. \ref{fig:HP5}(e).

\begin{figure}[h]
\centering
\includegraphics[scale=1.0]{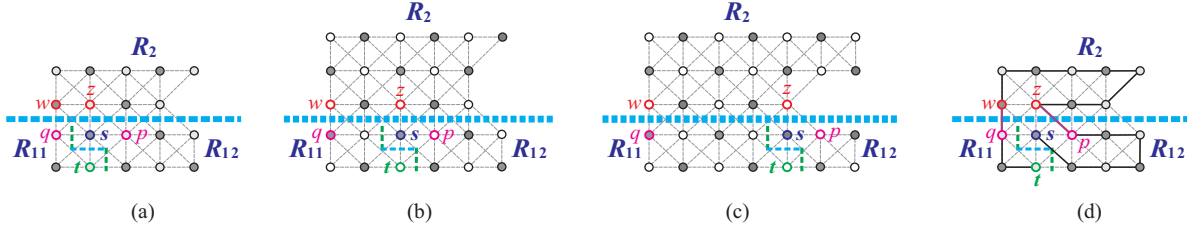}
\caption[]{(a)--(c) A vertical and horizontal separations on $R_1$ under that $s, t\in R_1$ and $\{s, t\}$ is a vertex cut of $R_1$, and (d) a Hamiltonian $(s,t)$-path in $C(m,n; k,l; c,d)$ for (a).}
\label{fig:HP6}
\end{figure}

\hspace{0.5cm}Case 1.2: $n-n_2 = 2$ and $1 < s_x=t_x < m$. In this subcase, $\{s, t\}$ is a vertex cut of $R_1$. Notice that, here, $s_x, t_x\leqslant a$. If $s_x, t_x > a$ and $k > 1$, then $(C(m,n; k,l; c,d), s, t)$ satisfies condition (F1). Without loss of generality, assume that $s_y < t_y$. We make a vertical and horizontal separations on $R_1$ to obtain two disjoint supergrid subgraphs $R_{11} = L(s_x, n-n_2; 1,1)$ and  $R_{12} = R_1\setminus R_{11}$. Let $p\in V(R_{12})$, $q\in V(R_{11})$, and $w, z\in V(R_2)$ such that $p\thicksim z$, $q\thicksim w$, $q = (1, n_2+1)$, $w = (1, n_2)$, $p = (s_x+1, n_2+1)$, and $z = (s_x, n_2)$ (see Fig. \ref{fig:HP6}(a)--(c)). A simple check shows that $(R_{12}, s, p)$, $(R_2, z, w)$, and $(R_{11}, q, t)$ do not satisfy conditions (F1), (F3), and (F4). Thus, by Theorem \ref{HamiltonianConnected-L-shaped}, $R_{12}$, $R_2$, and $R_{11}$ contain a Hamiltonian $(s, p)$-path $P_{12}$, Hamiltonian $(z, w)$-path $P_2$, and Hamiltonian $(q, t)$-path $P_{11}$, respectively. Then, $P = P_{12}\Rightarrow P_2\Rightarrow P_{11}$ forms a Hamiltonian $(s, t)$-path of $C(m,n; k,l; c,d)$. The construction of a such Hamiltonian path is depicted in Fig. \ref{fig:HP6}(d).

Case 2: $s,t\in R_2$. A Hamiltonian $(s, t)$-path of $(C(m,n; k,l; c,d)$ can be constructed by similar arguments in proving Case 2.1 of Lemma \ref{HP-Cshaped2} (see Fig. \ref{fig:HP7}). Notice that, in this case, $R_1$ is a rectangular supergrid graph.

\begin{figure}[h]
\centering
\includegraphics[scale=1.0]{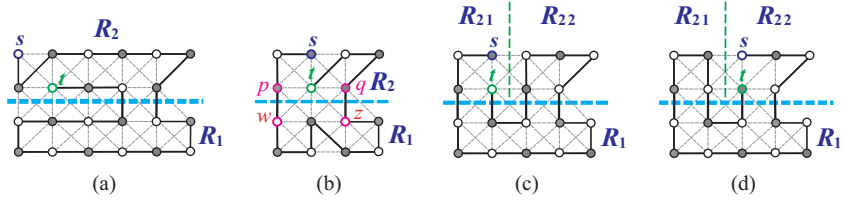}
\caption[]{A Hamiltonian $(s,t)$-path in $C(m,n;k,l;c,d)$ under that $a, d > 1$ and $s,t\in R_2$.}
\label{fig:HP7}
\end{figure}

Case 3: ($s\in R_1$ and $t\in R_2$) or ($s\in R_2$ and $t\in R_1$). A Hamiltonian $(s, t)$-path of $(C(m,n; k,l; c,d)$ can be constructed by similar arguments in proving Case 2.2 of Lemma \ref{HP-Cshaped2}. Notice that, in this case, $R_1$ is a rectangular supergrid graphs.
\end{proof}

From Lemmas \ref{Necessary-condition-Cshaped}--\ref{HP-Cshaped3}, it immediately follows that the following theorem holds true.

\begin{thm}\label{HP-Theorem-Cshaped}
Let $C(m,n; k,l; c,d)$ be a $C$-shaped supergrid graph with vertices $s$ and $t$. $C(m,n; k,l; c,d)$ contains a Hamiltonian $(s, t)$-path, i.e., $HP(C(m,n; k,l; c,d), s, t)$ does exist if and only if $(C(m,n; k,l; c,d), s, t)$ does not satisfy conditions $\mathrm{(F1)}$, $\mathrm{(F3)}$, $\mathrm{(F7)}$, $\mathrm{(F8)}$, and $\mathrm{(F9)}$.
\end{thm}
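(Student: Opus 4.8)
The plan is to treat this as an assembly of the preceding lemmas, since necessity and sufficiency have already been established piecewise. For the necessity direction I would simply invoke Lemma \ref{Necessary-condition-Cshaped}, which asserts that existence of $HP(C(m,n;k,l;c,d),s,t)$ forces $(C(m,n;k,l;c,d),s,t)$ to avoid all of (F1), (F3), (F7), (F8), and (F9); the contrapositive gives one direction of the biconditional with no further work.

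For the sufficiency direction I would assume that $(C(m,n;k,l;c,d),s,t)$ violates none of the five forbidden conditions and run a case analysis on $a=m-k$, matching each case to the lemma that covers it. In the case $a=1$, I would observe that (F7) requires $a=2$ and that every subcase of (F8) requires $a\geq 2$, so both are vacuously unsatisfied; the remaining hypothesis, avoidance of (F1), (F3), and (F9), is exactly the hypothesis of Lemma \ref{HP-Cshaped1}, which delivers the path. In the case $a\geq 2$, condition (F9) requires $a=1$ and is therefore vacuous, so the effective hypothesis is avoidance of (F1), (F3), (F7), and (F8); I would then split on the hole position, applying Lemma \ref{HP-Cshaped2} when $c=d=1$ and Lemma \ref{HP-Cshaped3} when $c>1$ or $d>1$. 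Since $c,d\geq 1$ always, this subsplit is exhaustive, and in each branch the lemma hypotheses coincide precisely with what the theorem guarantees.

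The only real care required, and the step I expect to be the main (if modest) obstacle, is the bookkeeping that each forbidden condition is genuinely vacuous in the cases where the invoked lemma omits it: I must confirm that (F7) and (F8) cannot be triggered when $a=1$, and that (F9) cannot be triggered when $a\geq 2$, so that ``avoids all five conditions'' is logically equivalent, case by case, to ``avoids exactly the conditions named in the relevant lemma.'' This follows immediately by reading off the arithmetic constraints in the definitions of (F7), (F8), and (F9), but it is the one place where an oversight would leave a gap. Once this matching is checked, the three sufficiency lemmas tile the full parameter space (namely $a=1$; $a\geq 2$ with $c=d=1$; and $a\geq 2$ with $c>1$ or $d>1$) without overlap or omission, completing the proof.
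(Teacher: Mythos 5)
Your proposal is correct and follows essentially the same route as the paper, which derives the theorem directly from Lemma \ref{Necessary-condition-Cshaped} (necessity) and Lemmas \ref{HP-Cshaped1}--\ref{HP-Cshaped3} (sufficiency, split by $a=1$ versus $a\geqslant 2$ with $c=d=1$ or $c>1$ or $d>1$). Your explicit check that (F7) and (F8) are vacuous when $a=1$ and that (F9) is vacuous when $a\geqslant 2$ is exactly the bookkeeping the paper leaves implicit in its one-line ``immediately follows'' statement.
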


\section{The longest $(s, t)$-paths in $C$-shaped supergrid graphs}\label{Sec_Algorithm}
From Theorem \ref{HP-Theorem-Cshaped}, it follows that if $(C(m,n; k,l; c,d), s, t)$ satisfies one of the conditions (F1), (F3), (F7), (F8), and (F9), then $(C(m,n; k,l; c,d), s, t)$ contains no Hamiltonian $(s, t)$-path. So in this section, first for these cases we give upper bounds on the lengths of longest paths between $s$ and $t$. Then, we show that these upper bounds is equal to the length of longest paths between $s$ and $t$. Notice that the isomorphic cases are omitted. The following lemmas give these bounds.

We first consider the case of $a = 1$. In this case, $(C(m,n; k,l; c,d), s, t)$ may satisfy condition (F1), (F3), or (F9). We compute the upper bound of the longest $(s, t)$-path in this case as the following lemma.

\begin{figure}[h]
\centering
\includegraphics[scale=1]{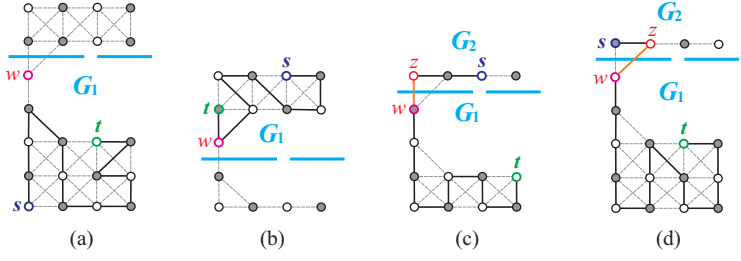}
\caption[]{A longest path between $s$ and $t$ under that $a = 1$ for (a)--(b) (FC7) holds, and (c)--(d) (FC8) holds, where bold lines indicate the constructed longest $(s, t)$-path and bold dash lines indicate the separations.}
\label{fig:FC7-FC8}
\end{figure}

\begin{lem}\label{Lemma:FC7-FC8}
Let $a = 1$ and $w = (1, c+1)$. Assume that $C(m,n; k,l; c,d),s,t)$ satisfies one of the conditions $(\mathrm{F1)}$, $(\mathrm{F3)}$, and $(\mathrm{F9)}$. Then, the following conditions hold:
\begin{description}
  \item[$\mathrm{(FC7)}$] If $s_y, t_y > c$ (resp., $s_y, t_y \leqslant c$), then the length of any path between $s$ and $t$ cannot exceed $\hat{L}(G_1, s, t)$, where $G_1 = L(m,n-c; k,l)$ (resp., $G_1= L(c+1,m; 1,k)$) (see Fig. \emph{\ref{fig:FC7-FC8}(a)--(b)}).
  \item[$\mathrm{(FC8)}$] If $(s_y\leqslant c$ and $t_y > c)$ or $(t_y\leqslant c$ and $s_y > c)$, without loss of generality assume that $s_y\leqslant c$, then the length of any path between $s$ and $t$ cannot exceed $\hat{L}(G_2, s, z)+\hat{L}(G_1, w, t)$, where $G_2 = R(m, c)$, $G_1 = L(m,n-c; k,l)$, and $z=(1, c)$ if $s\neq (1, c)$; otherwise $z=(2, c)$ (see Fig. \emph{\ref{fig:FC7-FC8}(c)--(d)}).
\end{description}
\end{lem}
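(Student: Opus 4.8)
The plan is to exploit the fact that when $a=1$ the left column has width one and is therefore the only connection between the two horizontal bars of the $C$, so that essentially a single vertex governs all traffic between them. Write $A$ for the set of vertices with $y$-coordinate at most $c$ (the top bar, which induces the rectangular subgraph $R(m,c)=G_2$) and $B=V(C)\setminus A$ for the vertices with $y$-coordinate larger than $c$ (which induce the $L$-shaped subgraph $L(m,n-c;k,l)$). First I would check, by listing the neighbours of $(1,c)$ and $(2,c)$ and using $a=1$ (so that every vertex $(x,c+1)$ with $x\ge 2$ lies in the deleted rectangle), that the only edges joining $A$ and $B$ are $((1,c),w)$ and $((2,c),w)$, where $w=(1,c+1)$. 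Hence $w$ is a cut vertex, and $(1,c),(2,c)$ are the only vertices of $A$ adjacent to $B$. This single observation drives both bounds.

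For $(\mathrm{FC7})$, suppose first $s_y,t_y\le c$, so $s,t\in A$. Any $(s,t)$-path that visits a vertex of $B$ must enter $B$ through an edge incident with $w$ and, to come back to $A$, must leave $B$ through a second edge incident with $w$; since $w$ can be used only once, the only excursion into $B$ from which the path can return to $A$ is the single vertex $w$, because every neighbour of $w$ inside $B$ leads deeper into $B$ and the only route back to $A$ is again through $w$. Therefore every $(s,t)$-path lies in the subgraph induced by $A\cup\{w\}$. Because $a=1$ forces $m=k+1$, this subgraph is an $R(m,c)$ carrying one extra vertex $w$ hanging at a corner, which is isomorphic as a supergrid graph to $G_1=L(c+1,m;1,k)$ (a reflection/transposition of king-move adjacency); hence its longest $(s,t)$-path has length $\hat{L}(G_1,s,t)$. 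The symmetric case $s_y,t_y>c$ is easier, since then $s,t\in B$ and any excursion into $A$ would have to re-enter $B$ through the already-used vertex $w$, so the path stays entirely inside $B=G_1=L(m,n-c;k,l)$ and the bound $\hat{L}(G_1,s,t)$ is immediate. Note that $(\mathrm{FC7})$ and $(\mathrm{FC8})$ together cover all four sign patterns of $(s_y-c,t_y-c)$, so the two cases are exhaustive.

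For $(\mathrm{FC8})$ I would take an arbitrary $(s,t)$-path $P$ with $s\in A$ and $t\in B$ and count its transitions between $A$ and $B$. Every transition uses an edge incident with $w$, and $P$ meets $w$ at most once, so at most two transitions are possible; since $s\in A$ and $t\in B$ the number of transitions is odd, hence exactly one. Thus $P$ splits as $P_A\Rightarrow P_B$, where $P_A$ is an $(s,z')$-path in $G_2=R(m,c)$ with $z'\in\{(1,c),(2,c)\}$ and $P_B$ is a $(w,t)$-path in $G_1=L(m,n-c;k,l)$. Consequently $|P|=|P_A|+|P_B|\le \hat{L}(G_2,s,z')+\hat{L}(G_1,w,t)$, and it only remains to replace the unknown gateway $z'$ by the distinguished target $z$ of the statement.

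The main obstacle is exactly this last step: I must show that the corner $z=(1,c)$ (or $z=(2,c)$ when $s=(1,c)$) maximises $\hat{L}(R(m,c),s,\cdot)$ over the two gateway vertices, i.e. $\hat{L}(G_2,s,z')\le\hat{L}(G_2,s,z)$. For $c\ge 2$ I would argue that $(R(m,c),s,z)$ never satisfies $(\mathrm{F1})$: the king graph $R(m,c)$ with $m,c\ge 2$ has no cut vertex, and its only $2$-vertex cuts are full interior columns $\{(i,1),(i,2)\}$ with $c=2$, none of which contains the boundary-column vertex $z$; hence $\{s,z\}$ is not a vertex cut and, by Lemma~\ref{HamiltonianConnected-Rectangular}, $\hat{L}(G_2,s,z)=mc$, the absolute maximum, which dominates $\hat{L}(G_2,s,z')$. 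For $c=1$, $R(m,1)$ is a path, so $\hat{L}(R(m,1),s,\cdot)$ is computed directly: the endpoint $(1,1)$ yields $s_x$ while $(2,1)$ yields $s_x-1$, so the corner again wins. Combining the two cases gives the desired inequality and hence the bound $\hat{L}(G_2,s,z)+\hat{L}(G_1,w,t)$. The hypothesis that $(C(m,n;k,l;c,d),s,t)$ satisfies $(\mathrm{F1})$, $(\mathrm{F3})$, or $(\mathrm{F9})$ is what places us in the non-Hamiltonian regime, ensuring that these are the operative bounds rather than the Hamiltonian value $mn-kl$.
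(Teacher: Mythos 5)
Your proposal is correct and follows essentially the same argument the paper intends — the paper's own proof is just ``straightforward, see Fig.~\ref{fig:FC7-FC8}'', and the figure depicts exactly your decomposition: with $a=1$ the vertex $w=(1,c+1)$ is the unique connector between the bar $G_2=R(m,c)$ and the L-shaped remainder, so every $(s,t)$-path either stays in one side or splits as $P_A\Rightarrow P_B$ across one of the two gateway edges. Your extra step — verifying via Lemma~\ref{HamiltonianConnected-Rectangular} (and the trivial $c=1$ case) that the designated corner $z$ maximizes $\hat{L}(G_2,s,\cdot)$ over the two gateways $(1,c)$ and $(2,c)$ — supplies rigor for a point the paper leaves entirely implicit, and it is sound since a corner of a supergrid rectangle never lies in a $2$-vertex cut.
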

\begin{proof}
The proof is straightforward, see Fig. \ref{fig:FC7-FC8}.
\end{proof}

Next, we consider the case of $a\geqslant 2$. In this case, $(C(m,n; k,l; c,d), s, t)$ may satisfy condition (F1), (F3), (F7), or (F8). Depending on the sizes of $c$ and $d$, we consider the subcases of (1) $c\geqslant 2$ and $d\geqslant 2$, and (2) $c = 1$ or $d = 1$. Consider that $c\geqslant 2$ and $d\geqslant 2$. Then, $(C(m,n; k,l; c,d), s, t)$ does not satisfy conditions (F3), (F7), and (F8). Thus, $(C(m,n; k,l; c,d), s, t)$ may satisfy condition (F1) only. We can see that $s$ or $t$ is not a cut vertex when $a, c, d\geqslant 2$. Thus, $\{s, t\}$ is a vertex cut when $(C(m,n; k,l; c,d), s, t)$ satisfies condition (F1). We can see from the structure of $C(m,n; k,l; c,d)$ that $a$, $c$, or $d$ is equals to 2 if $\{s, t\}$ is a vertex cut. The following lemma shows the upper bound of the longest $(s, t)$-path under that $a, c, d\geqslant 2$ and $(C(m,n; k,l; c,d), s, t)$ satisfies condition (F1).

\begin{figure}[h]
\centering
\includegraphics[scale=1]{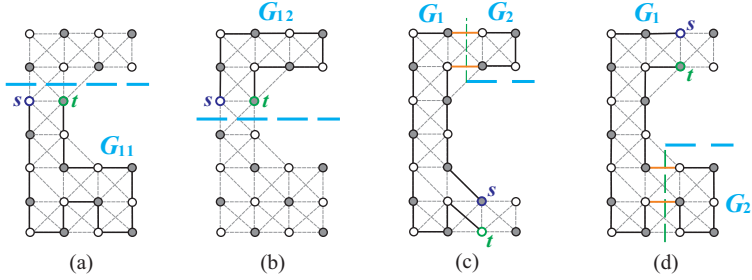}
\caption[]{A longest path between $s$ and $t$ under that $a\geqslant 2$ and $c, d\geqslant 2$ for (a)--(b) (FC9) holds, and (c)--(d) (FC10) holds, where bold lines indicate the constructed longest $(s, t)$-path.}
\label{fig:FC9-FC10}
\end{figure}

\begin{lem} \label{Lemma:FC9-FC10} Assume that $a,c, d\geqslant 2$ and $\{s, t\}$ is a vertex cut. Then, the following conditions hold:
\begin{description}
  \item[$\mathrm{(FC9)}$] If $a = 2$ and $c+1\leqslant s_y=t_y\leqslant c+l$, then the length of any path between $s$ and $t$ cannot exceed $\max\{\hat{L}(G_{11}, s, t), \hat{L}(G_{12}, s, t)\}$, where $G_{11} = L(m,n'; k,l')$, $G_{12} = L(s_y,m; k',k)$, $n'=n-s_y+1$, $l'=n'-d$, and $k'=s_y-1$ (see Fig. \emph{\ref{fig:FC9-FC10}(a)--(b)}).
  \item[$\mathrm{(FC10)}$] If $k\geqslant 2$, $a+1\leqslant s_x, t_x\leqslant a+k-1(=m-1)$, and $[(c = 2$ and $s_y, t_y \leqslant c)$ or $(d = 2$, and $s_y, t_y \geqslant c+l)]$, without loss of generality assume that $d = 2$ and $s_y, t_y \geqslant c+l$, then the length of any path between $s$ and $t$ cannot exceed $\hat{L}(G_1, s, t)+|G_2|=\hat{L}(G_1, s, t)+ k\times c$, where $G_1 = L(m,n; k,l+c)$ and $G_2 = R(k, c)$ (see Fig. \emph{\ref{fig:FC9-FC10}(c)--(d)}).
\end{description}
\end{lem}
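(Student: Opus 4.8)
The plan is to prove both (FC9) and (FC10) by the standard vertex-cut counting argument that underlies all the forbidden-condition upper bounds in this paper: once $\{s,t\}$ is known to be a vertex cut, any $(s,t)$-path can be decomposed according to how it interacts with the components of $G-\{s,t\}$, and the length is bounded by summing (or maximizing over) the sizes of paths in the pieces, each of which is a rectangular or $L$-shaped supergrid graph whose longest-path length is already computed via $\hat{L}(\cdot)$ and Theorem~\ref{L-shaped_LongestPath}. The key geometric observation to establish first, for each condition, is the precise shape of the components of $C(m,n;k,l;c,d)-\{s,t\}$ as drawn in Fig.~\ref{fig:FC9-FC10}.

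For (FC9), with $a=2$ and $c+1\leqslant s_y=t_y\leqslant c+l$, the two vertices $s$ and $t$ occupy a full vertical cross-section of the width-$2$ left arm, so removing them disconnects the graph into the upper portion (an $L$-shaped graph $G_{11}=L(m,n';k,l')$ sitting on and above row $s_y$) and the lower portion ($G_{12}=L(s_y,m;k',k)$, rotated, consisting of the rows strictly below $s_y$). I would argue that any $(s,t)$-path, after leaving $s$, must enter exactly one of these two components and stay there until it returns to $t$, because $s$ and $t$ are the only gateway vertices between them; hence the path is entirely contained in $G_{11}\cup\{s,t\}$ or in $G_{12}\cup\{s,t\}$ (up to the shared endpoints), giving the bound $\max\{\hat{L}(G_{11},s,t),\hat{L}(G_{12},s,t)\}$. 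The realizability of each maximand follows from Theorem~\ref{L-shaped_LongestPath}, and Fig.~\ref{fig:FC9-FC10}(a)--(b) exhibits the two optimal paths.

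For (FC10), with $d=2$, $k\geqslant 2$, and $s_y,t_y\geqslant c+l$ with $s_x,t_x$ in the right arm, the situation differs: here $\{s,t\}$ separates a small rectangular block $G_2=R(k,c)$ (the lower-right pocket of the $C$) from the rest $G_1=L(m,n;k,l+c)$. Because $G_2$ hangs off the main body only through vertices adjacent to $s$ and $t$, I would show that an optimal $(s,t)$-path can be taken to traverse $G_1$ as a Hamiltonian-type $(s,t)$-path and additionally absorb \emph{all} of $G_2$ as a detour inserted at the junction, so that every one of the $k\times c$ vertices of $G_2$ is reachable; this yields the additive bound $\hat{L}(G_1,s,t)+|G_2|=\hat{L}(G_1,s,t)+kc$, and conversely no path can exceed this total vertex count. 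The matching construction is shown in Fig.~\ref{fig:FC9-FC10}(c)--(d).

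The main obstacle I anticipate is not the counting inequality but verifying that the claimed pieces $G_{11},G_{12}$ (and $G_1,G_2$) genuinely \emph{are} the components of $C(m,n;k,l;c,d)-\{s,t\}$ and that each is a legitimate rectangular or $L$-shaped supergrid graph of exactly the stated parameters, so that $\hat{L}$ applies; this requires a careful case check that the coordinate constraints ($a=2$, $d=2$, the ranges on $s_y=t_y$ or $s_x=t_x$) force the removal of $s$ and $t$ to sever precisely one connecting column or one pocket and no spurious extra edges survive across the cut (in particular that the crossed supergrid edges do not reconnect the supposedly separated parts). Once the component structure is pinned down, the upper bound is immediate from vertex-count monotonicity, and tightness comes directly from the figures together with the already-established longest-path results for $L$-shaped and rectangular supergrid graphs.
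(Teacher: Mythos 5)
Your treatment of (FC9) is sound and is essentially the paper's own argument: with $a=2$ and $c+1\leqslant s_y=t_y\leqslant c+l$, the pair $\{s,t\}$ fills a horizontal cross-section of the width-$2$ left arm, $C(m,n;k,l;c,d)-\{s,t\}$ has exactly two components, any $(s,t)$-path must stay inside one component together with $\{s,t\}$, and each such vertex set induces precisely $G_{11}$ or $G_{12}$ (both of which are $L$-shaped and contain $s$ and $t$), giving the bound $\max\{\hat{L}(G_{11},s,t),\hat{L}(G_{12},s,t)\}$. A minor slip: you swapped the two pieces and described one as lying ``strictly below'' row $s_y$; in fact $G_{11}=L(m,n';k,l')$ with $n'=n-s_y+1$ is the portion on and \emph{below} row $s_y$, $G_{12}$ the portion on and above it, and both must include $s$ and $t$ for $\hat{L}(\cdot,s,t)$ to be meaningful.

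The (FC10) part, however, has a genuine gap: you have misread the geometry. $G_2=R(k,c)$ is the \emph{top} arm of the $C$ (columns $a+1,\dots,m$, rows $1,\dots,c$); it is not a ``lower-right pocket,'' it is not adjacent to $s$ or $t$ at all (they lie in the bottom arm, since $s_y,t_y\geqslant c+l$ and $d=2$), and removing $\{s,t\}$ does \emph{not} separate $G_2$ from $G_1$ --- the top arm stays attached to the rest through the left arm. What $\{s,t\}$ actually severs (with $s_x=t_x$ occupying both rows of the height-$2$ bottom arm) is the $2\times(m-s_x)$ block of the bottom arm to the right of column $s_x$, a component that never appears in your proof. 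Because of this, your upper-bound argument collapses: ``traverse $G_1$, absorb all of $G_2$ as a detour, and invoke vertex-count monotonicity'' yields only the trivial bound $|G_1|+|G_2|=mn-kl$, not $\hat{L}(G_1,s,t)+kc$. The correct argument runs as follows: any $(s,t)$-path either stays in the small right block, in which case its length is at most $2(m-s_x+1)\leqslant\hat{L}(G_1,s,t)$ (a snake through that block is itself an $(s,t)$-path of $G_1$), or it stays in the large component, in which case its vertices in $G_2$ number at most $|G_2|=kc$ while its vertices in $G_1$ all lie in the part of $G_1$ to the left of the cut together with $\{s,t\}$, and that set has size $\hat{L}(G_1,s,t)$ because $\{s,t\}$ is a cut of $G_1$ as well and the left side of that cut admits a Hamiltonian $(s,t)$-path by the $L$-shaped results. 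Note also a point your write-up elides even where the pieces are correct: the restriction of an $(s,t)$-path of the $C$-shaped graph to $G_1$ need not be a single path (it can enter and leave $G_2$ several times), so one may not bound $|P\cap V(G_1)|$ by $\hat{L}(G_1,s,t)$ on the grounds that it is ``a path in $G_1$''; the bound must go through the size of the left side of the cut, as above.
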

\begin{proof}
Consider Fig. \ref{fig:FC9-FC10}(a)--(b). Removing $s$ and $t$ clearly disconnects $C(m,n; k,l; c,d)$ into two components $G_1$ and $G_2$. Thus, a simple path between $s$ and $t$ can only go through one of these components. Therefore, its length cannot exceed the size of the largest component. Notice that, for (FC10), the length of any path between $s$ and $t$ is equal to $\max\{\hat{L}(G_1,s,t)+|G_2|,2\times (m-s_x+1)\}$ (see Fig. \ref{fig:FC9-FC10}(c)--(d)). Since $a, c>1$, it is obvious that the length of any path between $s$ and $t$ cannot exceed $\hat{L}(G_1, s, t)+|G_2| = \hat{L}(G_1,s,t)+k\times c$.
\end{proof}


\begin{figure}[h]
\centering
\includegraphics[scale=1]{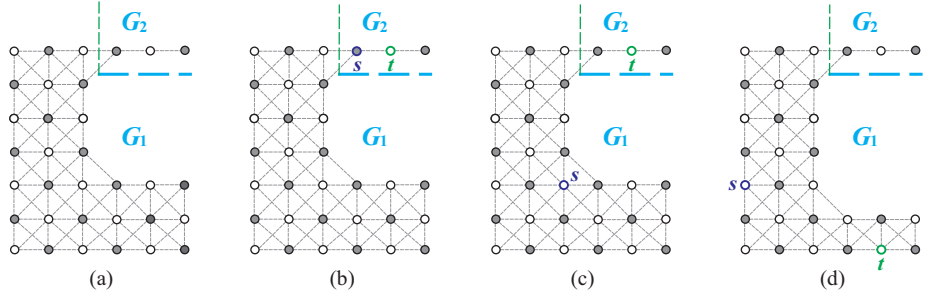}
\caption[]{(a) The separations on $C(m,n; k,l; c,d)$ for $a\geqslant 2$ and $c = 1$, (b) the case of $s, t\in G_2$, (c) the case of $s\in G_1$ and $t\in G_2$, and (d) the case of $s, t\in G_1$.}
\label{fig:c1-cases}
\end{figure}

In the following, we will consider that $a\geqslant 2$, and ($c=1$ or $d=1$). Without loss of generality, assume that $c = 1$. We first make a horizontal and vertical separations on $C(m,n; k,l; c,d)$ to obtain two disjoint subgraphs $G_1=L(m, n; k, l+c)$ and $G_2=R(m-a, c)$, as depicted in Fig. \ref{fig:c1-cases}(a), where $a\geqslant 2$, $c = 1$, and $G_2$ is a path graph. Depending the locations of $s$ and $t$, we consider the following cases:

Case I: $s, t\in G_2$. In this case, $k > 1$, $s_y=t_y=1$, and $a+1\leqslant s_x, t_x\leqslant m$. Then, $(C(m,n; k,l; c,d), s, t)$ may satisfy condition (F1) or (F3), as depicted in Fig. \ref{fig:c1-cases}(b).

Case II: $s\in G_1$ and $t\in G_2$. In this case, $(C(m,n; k,l; c,d), s, t)$ may satisfy condition (F1) or (F3), as depicted in Fig. \ref{fig:c1-cases}(c).

Case III: $s, t\in G_1$. In this case, $(C(m,n; k,l; c,d), s, t)$ may satisfy condition (F1), (F3), (F7), or (F8), as depicted in Fig. \ref{fig:c1-cases}(d). If $d=1$, $s_y=t_y=n$, and $a+1\leqslant s_x, t_x\leqslant m$, then it is the same as Case I. Depending on whether $(G_1, s, t)$ satisfies condition (F1), there are the following three subcases:

\hspace{0.5cm}Case III.1: $s$ or $t$ is a cut vertex of $G_1$. In this subcase, $d = 1$ and it is the same as Case I or Case II.

\begin{figure}[h]
\centering
\includegraphics[scale=1]{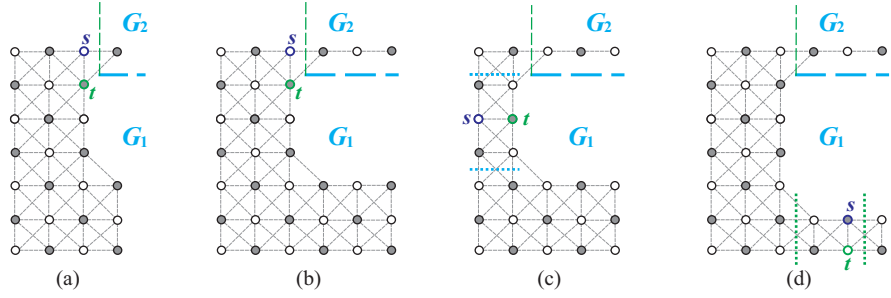}
\caption[]{The cases for $s, t\in G_1$ and $\{s, t\}$ is a vertex cut of $G_1$, where (a)--(b) $\{s, t\} = \{(a, 1), (a, 2)\}$, (c) $a = 2$, $s_y = t_y$ and $a+c\leqslant s_x,t_x\leqslant a+l$, and (d) $d = 2$, $k > 1$, $a < s_x = t_x < m $ and $n-1\leqslant s_y, t_y\leqslant n$.}
\label{fig:c1-cases-vertex-cut}
\end{figure}

\hspace{0.5cm}Case III.2: $\{s, t\}$ is a vertex cut of $G_1$. Consider the following subcases:

\hspace{1.0cm}Case III.2.1: $\{s, t\} = \{(a, 1), (a, 2)\}$. In this subcase, $k=1$ or $k>1$, as shown in Fig. \ref{fig:c1-cases-vertex-cut}(a)--(b). If $k > 1$, then $(C(m,n; k,l; c,d), s, t)$ satisfies conditions (F1) and (F3); otherwise it satisfies condition (F1).

\hspace{1.0cm}Case III.2.2: $a = 2$, $s_y = t_y$, and $c+1\leqslant s_x,t_x\leqslant c+l$. In this subcase, it is similar to condition (FC9) in Lemma \ref{Lemma:FC9-FC10} (see Fig. \ref{fig:c1-cases-vertex-cut}(c)).

\hspace{1.0cm}Case III.2.3: $d = 2$, $k > 1$, $a < s_x = t_x < m$, and $n-1\leqslant s_y, t_y\leqslant n$. In this subcase, it is similar to condition (FC10) in Lemma \ref{Lemma:FC9-FC10} (see Fig. \ref{fig:c1-cases-vertex-cut}(d)).

\begin{figure}[h]
\centering
\includegraphics[width=0.98\textwidth]{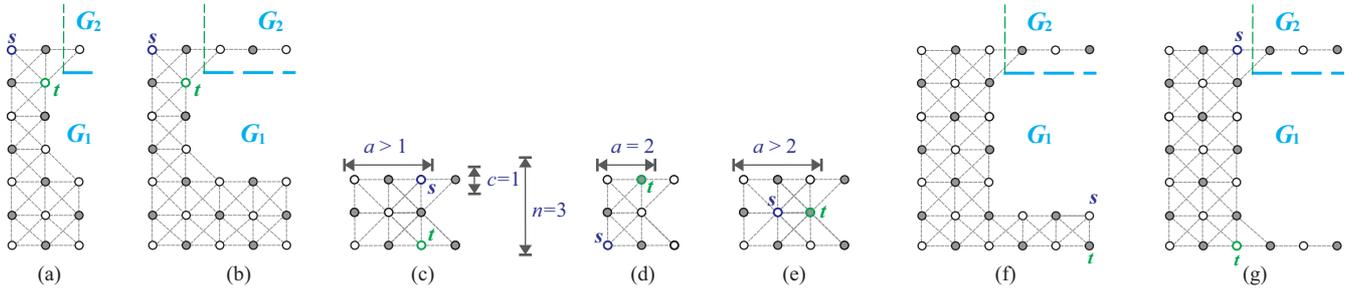}
\caption[]{The cases for $s, t\in G_1$ and $\{s, t\}$ is not a vertex cut of $G_1$, where (a)--(b) $a=2$, $s_y, t_y\leqslant 2$, $s_y\neq t_y$, and $s_x\neq t_x$, (c)--(e) $(C(m,n; k,l; c,d), s, t)$ satisfies condition (F8), and (f)--(g) $(G_1, s, t)$ satisfies condition (F3) but it does not satisfy condition (F1).}
\label{fig:c1-cases-non-vertex-cut}
\end{figure}

\hspace{0.5cm}Case III.3: $(G_1, s, t)$ does not satisfy condition (F1). In this subcase, $s$ and $t$ are not cut vertices and $\{s, t\}$ is not a vertex cut of $G_1$. Then, $(C(m,n; k,l; c,d), s, t)$ may satisfy condition (F3), (F7), or (F8). Depending on the size of $k$, we consider the following subcases:

\hspace{1.0cm}Case III.3.1: $k = 1$. In this subcase, $(C(m,n; k,l; c,d), s, t)$ may satisfy condition (F7) and (F8) as follows:

\hspace{1.5cm}Case III.3.1.1: $(C(m,n; k,l; c,d), s, t)$ satisfies condition (F7). In this subcase, $a=2$ and $[(c = 1$ and $\{s, t\} = \{(1, 1), (2, 2)\}$ or $\{(1, 2), (2, 1)\})$ or $(d = 1$ and $\{s, t\} = \{(1, n), (2, n-1)\}$ or $\{(1, n - 1), (2, n)\})]$ (see Fig. \ref{fig:c1-cases-non-vertex-cut}(a)).

\hspace{1.5cm}Case III.3.1.2: $(C(m,n; k,l; c,d), s, t)$ satisfies condition (F8). In this subcase, $n=3$ and $k=c=d=1$ (see Fig. \ref{fig:c1-cases-non-vertex-cut}(c)--(e)).

\hspace{1.0cm}Case III.3.2: $k > 1$. In this subcase, $(C(m,n; k,l; c,d), s, t)$ satisfies condition (F3) but it does not satisfy condition (F1). There are the following subcases:

\hspace{1.5cm}Case III.3.2.1: $a=2$ and $[(c = 1$ and $\{s, t\} = \{(1, 1), (2, 2)\}$ or $\{(1, 2), (2, 1)\})$ or $(d = 1$ and $\{s, t\} = \{(1, n), (2, n-1)\}$ or $\{(1, n - 1), (2, n)\})]$. In this subcase, it is the same as Case III.3.1.1 (see Fig. \ref{fig:c1-cases-non-vertex-cut}(a)).

\hspace{1.5cm}Case III.3.2.2: $(c = 1$ and $\{s, t\} \neq \{(1, 1), (2, 2)\}$ and $\{(1, 2), (2, 1)\})$ or $(d = 1$ and $\{s, t\} \neq \{(1, n), (2, n-1)\}$ and $\{(1, n - 1), (2, n)\})$ (see Fig. \ref{fig:c1-cases-non-vertex-cut}(e)--(f)).\\

Based on the above cases, we compute the upper bounds of longest $(s, t)$-paths on $(C(m,n; k,l; c,d), s, t)$ under that $a\geqslant 2$ and $c = 1$ as the following lemma.

\begin{figure}[h]
\centering
\includegraphics[scale=1]{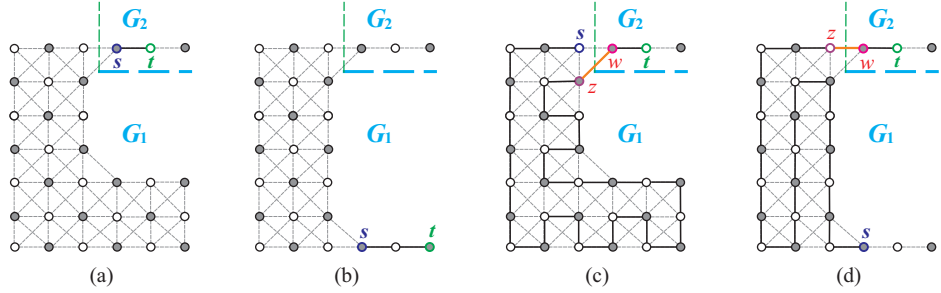}
\caption[]{The longest path between $s$ and $t$ under that $a\geqslant 2$ and $c = 1$, where (a)--(b) (FC11) holds and (c)--(d) (FC12) holds, where bold lines indicate the constructed longest $(s, t)$-path.}
\label{fig:FC11-FC12}
\end{figure}

\begin{figure}[h]
\centering
\includegraphics[width=0.98\textwidth]{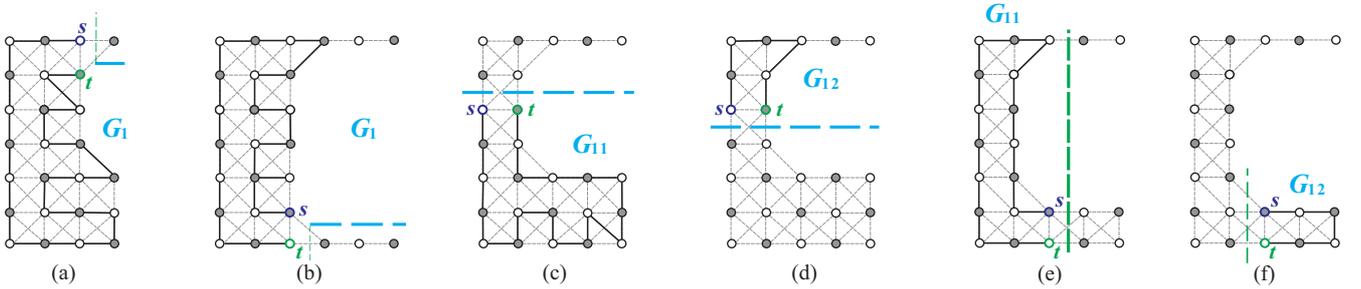}
\caption[]{The longest path between $s$ and $t$ under that $a\geqslant 2$ and $c = 1$, where (a)--(b) (FC13) holds, (c)--(d) (FC14) holds, and (e)--(f) (FC15) holds.}
\label{fig:FC13-FC15}
\end{figure}

\begin{figure}[h]
\centering
\includegraphics[width=0.98\textwidth]{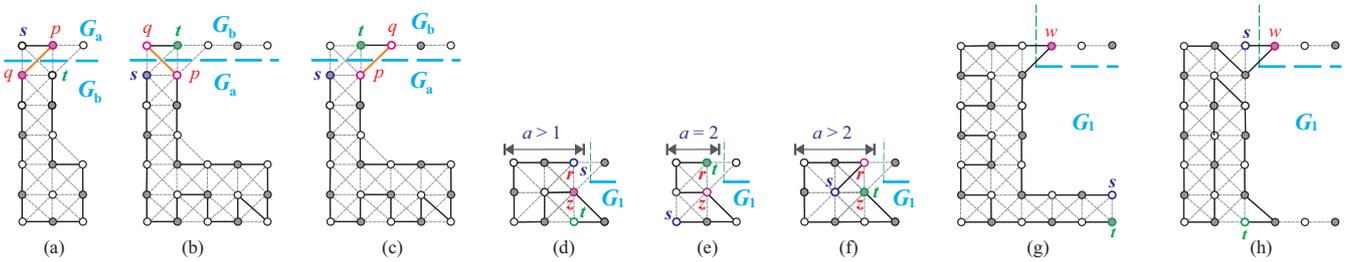}
\caption[]{The longest path between $s$ and $t$ under that $a\geqslant 2$ and $c = 1$, where (a)--(c) (FC16) holds, (d)--(f) (FC17) holds, and (g)--(h) (FC18) holds.}
\label{fig:FC16-FC18}
\end{figure}

\begin{lem} \label{Lemma:FC11-FC18} Assume that $a\geqslant 2$ and $c = 1$. Let $w = (a+1, 1)$. Then, the following conditions hold:
\begin{description}
  \item[$\mathrm{(FC11)}$] If $k > 1$, $a+1\leqslant s_x, t_x\leqslant m$, and ($s_y=t_y=1$ or $s_y=t_y=n$), then the length of any path between $s$ and $t$ cannot exceed $t_x-s_x+1$ (see Fig. \emph{\ref{fig:FC11-FC12}(a)--(b)}, and refer to Case I and Case III.1).
  \item[$\mathrm{(FC12)}$] If $t_x > 1$, $t_y=1$, and $[(s_x\leqslant a)$ or $(s_x > a$ and $s_y > c+l)]$, then the length of any path between $s$ and $t$ cannot exceed $\hat{L}(G_1, s, z)+ \hat{L}(G_2, w, t)$, where $G_1 = L(m,n; k,l+c)$, $G_2 = R(k, c)$, and $z = (a, 1)$ if $s\neq (a, 1)$; otherwise $z = (a, 2)$ (see Fig. \emph{\ref{fig:FC11-FC12}(c)--(d)}, and refer to Case II and Case III.1).
  \item[$\mathrm{(FC13)}$] If $\{s, t\} = \{(a, 1), (a, 2)\}$ (resp., $d=1$ and $\{s, t\} = \{(a, n-1), (a, n)\}$), then the length of any path between $s$ and $t$ cannot exceed $\hat{L}(G_1, s, t)$, where $G_1 = L(m,n; k,l+c)$ (resp., $G_1 = L(n,m; l+d,k)$) (see Fig. \emph{\ref{fig:FC13-FC15}(a)--(b)}, and refer to Case III.2.1).
  \item[$\mathrm{(FC14)}$] If $a=2$, $s_y= t_y$, and $c+1\leqslant s_y,t_y\leqslant c+l$, then the length of any path between $s$ and $t$ cannot exceed $\max\{\hat{L}(G_{11}, s, t), \hat{L}(G_{12}, s, t)\}$, where $G_{11} = L(m,n'; k,l')$, $G_{12} = L(s_y,m; k',k)$, $n'=n-s_y+1$, $l'=n'-d$, and $k'=s_y-1$ (see Fig. \emph{\ref{fig:FC13-FC15}(c)--(d)}, and refer to Case III.2.2).
  \item[$\mathrm{(FC15)}$] If $d = 2$, $k > 1$, $a < s_x = t_x < m$, and $n-1\leqslant s_y, t_y\leqslant n$, then the length of any path between $s$ and $t$ cannot exceed $\max\{\hat{L}(G_{11}, s, t), \hat{L}(G_{12}, s, t)\}$, where $G_{11} = C(m',n; k',l; c,d)$, $G_{12} = R(m-m'+1,d)$, $m'=s_x$, and $k'=m'-a$  (see Fig. \emph{\ref{fig:FC13-FC15}(e)--(f)}, and refer to Case III.2.3).
  \item[$\mathrm{(FC16)}$] If $a=2$ and ($\{s, t\} = \{(1, 1), (2, 2)\}$ or $\{(1, 2), (2, 1)\}$) (resp., $d = 1$ and $\{s, t\} = \{(1, n), (2, n-1)\}$ or $\{(1, n - 1), (2, n)\}$), then the length of any path between $s$ and $t$ cannot exceed $\hat{L}(G_1, s, t)$, where $G_1 = L(m,n; k,l+c)$ (resp., $G_1 = L(n,m; l+d,k)$) (see Fig. \emph{\ref{fig:FC16-FC18}(a)--(c)}, and refer to Case III.3.1.1 and Case III.3.2.1).
  \item[$\mathrm{(FC17)}$] If $(C(m,n; k,l; c,d), s, t)$ satisfies condition \emph{(F8)}, then the length of any path between $s$ and $t$ cannot exceed $\hat{L}(G_1, s, t)$, where $G_1 = L(m,n; k,l+c)$ (see Fig. \emph{\ref{fig:FC16-FC18}(d)--(f)}, and refer to Case III.3.1.2).
  \item[$\mathrm{(FC18)}$] If $(C(m,n; k,l; c,d), s, t)$ does not satisfy condition \emph{(F1)}, $k > 1$, and $[(a>2)$ or $(a=2,\{s, t\} \neq \{(1, 1), (2, 2)\}$ and $\{(1, 2), (2, 1)\})]$ (resp., $d = 1$ and $\{s, t\} \neq \{(1, n), (2, n-1)\}$ and $\{(1, n - 1), (2, n)\}$), then the length of any path between $s$ and $t$ cannot exceed $\hat{L}(G_1, s, t)+1$, where $G_1 = L(m,n; k,l+c)$ (resp., $G_1 = L(n,m; l+d,k)$) (see Fig. \emph{\ref{fig:FC16-FC18}(g)--(h)}, and refer to Case III.3.2.2).
\end{description}
\end{lem}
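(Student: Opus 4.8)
The plan is to prove all eight bounds (FC11)–(FC18) through a single unifying mechanism attached to the separation, fixed just before the statement, of $C(m,n; k,l; c,d)$ into the $L$-shaped part $G_1 = L(m,n; k,l+c)$ and the one-row piece $G_2 = R(m-a,c) = R(k,1)$. Since the lemma only asserts upper bounds (``cannot exceed''), every item is an impossibility/counting argument rather than a construction: I would show that the cut structure induced by the forbidden configuration limits the set of vertices any simple $(s,t)$-path may visit, and then bound the number of visited vertices by either the size of the single reachable region or by a sum over the pieces the path is forced to traverse exactly once. The achieving paths are left to the algorithmic section, so none need be exhibited here. Throughout I would use freely that $\hat{L}(G_1,\cdot,\cdot)$ is a well-defined, computable quantity for $L$-shaped (and rectangular or $C$-shaped) graphs by Theorem \ref{L-shaped_LongestPath}, and that $R(k,1)$ is a simple path on $k$ vertices whose only attachment to the rest of the graph is the gateway vertex $w=(a+1,1)$ (through its edges to $(a,1)$ and $(a,2)$), while its far corner $(m,1)$ has degree $1$ and every interior arm vertex has degree $2$.

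I would then sort the conditions into archetypes and argue each once. The \emph{pendant-path} archetype (FC11): with $c=1$ and $k>1$ the row $R(k,1)$ is a dead-end path hanging from $w$, so if $s,t$ both lie on it with $s_x<t_x$, a simple path cannot leave the segment between them and return (re-entry forces a second visit to $s$); hence its length is exactly the span $t_x-s_x+1$. The symmetric case $s_y=t_y=n$ occurs only when $d=1$ (this is Case III.1, where $d=1$), so the bottom row is likewise a pendant path and the same bound applies. The \emph{vertex-cut confinement} archetype (FC13, FC14, FC15, FC16, FC17): here $\{s,t\}$ (or a single endpoint) is a vertex cut splitting the graph into components, and a simple $(s,t)$-path can enter at most one component, since entering a second would revisit $s$ or $t$. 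This gives the single-side bound $\hat{L}(G_1,s,t)$ when only one side can host a long path (FC13, FC16, FC17) and the $\max$ bound over the two induced subgraphs $G_{11},G_{12}$ when both sides are admissible (FC14, FC15); this is precisely the reasoning already used for (FC9) and (FC10) in Lemma \ref{Lemma:FC9-FC10}.

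The remaining archetype concerns the gateway $w=(a+1,1)$ to the top arm. In (FC12), where $s\in G_1$ and $t$ lies in the arm $G_2$, every $(s,t)$-path must reach $w$ before entering the arm and therefore factors at the connecting edge $(z,w)$ with $z=(a,1)$ into an $(s,z)$-subpath inside $G_1$ and a $(w,t)$-subpath inside $G_2$; bounding the two subpaths by $\hat{L}(G_1,s,z)$ and $\hat{L}(G_2,w,t)$ yields the claimed sum. In (FC18), where $k>1$ creates the degree-$1$ corner responsible for (F3) but both $s,t$ lie in $G_1$ with $\{s,t\}$ not a vertex cut, a path living in $G_1$ may absorb the single gateway vertex $w$ by the short detour $z=(a,1)\to w\to (a,2)$, but it cannot reach any deeper arm vertex: from $(a+2,1)$ the only neighbours are arm vertices, so entering the arm past $w$ strands the path away from $t\in G_1$. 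Hence the gain over $\hat{L}(G_1,s,t)$ is capped at exactly $+1$. The isomorphic ``$d=1$, top-of-the-graph'' variants listed in (FC13), (FC16), (FC18) are obtained by the stated reflection and reduce verbatim to the same arguments.

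I expect the main obstacle to be the limited-pickup bound (FC18): the delicate point is arguing rigorously that no $(s,t)$-path essentially confined to $G_1$ can swallow two or more arm vertices, which requires tracking the gateway $w$, the two $G_1$-neighbours of $w$, and the degree-$1$ corner simultaneously and showing that any detour collecting a second arm vertex either repeats a vertex or deserts the pendant tail. A secondary difficulty is the bookkeeping of the case split itself: for each of (FC11)–(FC18) one must verify that the stated hypotheses genuinely place us in the claimed archetype — in particular that $\{s,t\}$ really is a vertex cut in (FC13)–(FC16) and really is \emph{not} in (FC18) — which is exactly where the preceding Cases I–III and their subcases are invoked to certify the geometry. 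Once the archetype is identified, the numerical bound follows from the generic counting argument, so the work is primarily organizational rather than computational.
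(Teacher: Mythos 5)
Your overall architecture (pendant-arm argument for (FC11), cut-vertex factoring for (FC12), vertex-cut confinement for the middle cases, gateway-absorption giving the $+1$ for (FC18)) matches the paper's proof for most items, and your treatment of (FC18) is in fact more detailed than the paper's one-line remark. However, there is a genuine gap: you assign (FC16) and (FC17) to the ``vertex-cut confinement'' archetype, claiming that $\{s,t\}$ (or a single endpoint) is a vertex cut there. It is not, and it cannot be: these two conditions arise precisely from Case III.3 of the preceding case analysis, which is defined by $(G_1,s,t)$ \emph{not} satisfying (F1), i.e.\ $s,t$ are not cut vertices and $\{s,t\}$ is not a vertex cut. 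Concretely, in (FC16) with $a=2$ and $\{s,t\}=\{(1,1),(2,2)\}$, removing $s$ and $t$ leaves the graph connected through the crossed edges $(2,1)\thicksim(1,2)$ and $(2,1)\thicksim(a+1,1)$, so the confinement argument never gets started. Worse, a path in this configuration \emph{can} visit the arm vertex $(a+1,1)$ (e.g.\ ending $(2,1)\rightarrow(3,1)\rightarrow(2,2)=t$), so you cannot simply say the path is trapped in $G_1$; the bound $\hat{L}(G_1,s,t)$ (rather than $\hat{L}(G_1,s,t)+1$) holds only because such a detour forces the vertex $(2,1)$ to have both its remaining neighbors $(1,1)=s$ and $(1,2)$ consumed as path-neighbors, which makes the prefix so short that the detour never pays. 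The paper handles this by a two-piece sum decomposition $\hat{L}(G_{\mathrm{a}},s,p)+\hat{L}(G_{\mathrm{b}},q,t)$ tied to its figures, not by a cut through $\{s,t\}$.

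The same defect affects (FC17): under condition (F8) the relevant separator is $\{(a,1),(a,2)\}$ --- which in general contains a vertex that is neither $s$ nor $t$ --- and the separated components are single vertices such as $(m,1)$ that the path \emph{can} still reach by weaving $s\rightarrow(m,1)\rightarrow(a,2)$ when $s=(a,1)$. The paper's argument (removing $r=(a,1)$ and $z=(a,2)$ isolates a one-vertex component, and a path cannot pick up two such isolated vertices because it would have to pass through $z$ twice) is a counting argument on a cut that is disjoint from, or only partially overlaps, $\{s,t\}$; your proposal as written supplies no argument for these two items, since the hypothesis your archetype needs is false exactly where you invoke it. Ironically, the item you flagged as the main obstacle, (FC18), is sound; it is (FC16) and (FC17) that need repair.
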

\begin{proof}
For (FC11), consider Fig. \ref{fig:FC11-FC12}(a)--(b). There is only one single path between $s$ and $t$ that has the specified. For (FC12), consider Fig. \ref{fig:FC11-FC12}(c) and \ref{fig:FC11-FC12}(d). Since $c = 1$ and $w$ is a cut vertex, it is clear that the length of any path between $s$ and $t$ cannot exceed $\hat{L}(G_1,s,z)+ \hat{L}(G_2,w,t)$.

For (FC13), consider Fig. \ref{fig:FC13-FC15}(a)--(b). In Fig. \ref{fig:FC13-FC15}(a)--(b), $\{s, t\}$ is a vertex cut and hence removing $s$ and $t$ clearly disconnects $C(m,n; k,l; c,d)$ into two components. Thus, a simple path between $s$ and $t$ can only go through one of these components. Therefore, its length cannot exceed the size of the largest component. Since $c=1$, the larger component will be $G_1 = L(m,n; k,l+c)$. For (FC14) and (F15), consider Fig. \ref{fig:FC13-FC15}(c)--(f). The computations of their upper bounds are the same as (FC13), and (FC9) in Lemma \ref{Lemma:FC9-FC10}.

For (FC16), consider Fig. \ref{fig:FC16-FC18}(a)--(c). A simple check shows that the length of any path between $s$ and $t$ cannot exceed $\hat{L}(G_\textrm{a}, s, p)+ \hat{L}(G_\textrm{b}, q, t)=\hat{L}(L(m,n; k,l+c), s, t)$, where $G_\textrm{a}$, $G_\textrm{b}$, $p$, $q$ are defined in Fig. \ref{fig:FC16-FC18}(a)--(c). For (F17), consider Fig. \ref{fig:FC16-FC18}(d)--(f). In Fig. \ref{fig:FC16-FC18}(d)--(f), let $r = (a,1)$ and $z = (a,2)$, where $r, z$ may be one of $s$ and $t$. Removing $r$ and $z$ clearly disconnects $C(m,n; k,l; c,d)$ into two components and a simple path between $s$ and $t$ can only go through a component that contains $s$, $t$, $r$, and $z$. Since the one disjoint component contains only one vertex, the upper bound of the longest $(s, t)$-path will be $\hat{L}(L(m,n; k,l+c), s, t)$. For (F18), consider Fig. \ref{fig:FC16-FC18}(g)--(h). Since $w$ is a cut vertex, we can easily show that the length of any path between $s$ and $t$ cannot exceed $\hat{L}(G_1, s, t)+1$, where $G_1 = L(m,n; k,l+c)$ or $L(n,m; l+d,k)$.
\end{proof}

Let condition (C1) be defined as follows:

\begin{description}
  \item[$\mathrm{(C1)}$] $(C(m,n;k,l;c,d),s,t)$ does not satisfy any of conditions (F1), (F3), (F7), (F8), and (F9).
\end{description}

It is easy to show that any $(C(m,n;k,l;c,d), s, t)$ must satisfy one of conditions (C1), (FC7), (FC8), (FC9), (FC10), (FC11), (FC12), (FC13), (FC14), (FC15), (FC16), (FC17), and (FC18). If $(C(m,n; k,l; c,d),s , t)$ satisfies (C1), then $\hat{U}(C(m,n; k,l; c,d), s, t) = mn-kl$. Otherwise,  $\hat{U}(C(m,n; k,l; c,d), s, t)$ can be computed using Lemma \ref{Lemma:FC7-FC8}--\ref{Lemma:FC11-FC18}. We summarize them as follows:\\

$\hat{U}(C(m,n; k,l; c,d), s, t)=
  \begin{cases}
    \hat{L}(G_1, s, t),                                     &\mathrm{if \ (FC7),\ (FC13), \ (FC16), \ or \ (FC17) \ holds;} \\
    \hat{L}(G_1, s, z) + \hat{L}(G_2, w, t),                &\mathrm{if \ (FC8) \ or \ (FC12) \ holds;}\\
    \max\{\hat{L}(G_{11}, s, t), \hat{L}(G_{12}, s, t)\},   &\mathrm{if \ (FC9), \ (FC14), \ or \ (FC15) \ holds;} \\
    \hat{L}(G_1, s, t) + k\times c,                         &\mathrm{if \ (FC10) \ holds;} \\
    t_x-s_x+1,                                              &\mathrm{if \ (FC11) \ holds;} \\
    \hat{L}(G_1,s,t)+1,                                     &\mathrm{if \ (FC18) \ holds;} \\
    mn-kl,                                                  &\mathrm{if \ (C1) \ holds.}\\
  \end{cases}$\\

Now, we show how to obtain a longest $(s,t)$-path for $C$-shaped supergrid graphs. Notice that if $(C(m,n; k,l; c,d), s, t)$ satisfies (C1), then, by Theorem \ref{HP-Theorem-Cshaped}, it contains a Hamiltonian $(s, t)$-path.

\begin{lem}\label{Lemma:long-csupergrid}
If $(C(m,n; k,l; c,d), s, t)$ satisfies one of the conditions $\mathrm{(FC7)}$--$\mathrm{(FC18)}$, then $\hat{L}(C(m,n; k,l; c,d), s, t) = \hat{U}(C(m,n; k,l; c,d), s,t)$.
\end{lem}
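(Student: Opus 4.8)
The plan is to prove the matching lower bound for each of the conditions (FC7)--(FC18), since the corresponding upper bounds $\hat{U}$ have already been established in Lemmas \ref{Lemma:FC7-FC8}, \ref{Lemma:FC9-FC10}, and \ref{Lemma:FC11-FC18}. For each condition I would exhibit an $(s,t)$-path of $C(m,n; k,l; c,d)$ whose length equals $\hat{U}(C(m,n; k,l; c,d), s, t)$; together with the upper bound this yields $\hat{L}=\hat{U}$. I would organize the argument according to the \emph{form} of the upper bound, since structurally identical bounds admit the same construction, and treat isomorphic cases by symmetry.

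The bounds of the pure form $\hat{L}(G_1, s, t)$, occurring in (FC7), (FC13), (FC16), and (FC17), are settled by a containment argument. In each case the subgraph $G_1$ (an $L$-shaped or rectangular supergrid graph carrying both $s$ and $t$) is a vertex-induced subgraph of $C(m,n; k,l; c,d)$, so by Theorem \ref{L-shaped_LongestPath} a longest $(s,t)$-path $P$ of $G_1$ exists with length $\hat{L}(G_1,s,t)$; since $P$ is also an $(s,t)$-path of $C(m,n; k,l; c,d)$, we obtain $\hat{L}(C(m,n; k,l; c,d),s,t)\geqslant\hat{L}(G_1,s,t)=\hat{U}$. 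The $\max$-type bounds (FC9), (FC14), and (FC15) are handled identically: here $\{s,t\}$ is a vertex cut splitting the graph into $G_{11}$ and $G_{12}$, each containing $s$ and $t$, and a longest $(s,t)$-path in whichever subgraph realizes the maximum is itself a path of $C(m,n; k,l; c,d)$ of the required length. The single forced path of (FC11) has length exactly $t_x-s_x+1$ and is realized directly.

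The additive bounds require an explicit merge. For (FC8) and (FC12) the cut isolates $G_1$ from $G_2$ through the adjacent pair $z\thicksim w$ fixed in Lemmas \ref{Lemma:FC7-FC8} and \ref{Lemma:FC11-FC18}; I would apply Theorem \ref{L-shaped_LongestPath} (or Lemma \ref{HamiltonianConnected-Rectangular}) to obtain a longest $(s,z)$-path $P_1$ of one part and a longest $(w,t)$-path $P_2$ of the other, and concatenate them as $P_1\Rightarrow P_2$; because $z\thicksim w$ this is a valid $(s,t)$-path of length $\hat{L}(G_1,s,z)+\hat{L}(G_2,w,t)$. For (FC10) I would take a longest $(s,t)$-path $P$ of $G_1=L(m,n; k,l+c)$, chosen via the canonical constructions of \cite{Hung17a, Keshavarz19a} to carry a boundary edge facing the detached block $G_2=R(k,c)$, build a Hamiltonian cycle of $G_2$ by Lemma \ref{HC-rectangular_supergrid_graphs} with a flat face toward $G_1$, and merge them through a pair of parallel edges by Statement (2) of Proposition \ref{Pro_Obs}; this absorbs all $k\times c$ vertices of $G_2$ and gives length $\hat{L}(G_1,s,t)+k\times c$. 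For (FC18) the same idea with Statement (3) of Proposition \ref{Pro_Obs} absorbs the single cut vertex $w$ into a longest $(s,t)$-path of $G_1$, adding exactly $1$.

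The main obstacle I expect lies in the additive cases, namely guaranteeing that the longest $(s,t)$-path of the larger subgraph $G_1$ can be chosen \emph{simultaneously} to be longest and to carry the boundary edge (for (FC10)) or to adjoin the cut vertex $w$ (for (FC18)) needed to invoke Proposition \ref{Pro_Obs} without losing a vertex. This forces me to track the canonical structure of the longest paths produced by Theorem \ref{L-shaped_LongestPath} --- in particular that they contain a boundary edge on the side facing the merge --- and to verify, case by case, that the chosen $z$, $w$ and the required parallel or adjoining edges always exist, so that the merged path loses no vertex and attains the full upper bound. By contrast, the pure $\hat{L}(G_1,\cdot)$ and $\max$ cases are immediate from subgraph containment and require no construction.
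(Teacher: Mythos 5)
Your plan coincides with the paper's own proof for nearly every case: the paper likewise organizes the lower-bound constructions by the form of the upper bound, uses the longest-path results for the rectangular and $L$-shaped pieces in (FC7), (FC13), (FC16), (FC17), concatenates a longest $(s,z)$-path and a longest $(w,t)$-path through $z\thicksim w$ for (FC8) and (FC12), takes the single forced path for (FC11), and merges a longest $(s,t)$-path of $G_1$ with a canonical Hamiltonian cycle of $G_2$ via Statement (2) of Proposition \ref{Pro_Obs} for (FC10). Your containment argument for the pure $\hat{L}(G_1,s,t)$ and $\max$-type bounds is legitimate and in fact cleaner than the paper's explicit constructions (the paper constructs paths because it is simultaneously assembling the algorithm behind Theorem \ref{LongPathC}); it also painlessly covers (FC15), where $G_{11}$ is $C$-shaped rather than $L$-shaped and the paper has to recurse into its own (FC18) case. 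For (FC10) you and the paper rely, at the same level of detail, on the claim that the cited algorithms can place a boundary edge of the longest path on the side facing $G_2$, so that is not where you diverge.

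The genuine gap is (FC18). Statement (3) of Proposition \ref{Pro_Obs} inserts $w$ into a path only if the path contains an edge that $w$ \emph{adjoins}, i.e.\ an edge both of whose endpoints are neighbors of $w$. Under (FC18) (say $c=1$, $k>1$, $a\geqslant 2$) the vertex $w=(a+1,1)$ has exactly two neighbors in $G_1=L(m,n;k,l+c)$, namely $(a,1)$ and $(a,2)$, since $(a+1,2)$ and $(a+2,2)$ lie in the removed rectangle and $(a+2,1)$ lies outside $G_1$; hence the only edge $w$ adjoins is $((a,1),(a,2))$. Your construction therefore needs a longest $(s,t)$-path of $G_1$ containing this one prescribed edge, and neither Theorem \ref{L-shaped_LongestPath} nor Theorem \ref{HamiltonianConnected-L-shaped} delivers that: the canonical paths they provide contain \emph{some} boundary edge on each boundary, but the boundary of $G_1$ facing $w$ is the whole segment on column $a$, rows $1,\dots,c+l$, so the guaranteed edge may be $((a,j),(a,j+1))$ with $j\geqslant 2$, which $w$ does not adjoin. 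Forcing a prescribed edge into a Hamiltonian $(s,t)$-path is exactly the kind of statement the paper had to prove separately, with exceptional pairs $\{s,t\}$, even for rectangles (Lemma \ref{HamiltonianConnected-Rectangular-wz_rectangle} with condition (F2), and Lemma \ref{HP-3rectangle-boundary_path}), and no $L$-shaped analogue exists in the paper. The paper's Case 6 avoids the issue altogether: it re-partitions $G_1\cup\{w\}$ into a rectangle $R_1$ and an $L$-shaped piece $R_2$ in which $w$ is an ordinary vertex, and then builds a Hamiltonian $(s,t)$-path of $R_1\cup R_2$ by the same case analysis as in Lemmas \ref{HP-Cshaped2} and \ref{HP-Cshaped3}. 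To repair your route you must either prove the missing edge-forcing lemma for $L$-shaped graphs (a nontrivial induction in the spirit of Lemma \ref{HP-3rectangle-boundary_path}, with its own exceptional cases) or switch to such a re-decomposition.
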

\begin{proof}
We prove this lemma by constructing a $(s, t)$-path $P$ such that its length equals to $\hat{U}(C(m,n; k,l; c,d), s,t)$. Consider the following cases:

Case 1: Condition (FC7), (FC13), (FC16), or (FC17) holds. Then, by Lemma \ref{Lemma:FC7-FC8} (resp. Lemma \ref{Lemma:FC11-FC18}), $\hat{U}(C(m,n; k,l; c,d),s,t)=\hat{L}(G_1, s, t)$, where $G_1= L(m,n-c; k,l)$ or $G_1 = L(c+1,m; 1,k)$ (resp. $G_1 = L(m,n; k,l+c)$ or $G_1 = L(n, m; l+d,k)$) (see Fig. \ref{fig:FC7-FC8}(a)--(b), Fig. \ref{fig:FC13-FC15}(a)--(b), and Fig. \ref{fig:FC16-FC18}(a)--(f)). Since $G_1$ is a $L$-shaped supergrid graph, by the algorithm of \cite{Keshavarz19a}, we can construct a longest path between $s$ and $t$ in $G_1$.


Case 2: Condition (FC8) or (FC12) holds. By Lemma \ref{Lemma:FC7-FC8} (resp. Lemma \ref{Lemma:FC11-FC18}), $\hat{U}(C(m,n; k,l; c,d), s, t)= \hat{L}(G_1, s, z) + \hat{L}(G_2, w, t)$ (see Fig. \ref{fig:FC7-FC8}(c)--(d) and Fig. \ref{fig:FC11-FC12}(c)--(d)), where $G_1 = L(m,n-c; k,l)$,  $G_2 = R(m, c)$ (resp. $G_1 = L(m,n; k,l+c)$ and $G_2 = R(k, c)$), and $z \thicksim w$. Then, $G_1$ and $G_2$ are $L$-shaped and rectangular supergrid graphs, respectively. First, by the algorithms \cite{Hung17a} and \cite{Keshavarz19a}, we can construct a longest $(s, z)$-path $P_2$ (resp. $P_1$) in $G_2$ (resp. $G_1$) and a longest $(w, t)$-path $P_1$ (resp. $P_2$) in $G_1$ (resp. $G_2$), respectively. Then, $P = P_2\Rightarrow P_1$ (resp. $P_1\Rightarrow P_2$) forms a longest $(s,t)$-path of $C(m,n; k,l; c,d)$. Fig. \ref{fig:FC7-FC8}(c)--(d) and Fig. \ref{fig:FC11-FC12}(c)--(d)) show the constructions of such a longest $(s,t)$-path.

Case 3: Condition (FC9), (FC14), or (FC15) holds. Assume that (FC9) holds. By Lemma \ref{Lemma:FC9-FC10}, $\hat{U}(C(m,n; k,l; c,d), s, t)$ $=$ $\max\{\hat{L}(G_{11}, s, t), \hat{L}(G_{12}, s, t)\}$, where $G_{11} = L(m,n'; k,l')$, $G_{12} = L(s_y,m; k',k)$, $n'=n-s_y+1$, $l'=n'-d$, and $k'=s_y-1$ (see Fig. \ref{fig:FC9-FC10}(a)--(b)). Since $G_{11}$ and $G_{12}$ are $L$-shaped supergrid graphs, by the algorithm of \cite{Keshavarz19a} we can construct a longest path between $s$ and $t$ in $G_{11}$ and $G_{12}$. Fig. \ref{fig:FC9-FC10}(a)--(b) depict such a construction. For conditions (FC14) and (FC15), consider Fig. \ref{fig:FC13-FC15}(c)--(f). Then, $G_{12}$ may be a rectangle. By the algorithm of \cite{Hung17a} we can construct a longest path between $s$ and $t$ in $G_{12}$ if it is a rectangle. In addition, $G_{11}$ is a $C$-shaped supergrid graph in (FC15) (see Fig. \ref{fig:FC13-FC15}(e). Then, $(G_{11}, s, t)$ satisfies condition (FC18). And its longest $(s, t)$-path can be computed by the algorithm in \cite{Keshavarz19a}. Its construction is shown in Case 6 and Fig. \ref{fig:FC13-FC15}(e) shows such a construction of a longest $(s, t)$-path.

Case 4: Condition (FC10) holds. By Lemma \ref{Lemma:FC9-FC10}, $\hat{U}(C(m,n; k,l; c,d), s, t) = \hat{L}(G_1, s, t)+|G_2|=\hat{L}(G_1, s, t)+ k\times c$, where $G_1 = L(m,n; k,l+c)$ and $G_2 = R(k, c)$. Consider Fig. \ref{fig:FC9-FC10}(c)--(d). Then, $G_1$ is a $L$-shaped supergrid graph and $G_2$ is a rectangle. By the algorithm of \cite{Keshavarz19a}, we can construct a longest path $P_1$ between $s$ and $t$ in $G_1$ that contains edge $e_1$ locating to face $G_2$. By Lemma \ref{HC-rectangular_supergrid_graphs}, $G_2$ contains canonical Hamiltonian cycle $HC_2$ such that its one flat face is placed to face $G_1$. Thus, by Statement (2) of Proposition \ref{Pro_Obs}, $P_1$ and $HC_2$ can be combined into a longest $(s, t)$-path of $C(m,n; k,l; c,d)$.

Case 5: Condition (FC11) holds. By Lemma \ref{Lemma:FC11-FC18}, $\hat{U}(C(m,n; k,l; c,d), s, t) = t_x-s_x+1$. Obviously, the lemma holds for the single possible path between $s$ and $t$ (see Fig. \ref{fig:FC11-FC12}(a)--(b)).

Case 6: Condition (FC18) holds. By Lemma \ref{Lemma:FC11-FC18}, $\hat{U}(C(m,n; k,l; c,d), s, t) = \hat{L}(G_1, s, t)+1$, where $G_1 = L(m,n; k,l+c)$. We make a vertical and horizontal separations on $C(m,n; k,l; c,d)$ to obtain two disjoint supergrid subgraphs $R_2=L(n_2, m_2; 1,1)$ and $R_1 = R(m, n-n_2)$, where $n_2=c+l$ and $m_2=a+1$ (see Fig. \ref{fig:FC18}(a)). Note that $(C(m,n; k,l; c,d), s, t)$ does not satisfy condition (F1) in this case. Depending on the positions of $s$ and $t$, there are the following three subcases:

\begin{figure}[h]
\centering
\includegraphics[scale=1]{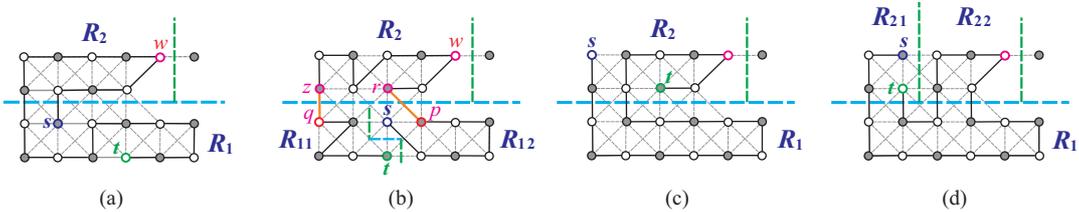}
\caption[]{A longest $(s,t)$-path in $C(m,n; k,l; c,d)$, where (FC18) holds.}
\label{fig:FC18}
\end{figure}

\hspace{0.5cm}Case 6.1: $s,t\in R_1$. A longest $(s, t)$-path of $(C(m, n; k, l; c,d)$ can be constructed by similar arguments in proving Case 1 of Lemma \ref{HP-Cshaped3} (see Fig. \ref{fig:FC18}(a)--(b)). Then, we can construct a Hamiltonian $(s, t)$-path of $R_1\cup R_2$. Fig. \ref{fig:FC18}(a) and Fig. \ref{fig:FC18}(b) depict such constructions. The size of constructed Hamiltonian $(s, t)$-path equals to $\hat{L}(L(m,n; k,l+c))+1=|V(R_1\cup R_2)|$, and hence it is the longest $(s, t)$-path of $(C(m, n; k, l; c,d)$.

\hspace{0.5cm}Case 6.2: $s,t\in R_2$. A longest $(s, t)$-path of $(C(m,n; k,l; c,d)$ can be constructed by similar arguments in proving Case 2 of Lemma \ref{HP-Cshaped3} (see Fig. \ref{fig:FC18}(c)--(d)). Depending on whether $\{s, t\}$ is a vertex cut of $R_2$, we consider Fig. \ref{fig:FC18}(c) and Fig. \ref{fig:FC18}(d). Then, we can construct a Hamiltonian $(s, t)$-path of $R_1\cup R_2$, as shown in Fig. \ref{fig:FC18}(a) and Fig. \ref{fig:FC18}(b). The size of constructed Hamiltonian $(s, t)$-path equals to $\hat{L}(L(m,n; k,l+c))+1=|V(R_1\cup R_2)|$, and hence it is the longest $(s, t)$-path of $(C(m, n; k, l; c,d)$.

\hspace{0.5cm}Case 6.3: $(s\in R_1$ and $t\in R_2)$ or $(s\in R_2$ and $t\in R_1)$. Without loos of generality, assume that $s\in R_1$ and $t\in R_2$. A longest $(s, t)$-path of $(C(m,n; k,l; c,d)$ can be constructed by similar arguments in proving Case 2.2 of Lemma \ref{HP-Cshaped2} and Case 3 of Lemma \ref{HP-Cshaped3}.
\end{proof}

It follows from Theorem \ref{HP-Theorem-Cshaped} and Lemmas \ref{Lemma:FC7-FC8}--\ref{Lemma:long-csupergrid} that the following theorem concludes the result.

\begin{thm}\label{LongPathC}
Given a $C$-shaped supergrid $C(m,n; k,l; c,d)$ and two distinct vertices $s$ and $t$ in $C(m,n; k,l; c,d)$, a longest $(s, t)$-path can be found in $O(mn)$-linear time.
\end{thm}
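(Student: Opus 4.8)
The plan is to convert the structural classification already assembled in Sections~\ref{Sec_forbidden-conditions} and~\ref{Sec_Algorithm} into a constructive algorithm and then bound its running time. First I would note that every instance $(C(m,n;k,l;c,d),s,t)$ falls into exactly one of the classes $\mathrm{(C1)}$, $\mathrm{(FC7)}$, $\ldots$, $\mathrm{(FC18)}$, and deciding which class applies requires only inspecting the six shape parameters together with the coordinates $s_x,s_y,t_x,t_y$; this is a finite case distinction that costs $O(1)$ arithmetic once the graph is given (or $O(mn)$ if one insists on reading the whole input). Thus the skeleton of the algorithm is: classify the instance, then dispatch to the appropriate construction.

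For the generic class $\mathrm{(C1)}$, I would invoke Theorem~\ref{HP-Theorem-Cshaped}, which guarantees a Hamiltonian $(s,t)$-path, and realize it through the constructive arguments of Lemmas~\ref{HP-Cshaped1}, \ref{HP-Cshaped2}, and~\ref{HP-Cshaped3}. Each of those proofs applies a constant number of separation operations to split $C(m,n;k,l;c,d)$ into rectangular or $L$-shaped pieces, produces a Hamiltonian $(s,t)$-path (or a suitable Hamiltonian cycle) inside each piece via Lemma~\ref{HamiltonianConnected-Rectangular}, Theorem~\ref{HamiltonianConnected-L-shaped}, Theorem~\ref{HC-Lshaped}, or Lemma~\ref{HC-rectangular_supergrid_graphs}, and then welds them using the parallel-edge and vertex-absorption rules of Proposition~\ref{Pro_Obs}. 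By Theorem~\ref{L-shaped_LongestPath} each rectangular or $L$-shaped subproblem is solved in time linear in its own size, the sizes sum to at most $mn$, and the number of concatenations is a fixed constant, so the whole construction runs in $O(mn)$ time.

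For every forbidden class $\mathrm{(FC7)}$ through $\mathrm{(FC18)}$, I would appeal to Lemma~\ref{Lemma:long-csupergrid}, which proves $\hat{L}=\hat{U}$ and exhibits a longest $(s,t)$-path attaining the upper bound computed in Lemmas~\ref{Lemma:FC7-FC8}, \ref{Lemma:FC9-FC10}, and~\ref{Lemma:FC11-FC18}. In each such case the optimal path is built by at most a constant number of separations into one or two smaller rectangular, $L$-shaped, or $C$-shaped subgraphs, on each of which a longest path (or canonical Hamiltonian cycle) is computed by the linear-time routines of~\cite{Hung17a, Keshavarz19a} as packaged in Theorem~\ref{L-shaped_LongestPath}, and the pieces are again joined by Proposition~\ref{Pro_Obs}. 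Summing the linear costs over a bounded number of disjoint subgraphs whose total size is $O(mn)$ yields the desired $O(mn)$ bound.

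The step I expect to be the main obstacle is the self-referential case $\mathrm{(FC15)}$, where the larger component $G_{11}$ is itself a $C$-shaped graph that satisfies $\mathrm{(FC18)}$: here I must argue that the recursion does not cascade. The key observation is that $G_{11}$ strictly shrinks (it drops the tail $R(m-m'+1,d)$) and, once inside $\mathrm{(FC18)}$, its longest path is obtained directly through separations into rectangular and $L$-shaped pieces handled by Theorem~\ref{L-shaped_LongestPath}, so the depth of $C$-shaped self-reference is at most one. Consequently the recursion terminates immediately, the total work over all generated subproblems is still bounded by $O(mn)$, and correctness follows from the matching of $\hat{L}$ and $\hat{U}$ established in Lemma~\ref{Lemma:long-csupergrid}. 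With exhaustiveness of the classification and this bounded-recursion argument in hand, combining them with Theorem~\ref{HP-Theorem-Cshaped} gives both correctness and the claimed linear running time.
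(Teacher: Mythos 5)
Your proposal is correct and follows essentially the same route as the paper: the paper's proof of Theorem~\ref{LongPathC} is exactly the combination of the exhaustive case classification (C1), (FC7)--(FC18), the dispatch to the constructive Lemmas~\ref{HP-Cshaped1}--\ref{HP-Cshaped3} and \ref{Lemma:long-csupergrid} as formalized in Algorithm~\ref{TheHamiltonianPathAlgm}, and the observation that each case uses a bounded number of separations into rectangular, $L$-shaped, or $C$-shaped pieces solved by the linear-time routines of \cite{Hung17a, Keshavarz19a}. Your bounded-recursion remark for the (FC15)$\rightarrow$(FC18) case matches how the paper handles it inside Case~3 and Case~6 of Lemma~\ref{Lemma:long-csupergrid}, so nothing is missing.
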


The linear-time algorithm is formally presented as Algorithm \ref{TheHamiltonianPathAlgm}.

\begin{algorithm}[tb]
  \SetCommentSty{small}
  \LinesNumbered
  \SetNlSty{textmd}{}{.}

    \KwIn{A $C$-shaped supergrid graph $C(m,n;k,l;c,d)$ with $mn\geqslant 2$, and two distinct vertices $s$ and $t$ in $C(m,n;k,l;c,d)$.}
    \KwOut{The longest $(s, t)$-path.}

\textbf{if} $a(=m-k)=1$ \textbf{then} \textbf{output} $HP(C(m, n;k,l;c,d), s, t)$ constructed from Lemma \ref{HP-Cshaped1}; // $(C(m, n;k,l;c,d), s, t)$ does not satisfy the forbidden conditions (F1), (F3), (F7), (F8), and (F9);\\
\textbf{if} $a(=m-k)>1$ and $c=d=1$ \textbf{then} \textbf{output} $HP(C(m, n;k,l;c,d), s, t)$ constructed from Lemma \ref{HP-Cshaped2}; // $(C(m, n;k,l;c,d), s, t)$ does not satisfy the forbidden conditions (F1), (F3), (F7), and (F8);\\
\textbf{if} $a(=m-k)>1$ and $c>1$ or $d>1$ \textbf{then} \textbf{output} $HP(C(m, n;k,l;c,d), s, t)$ constructed from Lemma \ref{HP-Cshaped3}; // $(C(m, n;k,l;c,d), s, t)$ does not satisfy the forbidden conditions (F1), (F3), (F7), and (F8);\\
\textbf{if} $(C(m, n;k,l;c,d), s, t)$ satisfies one of the forbidden conditions (F1), (F3), (F7), (F8), and (F9),  \textbf{then} \textbf{output} the longest $(s, t)$-path based on Lemma \ref{Lemma:long-csupergrid};\\
\caption{The longest $(s, t)$-path algorithm}
\label{TheHamiltonianPathAlgm}
\end{algorithm}

\section{Concluding remarks}\label{Sec_Conclusion}
Based on the Hamiltonicity and Hamiltonian connectivity of rectangular supergrid graphs, we first discover some Hamiltonian connected properties of rectangular supergrid graphs. Using the Hamiltonicity and Hamiltonian connectivity of rectangular and $L$-shaped supergrid graphs, we then prove $C$-shaped supergrid graphs to be Hamiltonian connected except few conditions. On the other hand, the Hamiltonian cycle problem on solid grid graphs was known to be polynomial solvable. However, it remains open for solid supergrid graphs in which there exists no hole. We leave it to interesting readers.

\section*{Acknowledgments}
This work is partly supported by the Ministry of Science and Technology, Taiwan under grant no. MOST 105-2221-E-324-010-MY3.

\end{document}